\algnewcommand\algorithmicinput{\textbf{INPUT:}}
\algnewcommand\INPUT{\item[\algorithmicinput]}
\algnewcommand\algorithmicoutput{\textbf{OUTPUT:}}
\algnewcommand\OUTPUT{\item[\algorithmicoutput]}
\theoremstyle{plain}
\newtheorem{theorem}{Theorem}
\newtheorem{lemma}{Lemma}
\def \R{\mathbb R}
\newcommand{\sset}[1]{\left\{ #1\right\}}
\newcommand{\ssets}[1]{\{ #1\}}
\newcommand{\fwh}[1]{\; \left| \; #1 \right.}
\newcommand{\fwhs}[1]{\; | \; #1 }
\newcommand{\card}[1]{\left| #1 \right|}
\DeclareMathOperator*{\argmin}{argmin}
\newcommand{\union}{\cup}
\newcommand{\map}{\longrightarrow}
\newcommand{\ifif}{\Longleftrightarrow}
\newcommand{\biglor}{\bigvee} 
\newcommand{\vecc}[1]{\ensuremath{\mathbf{#1}}}
\newcommand{\scsum}{C}
\newcommand{\poa}{\mathrm{PoA}}
\newcommand{\grat}[1]{\ensuremath{\Phi_{#1}}} 
\newcommand{\lambertW}[1]{\ensuremath{\mathcal{W}\left(#1\right)}} 
\newcommand{\lambertWnoarg}{\ensuremath{\mathcal{W}}}
\newcommand{\xe}[2]{x_{#1, \, e}(#2)}
\newcommand{\p}[3]{\varPhi_{#1}^{#3}(#2)}
\newcommand{\ce}[1]{c_e(#1)}
\newcommand{\pe}[1]{\phi_e(#1)}
\newcommand{\cu}[1]{C_u(#1)}
\newcommand{\cuj}[1]{C_{u_j}(#1)}
\newcommand{\pbar}[3]{\bar{\varPhi}_{#1}^{#3}(#2)}
\newcommand{\br}[1]{\mathcal{BR}_u(#1)}
\title{Computing Approximate Equilibria in Weighted Congestion Games via
Best-Responses\thanks{Supported by the Alexander von Humboldt Foundation with funds
from the German Federal Ministry of Education and Research (BMBF).\newline
\indent An abstract of this paper appeared in SAGT'20~\cite{gns2020_sagt}.
}}
\author{
		Yiannis Giannakopoulos\thanks{TU Munich. 
		Emails: 
		{\tt
		\{\href{mailto:yiannis.giannakopoulos@tum.de}{\nolinkurl{yiannis.giannakopoulos}},
		\href{mailto:andreas.s.schulz@tum.de}{\nolinkurl{andreas.s.schulz}}\}@tum.de}. 
		Associated researchers with the Research Training Group GRK
2201 ``Advanced Optimization in a Networked Economy'', funded by the German Research
Foundation (DFG).
		}
	\and
		Georgy Noarov\thanks{University of Pennsylvania. Email: 
			\href{mailto:gnoarov@seas.upenn.edu}{\nolinkurl{gnoarov@seas.upenn.edu}}. A significant part of this work was done while the second author was a visiting student at the Operations Research group of TU Munich.}
	\and 
		Andreas S.\ Schulz\footnotemark[2]
}
\date{November 24, 2020}
\begin{document}
\maketitle
\begin{abstract}
We present a deterministic polynomial-time algorithm for computing
$d^{d+o(d)}$-approximate (pure) Nash equilibria in (proportional sharing) weighted
congestion games with polynomial cost functions of degree at most~$d$. This is an
exponential improvement of the approximation factor with respect to the previously
best deterministic algorithm. An appealing additional feature of the algorithm is
that it only uses best-improvement steps in the actual game, as opposed to the
previously best algorithms, that first had to transform the game itself. Our
algorithm is an adaptation of the seminal algorithm by Caragiannis et
al.~\cite{Caragiannis2011,Caragiannis2015a}, but we utilize an approximate potential
function directly on the original game instead of an exact one on a modified game.

A critical component of our analysis, which is of independent interest, is the
derivation of a novel bound of $[d/\lambertW{d/\rho}]^{d+1}$ for the Price of
Anarchy (PoA) of $\rho$-approximate equilibria in weighted congestion games, where
$\lambertWnoarg$ is the Lambert-W function. More specifically, we show that this PoA
is \emph{exactly} equal to $\grat{d,\rho}^{d+1}$, where $\grat{d,\rho}$ is the
unique positive solution of the equation $\rho (x+1)^d=x^{d+1}$. Our upper bound is
derived via a smoothness-like argument, and thus holds even for mixed Nash and
correlated equilibria, while our lower bound is simple enough to apply even to
singleton congestion games.
\end{abstract}

\section{Introduction}
\emph{Congestion games} constitute one of the most important and well-studied class
of games in the field of \emph{algorithmic game
theory}~\cite{2007a,Roughgarden2005,Roughgarden2016}. These games are tailored to
model settings where selfish players compete over sets of common
resources.  Prominent examples include traffic routing in networks
and load balancing games (see, e.g., \cite[Chapters~18 and 20]{2007a}).
In these games' most general form, known as \emph{weighted} congestion games, each player has her own 
(positive) weight and the cost of a resource is a nondecreasing function of the
total weight of players using it.
An important special case is that of unweighted games, where all players have the
same weight. The cost of a resource then depends only on the number of players using
it.

Players are selfish and each one chooses a set of resources that 
minimizes her own cost. On the other hand, a central authority would aim at minimizing \emph{social cost}, that is, the sum of players' costs. It is well known that
these two objectives do not, in general, align: due to the selfish behaviour of 
players, the game may reach a stable state (i.e., a Nash
equilibrium~\cite{Karlin2017,Nash1951a}) that is suboptimal in terms of social
cost. This gap is formally captured by the fundamental notion of \emph{Price of
Anarchy} (PoA)~\cite{KP99}, defined as the ratio between the social cost of the worst equilibrium
and that of an optimal solution enforced by a centralized authority.
  
From the seminal work of Rosenthal~\cite{Rosenthal1973a}, we know that unweighted congestion
games always have \emph{(pure Nash) equilibria}\footnote{In this paper we focus
exclusively on \emph{pure} Nash equilibria; this is standard in the congestion games
literature.}. This is a direct consequence of the fact that they are \emph{potential
games}~\cite{Monderer:1996sp}. However, finding such a stable state is, in general,
computationally hard~\cite{Ackermann2008,Fabrikant2004a}. An important question
then is whether one can efficiently compute \emph{approximate}
equilibria~\cite{Caragiannis2011,Skopalik2008}. These are states of the game where
no player can unilaterally deviate and improve her cost by more than a factor of
$\rho \geq 1$; (exact) equilibria correspond to the special case where $\rho=1$.

The situation becomes even more challenging in the general setting of weighted
congestion games~\cite{Caragiannis2015a}, where exact equilibria may not even
exist~\cite{Goemans2005}. Consequently, weighted congestion games do not generally
admit a potential function. Thus, in this setting one needs more sophisticated
approaches and approximation tools to establish computability of approximate
equilibria. This is precisely the problem we study in the present paper: the
efficient computation of $\rho$-approximate equilibria in weighted congestion games,
with $\rho$ as small as possible. We focus on resource cost functions that are
polynomials (with nonnegative coefficients), parametrized by their degree $d$; this
is a common assumption in the literature of congestion games (see, e.g.,
\cite{Roughgarden2016,Aland2011,Christodoulou2015,cggs2018,Goemans2005}).

\subsection{Related Work} 
The potential function approach has long become a central tool for obtaining results
about existence and computability of approximate equilibria in weighted congestion
games. The concept of a potential function for unweighted congestion games was
proposed by Rosenthal~\cite{Rosenthal1973a}, who used it to prove
existence of pure Nash equilibria in such games. Later, Monderer and
Shapley~\cite{Monderer:1996sp} formally introduced and studied extensively the class
of potential games. As it turns out, \emph{weighted} congestion games do not admit
a potential function in general, even for ``well-behaved''
instances~\cite{Libman2001,Goemans2005,Fotakis2005a,Harks2012a}. Exceptions include
games with linear and exponential resource cost functions~\cite{Harks2012a}.
However, recently, Christodoulou et al.~\cite{cggs2018} showed that
polynomial weighted congestion games have an approximate analogue of a potential
function, which they called \emph{Faulhaber's} potential. An exact potential
function decreases whenever a player improves her cost, and even by the same amount.
In contrast, the approximate potential of~\cite{cggs2018} is only
guaranteed to decrease when a player deviates and improves her cost by a factor of
at least $\alpha$. Factor $\alpha$ is greater than $1$, but at most $d + 1$, where
$d$ is the degree of the game.  We will use this approximate potential function in
the analysis of our algorithm. Other approximate potential functions have been
successfully used before to establish the existence of approximate equilibria in
congestion games; see~\cite{Chen2008,Christodoulou2011a,Hansknecht2014}.
The current state-of-the-art is that congestion games of degree $d$ always admit
$d$-approximate equilibria, but there exist examples of such games that do not have
$\tilde{\varTheta}(\sqrt{d})$--approximate equilibria; these upper and lower bounds
are from the very recent work of Caragiannis and Fanelli~\cite{Caragiannis:2019aa}
and Christodoulou et al.~\cite{cggpw2020}, respectively.

On the algorithmic side, there have been many negative results concerning exact
equilibria in various classes of congestion games. Fabrikant et
al.~\cite{Fabrikant2004a} showed that even in the unweighted case of a network
congestion game, computing equilibria is PLS-complete. Dunkel and
Schulz~\cite{Dunkel2008} showed that it is strongly NP-complete to determine whether
an equilibrium exists in a given weighted congestion game.\footnote{Christodoulou et al.~\cite{cggpw2020} showed that this remains hard even for
approximate equilibria and polynomial congestion games.} As a further negative
result, Ackermann et al.~\cite{Ackermann2008} proved that it is PLS-complete to
compute equilibria even in the linear unweighted case. These hardness results
motivated the search for polynomial-time methods to compute \emph{approximate} equilibria.
In general, this remains a computationally hard problem; Skopalik and
V\"ocking~\cite{Skopalik2008} showed that for any polynomially computable $\rho$,
finding a $\rho$-approximate equilibrium in a congestion game is a PLS-complete
problem. The focus of research thus shifted towards searching for positive
algorithmic results for $\rho$-approximate equilibria of various special classes of
weighted congestion games. The first such result was obtained by Chien and
Sinclair~\cite{Chien2011}, who showed convergence of the best-response dynamics, in
symmetric unweighted congestion games with ``well-behaved'' cost functions, to $(1 +
\varepsilon)$-approximate equilibria in polynomially many steps with respect to
$\varepsilon^{-1}$ and the number of players.

The next significant positive result of this kind, and of great importance for our work, was obtained for approximate
equilibria in polynomial unweighted congestion games by Caragiannis et al.~\cite{Caragiannis2011}.
They designed a seminal deterministic algorithm that efficiently computes $d^{O(d)}$-approximate equilibria in such
games. Subsequently, Caragiannis et al.~\cite{Caragiannis2015a} extended this algorithm to handle the weighted
case, achieving an approximation factor of $d^{2d + o(d)}$. 
In the present paper, we reduce this
factor to $d^{d + o(d)}$. The algorithm in \cite{Caragiannis2015a} first transforms
the original game into an \emph{approximating game} (called $\Psi$-game) defined over the same
players and states. 
Then it finds and returns a state which is a $d^{d + o(d)}$-approximate equilibrium
of this new game. Caragiannis et al.~\cite{Caragiannis2015a} show that when
translated back to the original game, the approximation guarantee can only
deteriorate by a factor of $d!$, giving their $d^{2d + o(d)}$-approximation result.

It is important to note here that the sequence of moves chosen by Algorithm~1
in~\cite{Caragiannis2015a} need not be a ``real'' best-response sequence when
projected back into the original game. As a matter of fact, it may contain moves that increase the deviating
player's individual cost. 
In an attempt to address this,
Caragiannis et al.~\cite{Caragiannis2015a} themselves present also a modification of Algorithm~1. Their Algorithm~2~\cite[Section~5]{Caragiannis2015a} actually runs in the original game but unfortunately offers a significantly worse approximation guarantee of $d^{O(d^2)}$.

Very relevant to ours is also the work of Feldotto et al.~\cite{Feldotto2017} who
describe a \emph{randomized} variant of \cite{Caragiannis2015a}'s algorithm which is
able to compute $d^{d + o(d)}$-approximate equilibria in weighted congestion games
(but only with high probability); this is of the same order as the approximation
ratio of our \emph{deterministic} algorithm.\footnote{This is not a mere
coincidence: both in~\cite{Feldotto2017} and our paper (but, as a matter of fact,
in~\cite{Caragiannis2015a} as well), it turns out that the ``bottleneck'' in the
approximation ratio bound is essentially given by PoA-style bounds (see, e.g., our
``$\rho$-PoA'' bound in~\cref{th:poa_upper} and the``$\rho$-stretch'' notion
in~\cite{Feldotto2017,Caragiannis2015a}) which are roughly of the same $d^d$ order.}
Similar to \cite{Caragiannis2015a} though, their algorithm does not perform actual
best-improvement moves in the original game: it runs in a modified instance where the Shapley value rule of~\cite{Kollias2015} is used to share the total cost of a
resource among the players occupying it. Under this cost sharing rule, however, it is computationally hard to identify players'
costs. Instead, one has to
estimate them using a sampling approach, which is exactly the source of
randomization in \cite{Feldotto2017}'s algorithm.
Another subtle limitation of the algorithm in~\cite{Feldotto2017} is that its
running time is polynomial in the size of the strategy sets. Although this is
absolutely fine when the input game is given explicitly, it can have serious
implications for games that are \emph{succinctly representable} but have
exponentially many strategies; an important canonical example are network congestion
games.

In another paper, Feldotto et al.~\cite{Feldotto2014} explored
PoA-like bounds on the potential function in unweighted games. This enabled them to
bound from above the approximation factor that the algorithmic framework of
\cite{Caragiannis2011} yields when applied to unweighted games with general cost
functions.

The study of the Price of Anarchy ($\poa$) was initiated by Koutsoupias and Papadimitriou~\cite{KP99}. One of the
first significant results concerning tight bounds on the $\poa$ of atomic congestion
games was obtained by Awerbuch et al.~\cite{Awerbuch2013} and Christodoulou and Koutsoupias~\cite{CK05a}. They proved the tight bound of $5/2$ on the
exact $\poa$ of linear unweighted congestion games. In the next few years, several
$\poa$ results were obtained. For example, Gairing and Schoppmann~\cite{Gairing2007} provided various
upper and lower bounds for the exact $\poa$ of
singleton\footnote{See~\cref{foot:def_singleton} for a formal definition.}
unweighted congestion games. Subsequently, Aland et al.~\cite{Aland2011} introduced a systematic
approach to upper-bounding the exact $\poa$ of polynomial weighted congestion games,
which was later extended to general classes of cost functions and named
\emph{smoothness framework} in \cite{Bhawalkar2014a,Roughgarden2015}.
Aland et al.~\cite{Aland2011} gave the tight bound of $\grat{d}^{d+1}$ on the $\poa$ of exact
pure NE in polynomial weighted games, where $\grat{d}$ is the unique root of the
equation $(x + 1)^{d} = x^{d+1}$.
Based on the same technique, Christodoulou et al.~\cite{Christodoulou2011a} provided a
tight bound on the PoA of $\rho$-approximate equilibria in unweighted congestion
games. It turned out to be equal to $\frac{\rho((z + 1)^{2d+1} -
z^{d+1}(z+2)^{d})}{(z+1)^{d+1} - z^{d+1} - \rho((z+2)^{d} - (z+1)^{d})}$, where $z$
is the maximum integer that satisfies $\frac{z^{d+1}}{(z+1)^{d}} < \rho$. In our
notation, this is equivalent to $z = \left \lfloor \grat{d, \rho} \right\rfloor $.  

Since the development of the smoothness method, other approaches to finding tight
bounds on the $\poa$ have been investigated. Recently, Bil\`o~\cite{Bilo2017} was able to
rederive, through the use of a primal-dual framework, the upper bound on the $\poa$
of linear unweighted games from \cite{CK05a}.  He also provided a simplified lower
bound instance. Furthermore, he was able to show the upper bound of
$\left(\frac{\rho + \sqrt{\rho^2 + 4\rho}}{2} \right)^{2}$ on the $\poa$ of
$\rho$-approximate equilibria for the special case of linear weighted games. It
turns out to be equal to $\grat{1, \rho}^2$, the special case for $d = 1$ of the
general tight bound $\grat{d, \rho}^{d+1}$ that we present in this paper. Moreover, he
provided matching instances with $\poa$ equal to $\grat{1, \rho}^2$, for $\rho$ in a
certain subset of $[1, \infty)$.

\subsection{Our Results and Techniques} \label{subsec:ourres}

We study approximate (pure Nash) equilibria in polynomial weighted congestion games of
degree~$d$. Our main result is a polynomial-time deterministic algorithm for computing $d^{d +
o(d)}$-approximate equilibria in such games.
Our algorithm runs in polynomial time in the description of the game, even for
succinctly representable games with exponentially large strategy spaces; in
particular, it is applicable to \emph{network} congestion games.\footnote{We want to
note here that the algorithms of Caragiannis et al.~\cite{Caragiannis2015a} can also
handle network games without computational issues. However, as we already mentioned
above, that is not the case for the algorithm in~\cite{Feldotto2017}.}
This result improves upon the $d^{2d+o(d)}$-approximation factor of the seminal algorithm of Caragiannis et al.~\cite{Caragiannis2015a,Caragiannis2011}. Interestingly, our algorithm can also be readily used to \emph{deterministically}
compute $d^{d + o(d)}$-approximate equilibria in weighted congestion games under
\emph{Shapley cost sharing}; this is of the same order as the \emph{randomized}
$d^{d + o(d)}$ guarantee of Feldotto et al.~\cite{Feldotto2017}.\footnote{This is an
immediate consequence of our deterministic $d^{d + o(d)}$ guarantee for the
proportional sharing setting studied here, paired with Lemma~8 in~\cite{Feldotto2017}.}

Our algorithm, as well as the outline of its analysis, is clearly based on the ideas
in \cite{Caragiannis2015a}. However, there is a fundamental difference between our
approach and previous ones. Our algorithm builds a polynomially-long sequence of
best-response moves (leading from any state to an approximate equilibrium) \emph{in the actual game}; we then utilize, in the absence of an
exact potential function for the game, an \emph{approximate potential} introduced
by Christodoulou et al.~\cite{cggs2018} in order to analyse the runtime and
approximation guarantee.
By contrast, the original $d^{2d+o(d)}$-approximation algorithm of Caragiannis et al.~\cite{Caragiannis2015a}, as well as the randomized $d^{d+o(d)}$-approximation algorithm of Feldotto et al.~\cite{Feldotto2017}, first transforms the input game 
into an \emph{exact-potential} auxiliary game and performs best-responses \emph{in that modified game}; the outcome is then projected back into the original game, at the expense of a certain increase in the approximation factor.\footnote{This increase is by a factor of $d!$ for Algorithm~1 in \cite{Caragiannis2015a}, and $O(d^2)$ for \cite{Feldotto2017}.}
A noteworthy exception is Algorithm~2
of~\cite[Section~5]{Caragiannis2015a}, which also creates a best-response sequence
in the original game; this, however, is achieved at the expense of a significant deterioration of the
approximation guarantee of the computed equilibrium, which becomes
$d^{O(d^2)}$. Thus, when compared to Algorithm~2 of~\cite{Caragiannis2015a}, the improvement of the approximation factor provided by our $d^{d+o(d)}$-approximate algorithm is even more significant.

We believe that our approach offers a number of advantages compared to the
previously best algorithms (namely, Algorithm~1 of~\cite{Caragiannis2015a} and the
algorithm of~\cite{Feldotto2017}), at the cost of our proofs being arguably more
involved due to the use of an approximate potential to the main game, instead of an
exact potential to an auxiliary game. First, similarly to Algorithm~2 of~\cite{Caragiannis2015a}, since our algorithm runs on the actual game, it can be
interpreted as a more natural \emph{learning} process via ``real'' Nash dynamics.
An additional implication from a computational complexity perspective is that our algorithm can be applied even on games for which we are only given a best-response \emph{oracle}.
Moreover, since also our analysis is performed in a direct way on the original game (in contrast to the analyses of Algorithms~1 and 2 of~\cite{Caragiannis2015a} and the analysis of~\cite{Feldotto2017}, each of which considers an auxiliary game), we believe that it provides a more transparent understanding of the inner workings of the original algorithmic paradigm of Caragiannis et al.~\cite{Caragiannis2015a}. 

As a necessary tool for proving the approximation guarantee of the algorithm and as
a result of independent interest, we obtain a tight bound on the PoA of
$\rho$-approximate equilibria, denoted by $\poa_d(\rho)$, of polynomial weighted
congestion games of degree $d$, for any $\rho \geq 1$ and degree $d \geq 1$. It
turns out to be equal to $\grat{d, \rho}^{d+1}$, where $\grat{d, \rho}$ is the
unique positive solution of the equation $\rho (x+1)^d=x^{d+1}$. This bound
generalizes the following results: the tight bound of $\grat{d}^{d+1}$ on the $\poa$
(of exact equilibria) of weighted congestion games \cite{Aland2011}; the tight bound
on the $\rho$-approximate $\poa$ of \emph{unweighted} congestion games
\cite{Christodoulou2011a}; and the upper bound on the $\rho$-approximate $\poa$ of
\emph{linear} weighted congestion games \cite{Bilo2017}.
Our matching lower bound proof extends an example from \cite{Gairing2007} that
bounds the $\poa$ of singleton unweighted congestion games. As such, our lower bound
is easily verified to hold for singleton and network weighted congestion
games.\footnote{The fact that the worst-case PoA can be realized at such simple
singleton games should come as no surprise, due to the work
of Bil\`o and Vinci~\cite[Theorem~1]{Bilo2017a}. Our contribution here lies in determining the
actual value of the PoA.}
To prove the upper bound, we essentially utilize the \emph{smoothness method}
developed in \cite{Aland2011,Bhawalkar2014a,Roughgarden2015}. The smoothness
approach automatically extends the validity of our tight $\poa$ bound from pure Nash
to mixed Nash and correlated equilibria as well
\cite{Bhawalkar2014a,Roughgarden2015}.

One further contribution is an analytic upper bound $\poa_d(\rho) \leq
\left[\frac{d}{\lambertW{d/\rho}}\right]^{d+1}$, involving the Lambert-W function. 
This bound adds to the understanding of the different asymptotic behaviour of
$\poa_d(\rho)$ with respect to each of the two parameters, $d$ and $\rho$, and plays
an important role in deriving the desired approximation factor of $d^{d+o(d)}$ in
the analysis of the main algorithm of our paper. It is interesting to note here that
this bound also generalizes, in a smooth way with respect to $\rho$, a similar
result presented in~\cite{cggs2018} for the special case of exact
equilibria (i.e., $\rho=1$).

All proofs omitted from the main text can be found in the Appendix.

\section{Model and Notation}
\label{sec:notation}
We denote by $\R$ and $\R_{\geq 0}$ the set of real and nonnegative real numbers, respectively, and by $\mathfrak{P}_d$ the class of polynomials of degree at most $d$ with
nonnegative coefficients\footnote{Formally,
$\mathfrak{P}_d=\ssets{f:\R\map\R\fwhs{f(x)=\sum_{i=0}^da_i
x^i,\;\;\text{where}\;a_i\in\R_{\geq 0}\;\text{for all}\;i}}$.}.
A well-established notation in the literature of congestion games is that of $\Phi_{d}$, for
$d$ a positive integer, as the unique positive root of the equation $(x+1)^d=x^{d+1}$. Notice
how, the special case of $d=1$ corresponds to the \emph{golden ratio} constant
$\phi\approx 1.618$. In this paper, we introduce a further generalization by
defining, for all $\rho \geq 1$,  $\grat{d,\rho}$ to be the unique positive root of
the equation $\rho (x+1)^d=x^{d+1}$.
Also, we shall make use of (the principal real branch
of) the classical function known as the \emph{Lambert-W
function}~\cite{Corless1996}: for $\tau \geq 0$, $\lambertW{\tau}$ is defined to be
the unique solution to the equation $x\cdot e^{x} = \tau$.

A \emph{polynomial\footnote{We shall usually omit the word ``polynomial'' and refer
to these games as weighted congestion games of degree $d$, or simply as congestion
games, when this causes no confusion.} (weighted) congestion game of degree $d$},
with $d$ a positive integer, is a tuple $\Gamma=(N, E, (w_u)_{u \in N}, (S_u)_{u
\in N}, (c_e)_{e
\in E})$. Here, $N$ is a (finite) set of $\card{N}=n$ \emph{players} and $E$ is a
finite set of \emph{resources}. Each resource $e \in E$ has a
polynomial \emph{cost function} $c_e\in \mathfrak{P}_d$.
Every player $u \in N$ has a set of \emph{strategies} $S_u\subseteq 2^{E}$ and
each vector $\vecc s\in S:= \times_{u \in N} S_u$ will be called a
\emph{state} (or \emph{strategy profile}) of the game $\Gamma$.
Following standard game-theoretic notation, for any $\vecc s\in S$ and $u \in N$, we
denote by $\vecc s_{-u}$ the profile of strategies of all players if we remove the strategy $s_u$ of player $u$;
in this way, we have $\vecc s=(s_u,\vecc s_{-u})$.
Finally, each player $u \in N$ has a real positive \emph{weight} $w_u >0$. However, we
may henceforth assume that $w_u \geq 1$ for all $u \in N$, as we can without loss of generality appropriately scale
player weights and cost functions, without affecting our results in
this paper.

Given $\vecc s\in S$, we let $x_e(\vecc s):= \sum_{u:e\in s_u} w_e$ denote the total weight of players using
resource $e$ in state $\vecc s$. 
Generalizing this definition to any \emph{group} of
players $R \subseteq N$, we let $\xe{R}{\vecc s}:= \sum_{u\in R:e\in s_u} w_e $. 
The cost of a player $u$ at
state $\vecc s$ is defined as 
$$\cu{\vecc s} := w_u \sum_{e \in s_u} \ce{x_e(\vecc s)}.$$ 
Players are selfish and rational, and thus choose strategies as to minimize their
own cost. Let $\br{\vecc s}=\br{\vecc s_{-u}}$ be a \emph{best-response} strategy of player
$u$ to the strategies $\vecc s_{-u}$ of the other players, that is, $\br{\vecc s} \in
\argmin_{s_u' \in S_u} \cu{\vecc s_{-u}, s_u'}$ (in case of ties, we make an arbitrary selection).
A state $\vecc s$ of the game is a \emph{(pure Nash) equilibrium}, if all players are already playing best-responses, that is, no player can unilaterally improve her costs; formally, $\cu{\vecc s} \leq \cu{s_u',\vecc s_{-u}}$ for all $u\in N$ and $s_u'\in S_u$.

For a real parameter $\rho \geq 1$, a unilateral deviation of player $u$ to strategy
$s_u'$ from state $\vecc s$ is called a \emph{$\rho$-move} if $\cu{\vecc s} > \rho\cdot
\cu{\vecc s_{-u}, s_u'}$.
Extending the notion of an equilibrium in two directions, we call a state $\vecc s$
a \emph{$\rho$-approximate equilibrium} (or simply a $\rho$-equilibrium) for a given
group of players $R \subseteq N$, if none of the players in $R$ has a $\rho$-move;
formally, $\cu{\vecc s} \leq \rho \cu{s_u',\vecc s_{-u}}$ for all $u\in R$ and
$s_u'\in S_u$. If this holds for $R=N$, then we simply refer to $\vecc s$ as a
$\rho$-equilibrium of our game. We use $\mathcal{Q}_\rho^\Gamma$ to denote the set
of all $\rho$-equilibria of game $\Gamma$.

Ideally, our objective is to find states that induce low total cost in our game; we
capture this notion by defining the \emph{social cost} $\scsum(\vecc s)$ of a
state $\vecc s$ to be the sum of the players' costs, i.e., $\scsum(\vecc s):=\sum_{u
\in N} \cu{\vecc s}$. Extending this to any subset of players $R\subseteq N$, we
also denote 
$$\scsum_{R}(\vecc s)
:= \sum_{u \in R} \cu{\vecc s}
=\sum_{u \in R}w_u\sum_{e\in s_u}\ce{x_e(\vecc s)}
=\sum_{e\in E}\xe{R}{\vecc s} \ce{x_e(\vecc s)}.$$
Clearly, $\scsum_{N}(\vecc s)=\scsum(\vecc s)$.

The standard way to quantify the inefficiency due to selfish behaviour, is to study
the worst-case ratio between any equilibrium and the optimal solution, quantified by
the notion of \emph{the Price of Anarchy (PoA)}. Formally, given a game $\Gamma$ and
a parameter $\rho\geq 1$, the PoA of $\rho$-equilibria (or simply the $\rho$-PoA) of
$\Gamma$ is $\poa(\Gamma) := \max_{\vecc s\in\mathcal{Q}^\Gamma_{\rho}}
\frac{\scsum(\vecc s)}{\scsum(\vecc s^{*})}$, where $\vecc s^*\in\argmin_{\vecc s}
C(\vecc s)$. Finally, taking the worst case over all polynomial congestion games of
degree $d$, we can define the $\rho$-PoA of degree $d$ as
$$
\poa_d(\rho) := \sup_{\Gamma} \poa(\Gamma)=\sup_{\Gamma} \max_{\vecc s\in\mathcal{Q}^\Gamma_{\rho}} \frac{\scsum(\vecc s)}{\scsum(\vecc s^{*})}.
$$

\section{The Price of Anarchy} \label{sec:poa}
In this section we present our tight bound on the PoA of $\rho$-approximate
equilibria for (weighted) congestion games. We first extend the smoothness method
of Aland et al.~\cite{Aland2011} to obtain the upper bound on the $\poa$ (\cref{th:poa_upper}),
and then explicitly construct an example that extends a result
of Gairing and Schoppmann~\cite{Gairing2007} and provides the matching lower bound on the $\poa$
(\cref{th:poa_lower}).
We note here that there is a specific reason that this section precedes
\cref{sec:algorithm}, where our algorithm for computing approximate pure Nash
equilibria is presented. The estimation of the approximation guarantee of the
algorithm requires the use of the closed-form bound on the $\poa_d(\rho)$ we provide
in~\cref{th:poa_upper} and, furthermore, the ``Key Property'' of our algorithm
(\cref{thm}) rests critically on an application of \cref{partialpoa} below.

\subsection{Upper Bound} \label{subsec:poa_upper}

We formulate our upper bound on the PoA as the following theorem:
\begin{theorem}
\label{th:poa_upper} The Price of Anarchy of $\rho$-approximate equilibria in
(weighted) polynomial congestion games of degree $d$, is at most
$\grat{d,\rho}^{d+1}$, where $\grat{d,\rho}$ is the unique positive root of the
equation $\rho(x+1)^d=x^{d+1}$. In particular,
$$
\poa_d(\rho) \leq \left[\frac{d}{\lambertW{d/\rho}}\right]^{d+1},
$$
where $\lambertW{\cdot}$ denotes the Lambert-W function.
\end{theorem}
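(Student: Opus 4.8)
The plan is to prove the bound via the smoothness method in its $(\lambda,\mu)$-formulation, adapting the equilibrium inequality of Aland et al.~\cite{Aland2011} to the $\rho$-approximate setting. I would fix an arbitrary game $\Gamma$, a $\rho$-equilibrium $\vecc s$, and an optimal profile $\vecc s^*$, and start from $\scsum(\vecc s)=\sum_{u}\cu{\vecc s}\le \rho\sum_{u}\cu{s_u',\vecc s_{-u}}$ with $s_u'=s_u^*$, using the $\rho$-equilibrium condition. For each player $u$ and resource $e\in s_u^*$, the load seen after the deviation is $x_e(\vecc s_{-u})+w_u\le x_e(\vecc s)+x_e(\vecc s^*)$, since removing $u$ only lowers the load and $w_u\le x_e(\vecc s^*)$; by monotonicity of $\ce{\cdot}$ (nonnegative coefficients) I may replace the deviation load by $x_e(\vecc s)+x_e(\vecc s^*)$. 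Re-summing over resources gives $\scsum(\vecc s)\le \rho\sum_{e}x_e(\vecc s^*)\,\ce{x_e(\vecc s)+x_e(\vecc s^*)}$, which reduces everything to a per-resource, per-monomial inequality.

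The core is then the one-variable inequality $y(x+y)^d\le \lambda y^{d+1}+\mu x^{d+1}$ for all $x,y\ge 0$, equivalently $(t+1)^d\le \lambda+\mu t^{d+1}$ with $t=x/y$. I would determine $\lambda,\mu\ge 0$ by requiring $(t+1)^d$ and $\lambda+\mu t^{d+1}$ to share the same value and derivative at $t=\Phi:=\grat{d,\rho}$, which yields $\mu=\tfrac{d(\Phi+1)^{d-1}}{(d+1)\Phi^{d}}$ and $\lambda=\tfrac{(\Phi+1)^{d-1}(\Phi+d+1)}{d+1}$. To verify global validity I set $F(t)=\lambda+\mu t^{d+1}-(t+1)^d$, note $F(\Phi)=F'(\Phi)=0$, and observe that $F'(t)=0$ iff $\mu=\tfrac{d(t+1)^{d-1}}{(d+1)t^{d}}$, whose right-hand side is strictly decreasing in $t$ (a one-line log-derivative computation); hence $t=\Phi$ is the unique critical point and a global minimum, so $F\ge 0$. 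Combining with the equilibrium inequality gives $\scsum(\vecc s)\le \tfrac{\rho\lambda}{1-\rho\mu}\scsum(\vecc s^*)$, and substituting $\rho(\Phi+1)^d=\Phi^{d+1}$ simplifies $\rho\mu=\tfrac{d\Phi}{(d+1)(\Phi+1)}<1$ and collapses the ratio to exactly $\Phi^{d+1}=\grat{d,\rho}^{d+1}$. To pass from monomials to general $c_e\in\mathfrak{P}_d$, I would check that this same pair also satisfies $(t+1)^k\le\lambda+\mu t^{k+1}$ for every $k\le d$ (the degree-$d$ inequality being the binding one, since $\lambda\ge 1$ covers $t=0$ and the lower-degree curves are flatter), so by nonnegativity of the coefficients the full per-resource inequality follows.

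For the Lambert-W form, I would start from $\rho=\Phi^{d+1}/(\Phi+1)^d=\Phi\,(1+1/\Phi)^{-d}$ and apply $1+u\le e^{u}$ to get $(1+1/\Phi)^d\le e^{d/\Phi}$, hence $\rho\ge \Phi e^{-d/\Phi}$. Setting $v=d/\Phi$ this rearranges to $v e^{v}\ge d/\rho$; since $\xi\mapsto \xi e^{\xi}$ is increasing on $[0,\infty)$ with inverse $\lambertWnoarg$, I conclude $v\ge \lambertW{d/\rho}$, i.e.\ $\Phi\le d/\lambertW{d/\rho}$, and therefore $\poa_d(\rho)\le \Phi^{d+1}\le \bigl[d/\lambertW{d/\rho}\bigr]^{d+1}$.

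The hard part will be guessing and justifying the tight smoothness pair: the key step is proving that the affine function tangent to $(t+1)^d$ at $t=\Phi$ is a \emph{global} upper bound, which rests on the monotonicity of $t\mapsto \tfrac{d(t+1)^{d-1}}{(d+1)t^{d}}$, together with the uniform domination of all lower-degree monomials. Once this pair is in hand, the collapse of the ratio to $\Phi^{d+1}$ and the Lambert-W estimate are routine algebra.
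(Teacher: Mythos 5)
Your proposal is correct and takes essentially the same approach as the paper: a smoothness argument with exactly the pair the paper's Lemma~2 identifies as optimal (your $\mu$ equals the paper's $\hat\mu$ up to the reparametrization $\hat\mu=\rho\mu$, and your global-validity argument is the paper's monotonicity of $t\mapsto d(t+1)^{d-1}/\bigl((d+1)t^{d}\bigr)$), followed by the same $(1+1/t)^{t}\le e$ estimate for the Lambert-W form. The only step you leave heuristic---that the degree-$d$ monomial constraint dominates all lower degrees $k<d$---is true and is precisely the reduction the paper outsources to Aland et al.~\cite{Aland2011}; it can be made rigorous in two lines (for $t\le 1$ use $t^{k+1}\ge t^{d+1}$, and for $t\ge 1$ divide the degree-$d$ inequality by $t^{d-k}$).
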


\Cref{th:poa_upper} is a direct consequence of the following \cref{partialpoa},
applied with $R =N$, and \cref{lem:phi_lambert}. The reason we are proving a more
general version of~\cref{partialpoa} than what's needed for just establishing our
PoA upper bound of~\Cref{th:poa_upper}, is that we will actually need it for the
analysis of our main algorithm in~\cref{subseq:runtime}.

\begin{lemma} \label{partialpoa} 
For any group of players $R \subseteq N$, let
$\vecc s$ and $\vecc s^{*}$ be states such that $\vecc s$ is a $\rho$-equilibrium
for group $R$ and every player in $N \setminus R$ uses the same strategy in both
$\vecc s$ and $\vecc s^{*}$. Then, the social cost ratio of the two states is
bounded by $\frac{\scsum_R({\vecc s})}{\scsum_R({\vecc s}^{*})} \leq
\grat{d,\rho}^{d+1}$.
\end{lemma}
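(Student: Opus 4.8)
The plan is to adapt the smoothness method of Aland et al.~\cite{Aland2011} to the $\rho$-approximate, group-restricted setting. First I would invoke the $\rho$-equilibrium condition for the group $R$: for every $u\in R$, deviating to her strategy in $\vecc s^{*}$ (call it $s_u^{*}$) cannot help by more than a factor $\rho$, so $\cu{\vecc s}\le \rho\,\cu{\vecc s_{-u},s_u^{*}}$. Summing over $u\in R$ gives $\scsum_R(\vecc s)\le \rho\sum_{u\in R}\cu{\vecc s_{-u},s_u^{*}}$, which reduces the whole problem to bounding the aggregate deviation cost.

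The second step treats the deviation cost resource by resource. When $u$ unilaterally moves to $s_u^{*}$, the load on any $e\in s_u^{*}$ is at most $x_e(\vecc s)+w_u$, since deleting $u$ from $\vecc s$ only decreases loads. The key observation -- and the only place the hypothesis on $N\setminus R$ enters -- is that for $u\in R$ with $e\in s_u^{*}$ we have $w_u\le \xe{R}{\vecc s^{*}}$, because $u$ herself contributes $w_u$ to this $R$-load. Bounding $w_u$ by $\xe{R}{\vecc s^{*}}$ inside the cost and swapping summation order (so that $\sum_{u\in R:\,e\in s_u^{*}}w_u=\xe{R}{\vecc s^{*}}$) then yields, writing $y_e:=x_e(\vecc s)$, $z_e^{R}:=\xe{R}{\vecc s^{*}}$ and analogously $y_e^{R}:=\xe{R}{\vecc s}$, $z_e:=x_e(\vecc s^{*})$, the bound $\scsum_R(\vecc s)\le \rho\sum_{e}z_e^{R}\,\ce{y_e+z_e^{R}}$. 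Note that the players of $N\setminus R$ induce a \emph{common} load $q_e:=y_e-y_e^{R}=z_e-z_e^{R}\ge 0$ on each resource.

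The crux is a single local inequality, with one fixed pair of constants, that absorbs at once both the several monomial degrees of the cost functions and the common load $q_e$. Concretely, with $\Phi:=\grat{d,\rho}$ I would set $\mu:=\frac{d(\Phi+1)^{d-1}}{(d+1)\Phi^{d}}$ and $\lambda:=(\Phi+1)^{d-1}\frac{\Phi+d+1}{d+1}$, chosen so that $s\mapsto(s+1)^{d}-\mu s^{d+1}$ attains its maximum $\lambda$ at $s=\Phi$ (tangency), and prove the family of inequalities
\begin{equation*}
b\,(a+b)^{m}\le \lambda\,b^{m+1}+\mu\,a^{m+1}\qquad\text{for all }a,b\ge 0,\ 0\le m\le d.
\end{equation*}
Expanding $z_e^{R}(y_e+z_e^{R})^{m}=b(a+b+q_e)^{m}$ with $a=y_e^{R}$, $b=z_e^{R}$ in powers of $q_e$, each coefficient is exactly an instance of this family, so every $q_e$-term is nonnegative; this gives $z_e^{R}\,\ce{y_e+z_e^{R}}\le \lambda\,z_e^{R}\,\ce{z_e}+\mu\,y_e^{R}\,\ce{y_e}$ monomial-wise, and summing over resources collapses to $\scsum_R(\vecc s)\le \rho\bigl(\lambda\,\scsum_R(\vecc s^{*})+\mu\,\scsum_R(\vecc s)\bigr)$. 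Rearranging (one checks $\rho\mu=\frac{d\Phi}{(d+1)(\Phi+1)}<1$) yields $\scsum_R(\vecc s)\le \frac{\rho\lambda}{1-\rho\mu}\,\scsum_R(\vecc s^{*})$, and using the defining relation $\rho(\Phi+1)^{d}=\Phi^{d+1}$ this ratio simplifies exactly to $\Phi^{d+1}=\grat{d,\rho}^{d+1}$.

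I expect the main obstacle to be establishing the family of inequalities uniformly in $m$: the borderline case $m=d$ holds by the very tangency construction that defines $\lambda,\mu$, but the lower-degree cases $m<d$ must be verified with these same (degree-$d$) constants. After dividing by $b^{m+1}$ this amounts to $(s+1)^{m}\le \lambda+\mu s^{m+1}$ for all $s\ge 0$ and all $m\le d$, which I would settle by a monotonicity argument showing that the maximum over $s\ge 0$ of $(s+1)^{m}-\mu s^{m+1}$ never exceeds its value $\lambda$ at $m=d$. This one uniform family is precisely what lets a single pair $(\lambda,\mu)$ serve simultaneously for every monomial degree and for the extra load of the players outside $R$, which is the only genuine departure from the classical $R=N$, exact-equilibrium computation.
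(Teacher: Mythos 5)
Your proposal is correct, and its skeleton is the same smoothness argument the paper uses: the $\rho$-equilibrium summation over $u\in R$, the load bound $x_e(\vecc s_{-u},s_u^*)\leq x_e(\vecc s)+w_u\leq x_e(\vecc s)+\xe{R}{\vecc s^*}$ followed by the summation swap, and a per-resource $(\lambda,\mu)$-inequality that is rearranged into the ratio bound. Where you genuinely diverge is in how that inequality is established. The paper proves an abstract infimum characterization (its \cref{mainlemma}, whose proof defers to Aland et al.'s chain of transformations) and applies it as a black box, taking the infimum at the very end; you instead exhibit one explicit feasible pair by tangency at $\Phi:=\grat{d,\rho}$, namely $\mu=\frac{d(\Phi+1)^{d-1}}{(d+1)\Phi^{d}}$ and $\lambda=(\Phi+1)^{d-1}\frac{\Phi+d+1}{d+1}$, and verify directly that $\frac{\rho\lambda}{1-\rho\mu}=\Phi^{d+1}$ (your algebra here checks out, as does $\rho\mu=\frac{d\Phi}{(d+1)(\Phi+1)}<1$). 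This is all an upper bound needs — the full infimum statement is only required for tightness — so your route is more self-contained, at the price of redoing inline the computations that constitute the upper-bound half of the paper's \cref{mainlemma}. Similarly, you absorb the common outside load $q_e$ by binomial expansion, where the paper uses the slicker closure property that $t\mapsto f(z+t)$ is again in $\mathfrak{P}_d$; the two devices are equivalent, as both reduce the problem to monomials of every degree $m\leq d$. The one step you leave open — that the single degree-$d$ tangency pair also satisfies $(s+1)^m\leq\lambda+\mu s^{m+1}$ for all $m<d$ — is true and closes in two lines: from the degree-$d$ case, $(s+1)^d\leq\lambda+\mu s^{d+1}\leq\lambda(s+1)^{d-m}+\mu s^{m+1}(s+1)^{d-m}$, using $(s+1)^{d-m}\geq 1$ and $(s+1)^{d-m}\geq s^{d-m}$, and dividing by $(s+1)^{d-m}$ gives the claim; this is exactly the monomial reduction inside the Aland et al.\ lemma that the paper cites, so your ``main obstacle'' is real but standard. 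One small correction to your prose: the bound $w_u\leq\xe{R}{\vecc s^*}$ uses only that $u\in R$ and $e\in s_u^*$, not the hypothesis on $N\setminus R$; that hypothesis enters precisely where you observe that the outside players induce the \emph{same} load $q_e$ on each resource in both states — without it the per-resource inequality would involve two different outside loads and the argument would not close.
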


The proof of \cref{partialpoa} will essentially follow the smoothness
technique~\cite{Roughgarden2015} (see, e.g., \cite[Theorem~14.6]{Roughgarden2016}).
However, special care still needs to be taken related to the fact that only a subset
$R$ of players is deviating between the two states $\vecc s$ and $\vecc s^*$. 
In particular, the key step in the
smoothness derivation is captured by the following lemma (proved in~\cref{app:poa})
that quantifies the PoA bound:
\begin{lemma} 
\label{mainlemma} 
For any constant $\rho \geq 1$ and positive integer $d$,
\begin{align*}
B :=  
\mkern-12mu
\inf_{\substack{\lambda \in \R \\ \mu \in (0, \frac{1}{\rho})}}
\mkern-12mu
\sset{\left.\frac{\lambda \rho}{1 - \mu \rho} \right| \forall x, y, z \geq 0, f 
\in  \mathfrak{P}_d  :  y  f(z + x + y) \leq \lambda y  f(z+ y) + \mu x  f(z + x)} 
= \grat{d,\rho}^{d+1}.
\end{align*}
\end{lemma}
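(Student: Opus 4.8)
The plan is to establish the two bounds $B \le \grat{d,\rho}^{d+1}$ and $B \ge \grat{d,\rho}^{d+1}$ separately, after first reducing the unwieldy family of constraints defining the feasible pairs $(\lambda,\mu)$ to a clean one-parameter family. Throughout, write $\phi := \grat{d,\rho}$, so that by definition $\rho(\phi+1)^d = \phi^{d+1}$, equivalently $(1+\phi)^d = \phi^{d+1}/\rho$. Since both sides of the defining inequality are linear in $f$ and the coefficients of any $f \in \mathfrak{P}_d$ are nonnegative, it suffices to impose the constraint only for the monomials $f(t) = t^k$, $k = 0,1,\dots,d$. I would then show that the worst case over $z$ is always $z=0$: setting $h_k(z) := \lambda y(z+y)^k + \mu x(z+x)^k - y(z+x+y)^k$, a direct differentiation gives the telescoping identity $h_k'(z) = k\,h_{k-1}(z)$, so that an induction on $k$ (base $h_0 \ge 0$ being exactly the degree-$0$ constraint, and $h_{k-1}\ge 0$ forcing $h_k$ nondecreasing, hence $h_k(z)\ge h_k(0)\ge 0$) reduces everything to $z=0$. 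Dividing the resulting $z=0$, degree-$k$ inequality by $y^{k+1}$ and writing $t=x/y$, the feasible set becomes exactly the pairs $(\lambda,\mu)$ with $\mu\in(0,1/\rho)$ satisfying $(1+t)^k \le \lambda + \mu t^{k+1}$ for all $t\ge 0$ and all $0\le k\le d$.

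For the lower bound, the key is a single well-chosen evaluation. Using only the degree-$d$ constraint at the point $t=\phi$ gives $\lambda + \mu\phi^{d+1} \ge (1+\phi)^d = \phi^{d+1}/\rho$, i.e. $\lambda \ge \phi^{d+1}/\rho - \mu\phi^{d+1}$. Multiplying through by $\rho$ and rearranging yields $\lambda\rho \ge \phi^{d+1}(1-\mu\rho)$, so that $\frac{\lambda\rho}{1-\mu\rho} \ge \phi^{d+1}$ for \emph{every} feasible pair; taking the infimum gives $B \ge \grat{d,\rho}^{d+1}$. The choice $t=\phi$ is exactly what makes both the ``$\rho$-term'' and the ``$\mu$-term'' match simultaneously, which is why the bound comes out clean for all $\mu$ at once.

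For the upper bound, I would exhibit an explicit optimal pair obtained by taking the penalized curve $t\mapsto (1+t)^d - \mu t^{d+1}$ and tuning $\mu$ so that its interior maximum falls precisely at $t=\phi$. This gives $\mu^* = \frac{d(1+\phi)^{d-1}}{(d+1)\phi^d}$ (the first-order condition at $\phi$) and $\lambda^* := (1+\phi)^d - \mu^*\phi^{d+1}$. Using $(1+\phi)^d = \phi^{d+1}/\rho$ one checks directly that $\lambda^*\rho = \phi^{d+1}(1-\mu^*\rho)$, hence $\frac{\lambda^*\rho}{1-\mu^*\rho} = \phi^{d+1}$, and that $\mu^*\rho = \frac{d\phi}{(d+1)(1+\phi)} < 1$, so $\mu^*\in(0,1/\rho)$ and $1-\mu^*\rho>0$ as required. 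Since $\frac{(1+t)^{d-1}}{t^d}$ is strictly decreasing, $t=\phi$ is the \emph{unique} critical point and thus the global maximizer, so the degree-$d$ constraint holds with $\lambda^*$ exactly equal to the supremum.

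The remaining task, and the one I expect to be the genuine obstacle, is verifying that this candidate $(\lambda^*,\mu^*)$ also satisfies the \emph{lower}-degree constraints $(1+t)^k \le \lambda^* + \mu^* t^{k+1}$ for all $t\ge 0$ and $k<d$. For $t\in[0,1]$ this is free: there $(1+t)^k\le(1+t)^d$ and $t^{k+1}\ge t^{d+1}$, so the degree-$k$ left-minus-right quantity is dominated by the degree-$d$ one, which is at most $\lambda^*$. The delicate regime is $t>1$ with $k<d$, where the convenient monomial comparison reverses and one must instead argue directly that $\sup_{t}\big[(1+t)^k - \mu^* t^{k+1}\big] \le \lambda^*$; this requires locating the (unique) maximizer of each lower-degree profile and bounding its value, which is elementary but genuinely computational and is the only place where the telescoping structure used above does not hand the estimate over for free. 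Once this feasibility check is completed, $(\lambda^*,\mu^*)$ is admissible with objective value $\phi^{d+1}$, giving $B\le\grat{d,\rho}^{d+1}$ and, together with the lower bound, $B = \grat{d,\rho}^{d+1}$.
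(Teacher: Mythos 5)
Your overall strategy coincides with the paper's (which in turn follows Aland et al.): reduce to one-variable monomial constraints, obtain the lower bound by evaluating the degree-$d$ constraint at $t=\grat{d,\rho}$, and obtain the upper bound by exhibiting the pair $(\lambda^*,\mu^*)$ determined by the first-order condition at $t=\grat{d,\rho}$. Your $\mu^*$ is exactly the paper's $\hat\mu$ (up to the rescaling $\mu\mapsto\mu\rho$ it performs), and your monotonicity argument for $(1+t)^{d-1}/t^d$ is precisely the paper's \cref{maximumlemma}; your elimination of $z$ via the telescoping identity $h_k'=k\,h_{k-1}$ is a valid alternative to the paper's simpler observation that $t\mapsto f(z+t)$ is again in $\mathfrak{P}_d$, so the $z=0$ case of the constraint already implies the general one. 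Your lower-bound argument is complete and correct.

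The gap is where you say it is: you never verify that $(\lambda^*,\mu^*)$ satisfies the lower-degree constraints $(1+t)^k\le\lambda^*+\mu^*t^{k+1}$ for $k<d$ and $t>1$, and without this the feasibility of your candidate pair, hence the upper bound $B\le\grat{d,\rho}^{d+1}$, is unproven. This is exactly the step the paper disposes of by citation rather than by hand: it invokes the transformation from Aland et al.~(Lemma~5.5 there), under which the entire family of constraints collapses to the single degree-$d$ condition $\lambda\ge(x+1)^d-\mu x^{d+1}$ for all $x\ge0$, so no per-degree feasibility check ever arises. Moreover, the computation you anticipate (locating the maximizer of each lower-degree profile) is unnecessary; the missing step closes in two lines. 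For $t\ge1$ and $k<d$, the inequality $(1+t)^{d-k}\ge t^{d-k}$ gives $(1+t)^k-\mu t^{k+1}\le t^{k-d}\left[(1+t)^d-\mu t^{d+1}\right]$; if the bracket is nonnegative, then since $t^{k-d}\le1$ this is at most $(1+t)^d-\mu t^{d+1}\le\lambda^*$, while if the bracket is negative the left-hand side is negative and $\lambda^*>0$ (you already showed $\lambda^*\rho=\grat{d,\rho}^{d+1}(1-\mu^*\rho)>0$). Combined with your monomial-domination argument for $t\le1$, this shows that the degree-$d$ constraint together with $\lambda\ge0$ implies all lower-degree constraints, which completes your feasibility check and hence the proof.
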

The constraint that~\cref{mainlemma} imposes on parameters $\lambda$ and $\mu$ is
slightly more general than the analogous lemma in the smoothness derivation
of Aland et al.~\cite{Aland2011}; namely, our condition contains an extra variable $z$. This is
a consequence of exactly the aforementioned fact that~\cref{mainlemma} is tailored
to upper bounding a generalization of the PoA for \emph{groups} of players.
\begin{proof}[Proof of~\cref{partialpoa}]
Assume that $\lambda \in \R$ and $\mu \in (0,\frac{1}{\rho})$ are parameters such
that, for any polynomial $f$ of degree $d$ with nonnegative coefficients and for any
$x, y, z \geq 0$, it is
$$
y f(z + x + y) \leq \lambda y f(z + y) + \mu x f(z + x).
$$
Applying this for the cost function $\ce{\cdot}$ of any resource $e$, and replacing
$x \gets \xe{R}{{\vecc s}}$, $y \gets \xe{R}{{\vecc s}^{*}}$, and $z
\gets \xe{N \setminus R}{{\vecc s}} = \xe{N \setminus R}{{\vecc s}^{*}}$ (the last equality holding
due to the fact that every player in $N \setminus R$ uses the same strategy in ${\vecc s}$
and ${\vecc s}^{*})$ we have that
\begin{equation} \label{ineqSmoothness}
\xe{R}{{\vecc s}^{*}} \, \ce{ x_e({\vecc s}) + \xe{R}{{\vecc s}^{*}} }  \leq  \lambda \,
\xe{R}{{\vecc s}^{*}} \, \ce{ x_e({\vecc s}^{*}) } + \mu \, \xe{R}{{\vecc s}} \, \ce{ x_e({\vecc s}) }.
\end{equation} 
Here we also used that  $z + x + y = x_e({\vecc s}) + \xe{R}{{\vecc s}^{*}}$, $z + y
= x_e({\vecc s}^{*})$, and $z + x = x_e({\vecc s})$. Summing \eqref{ineqSmoothness} over all
resources $e$, we obtain the following inequality:
\begin{equation} \label{ineqmu}
\sum_{e\in E} \xe{R}{{\vecc s}^{*}} \, \ce{ x_e({\vecc s}) + \xe{R}{{\vecc s}^{*}} }  \leq  \lambda \,
\scsum_{R}({\vecc s}^{*})  +  \mu \, \scsum_{R}({\vecc s}).
\end{equation} 

Next, using the fact that $\vecc s$ is a $\rho$-equilibrium we can upper-bound the
social cost of the players in $R$ by
\begin{equation} \label{eq2}
\scsum_R(\vecc s)
=\sum_{u \in R} \cu{{\vecc s}} 
\leq \rho\sum_{u \in R}\cu{{\vecc s}_{-u}, s^{*}_u} 
 =\rho\sum_{u \in R}w_u\sum_{e\in s^{*}_u} \ce{x_e({\vecc s}_{-u}, s^{*}_u)}.
\end{equation}
Now, observe that for any player $u\in R$ and any resource $e\in s^*_u$ that player
$u$ uses in profile $\vecc s^*$, it is
$$
x_e(\vecc s_{-u}, s^*_u) \leq x_e(\vecc s)+w_u \leq x_e({\vecc s}) +\xe{R}{{\vecc s}^{*}}.
$$
The first inequality holds because player $u$ is the only one deviating between
states $\vecc s$ and $(\vecc s_{-u}, s^*_u)$, while the second one because $u$
definitely uses resource $e$ in profile $s^*$. Using the above, due to the
monotonicity of the cost functions $c_e$, the bound in~\eqref{eq2} can be further
developed to give us
$$
\scsum_R(\vecc s) 
\leq \rho\sum_{u \in R}w_u\sum_{e\in s^{*}_u} \ce{x_e({\vecc s}) +\xe{R}{{\vecc s}^{*}}}
= \rho \sum_{e\in E} \xe{R}{{\vecc s}^{*}} \, \ce{ x_e({\vecc s}) + \xe{R}{{\vecc s}^{*}} }
$$ 
and thus, deploying the bound from~\eqref{ineqmu}, we finally arrive at
$$
C_R({\vecc s}) \leq \rho \lambda \, C_R({\vecc s}^{*})  +  \rho\mu \,C_R({\vecc s}),
$$
which is equivalent to
$$
\frac{C_R({\vecc s})}{C_R({\vecc s}^{*})} \leq \frac{\lambda \rho}{1 -
\mu \rho}.
$$
Taking the infimum of the right-hand side, over the set of all feasible parameters
$\lambda\in\R$ and $\mu\in (0,\frac{1}{\rho})$, \cref{mainlemma} gives us desired
upper bound of $\frac{\scsum_R({\vecc s})}{\scsum_R({\vecc s}^{*})} \leq \grat{d,\rho}^{d+1}$.
\end{proof}

We conclude this section by presenting the following useful bound on the generalized
golden ratio $\grat{d,\rho}$, which is used in \cref{th:poa_upper} to get the
corresponding analytic expression for our $\poa$ bound. As discussed in the
introduction of the current section, we will use it in the proof of~\cref{equil}
in~\cref{sec:algorithm}, for deriving the improved approximation guarantee of our
algorithm.

\begin{lemma}
\label{lem:phi_lambert}
For any $\rho\geq 1$ and any positive integer $d$,
$$
\grat{d,\rho} \leq \frac{d}{\lambertW{d/\rho}},
$$
where $\lambertW{\cdot}$ is the Lambert-W function.
\end{lemma}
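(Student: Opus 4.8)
The plan is to bound the generalized golden ratio $\grat{d,\rho}$, defined as the unique positive root of $\rho(x+1)^d = x^{d+1}$, by exhibiting a suitable value and invoking the monotonicity of the defining equation. First I would rewrite the equation in a form that isolates the exponential-type structure that the Lambert-W function is designed to capture. Dividing the defining relation $\rho(x+1)^d = x^{d+1}$ by $x^d$ gives $\rho(1+1/x)^d = x$, so it is natural to work with the quantity $(1+1/x)^x$, which is monotonically increasing in $x$ and bounded above by $e$. The key analytic fact I would exploit is that $(1 + 1/x)^d \leq e^{d/x}$ for all $x > 0$, which follows from the standard inequality $1 + t \leq e^{t}$ applied with $t = 1/x$ and raised to the power $d$.

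The main idea is to set $\xi := d/\lambertW{d/\rho}$ and verify that $\xi$ is an \emph{overestimate} of the root, i.e. that substituting $x = \xi$ makes the left-hand side of $\rho(x+1)^d$ no larger than the right-hand side $x^{d+1}$; since $\grat{d,\rho}$ is the unique positive crossing point and $x^{d+1} - \rho(x+1)^d$ is negative for small $x$ and eventually positive, this would force $\grat{d,\rho} \leq \xi$. Concretely, I would aim to show
\begin{equation*}
\rho(\xi+1)^d \leq \xi^{d+1}.
\end{equation*}
Using the bound $(\xi+1)^d = \xi^d (1 + 1/\xi)^d \leq \xi^d e^{d/\xi}$, it suffices to establish $\rho\, \xi^d e^{d/\xi} \leq \xi^{d+1}$, i.e. $\rho\, e^{d/\xi} \leq \xi$. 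Substituting $\xi = d/\lambertW{d/\rho}$ turns the exponent into $d/\xi = \lambertW{d/\rho}$, so the target inequality becomes $\rho\, e^{\lambertW{d/\rho}} \leq d/\lambertW{d/\rho}$.

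The heart of the argument is then a direct computation with the defining property of the Lambert-W function. Writing $w := \lambertW{d/\rho}$, the definition gives $w e^{w} = d/\rho$, hence $e^{w} = d/(\rho w)$. Substituting this into the target inequality yields $\rho \cdot d/(\rho w) = d/w = d/\lambertW{d/\rho} = \xi$, so in fact $\rho\, e^{d/\xi} = \xi$ holds with \emph{equality}. Thus the only genuine inequality in the whole chain is the elementary step $(1+1/\xi)^d \leq e^{d/\xi}$, and everything else is an exact algebraic identity. The main obstacle is not any single hard estimate but rather arranging the substitution cleanly and justifying the final monotonicity comparison: I must confirm that $g(x) := x^{d+1} - \rho(x+1)^d$ has a unique positive zero at $\grat{d,\rho}$ with $g$ negative to its left and positive to its right, so that $g(\xi) \geq 0$ indeed certifies $\grat{d,\rho} \leq \xi$. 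This sign behavior follows because $g(0) = -\rho < 0$ while $g(x) \to +\infty$, combined with the uniqueness of the positive root already asserted in the statement, which guarantees a single sign change.
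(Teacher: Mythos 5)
Your proposal is correct and follows essentially the same route as the paper: plug the candidate $\xi = d/\lambertW{d/\rho}$ into the defining equation, reduce the verification to the elementary bound $(1+1/\xi)^d \leq e^{d/\xi}$ (the paper phrases this as $(1+1/t)^t < e$), use the identity $\lambertW{d/\rho}\,e^{\lambertW{d/\rho}} = d/\rho$ to see the remaining comparison holds with equality, and conclude by monotonicity/uniqueness of the positive root. The only cosmetic difference is that the paper rearranges the equation to $d(x+1)^d/x^{d+1} = d/\rho$ and cites the monotone decrease of that left-hand side, whereas you argue via the sign change of $g(x) = x^{d+1}-\rho(x+1)^d$; both justifications are valid.
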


\begin{proof}
Recall that $\grat{d,\rho}$ is the solution of equation $\rho (x+1)^d=x^{d+1}$,
which can be rewritten as 
$$\frac{d(x+1)^d}{x^{d+1}} = \frac{d}{\rho}.$$ 
By the proof of~\cref{maximumlemma} (see function $h$), this equation has a
unique positive root and the left side is monotonically decreasing as a function of
$x$; thus, to conclude the proof of our lemma, it suffices to
prove that
\begin{equation} 
\label{toprove}
\frac{d(\tilde{x}+1)^d}{\tilde{x}^{d+1}} \leq \frac{d}{\rho}
\qquad\qquad 
\text{for}\quad \tilde{x} := \frac{d}{\lambertW{\frac{d}{\rho}}}.
\end{equation}
Indeed, substituting for convenience $\tilde{y} := \frac{d}{\tilde{x}}=\lambertW{\frac{d}{\rho}}$, we have that
$$
\frac{d(\tilde{x}+1)^d}{\tilde{x}^{d+1}} 
= \tilde{y} \left(1+\frac{\tilde{y}}{d} \right)^d
= \tilde{y} \left[\left(1+\frac{\tilde{y}}{d} \right)^\frac{d}{\tilde y}\right]^{\tilde y}
\leq  \tilde{y} e^{\tilde y}
=\lambertW{\frac{d}{\rho}}
e^{\lambertW{\frac{d}{\rho}}} 
= \frac{d}{\rho}.
$$
For the inequality we used the fact that $\left(1+\frac{1}{t}\right)^t<e$ for all $t>0$. The last equality is a direct consequence of the definition of the Lambert-W function.
\end{proof}

\subsection{Lower Bound} \label{subsec:poa_lower} 
To prove a matching PoA lower bound to the upper bound of~\cref{subsec:poa_upper},
we consider a simple instance involving $n$ players and $n + 1$ resources. Each
player has just $2$ strategies, and each strategy consists of a single resource;
letting $n \to \infty$, we obtain the desired lower bound of $\grat{d,\rho}^{d+1}$.
This bound extends smoothly the lower bound of $\grat{d}^{d+1}$ for the PoA of exact
($\rho=1$) equilibria by Gairing and Schoppmann~\cite[Theorem~4]{Gairing2007}. We
also want to mention here that the construction used in the proof of
\cref{th:poa_lower} below can be extended to apply to network congestion games (see,
e.g., \cite[Proposition~3.4]{cggs2018}).

\begin{theorem} \label{lowerbound}
\label{th:poa_lower}
Let $d$ be a positive integer and $\rho\geq 1$. For every $\varepsilon > 0$, there
exists a (singleton\footnote{\label{foot:def_singleton}In \emph{singleton}
congestion games the strategies of all players consist of a single resource.
Formally $\card{s_u}=1$, for any $u\in N$ and all $s_u\in S_u$.}) weighted
polynomial congestion game of degree $d$, whose $\rho$-approximate $\poa$ is at least
$\grat{d,\rho}^{d+1} - \varepsilon$, where $\grat{d,\rho}$ is the unique positive
root of the equation $\rho(x+1)^d=x^{d+1}$.
\end{theorem}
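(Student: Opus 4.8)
The plan is to exhibit, for every $n$, an explicit \emph{singleton} weighted congestion game $\Gamma_n$ of degree $d$ together with a $\rho$-equilibrium $\vecc{s}$ and a reference state $\vecc{s}^{*}$, and to show that $\scsum(\vecc{s})/\scsum(\vecc{s}^{*})\to\grat{d,\rho}^{d+1}$ as $n\to\infty$; choosing $n$ large enough then realizes any prescribed accuracy $\varepsilon$. Writing $\Phi:=\grat{d,\rho}$ (so that $\rho(\Phi+1)^d=\Phi^{d+1}$ and, since $\rho\geq1$, one checks $\Phi>1$, hence all weights below are $\geq 1$), I would take players $0,1,\dots,n-1$ with geometric weights $w_i:=\Phi^{i}$ and resources $0,1,\dots,n$ carrying monomial costs $c_j(x)=a_j x^{d}\in\mathfrak{P}_d$. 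Each player $i$ is given exactly two singleton strategies, $\{i\}$ and $\{i+1\}$, and the coefficients are chosen geometrically decreasing, $a_j:=\Phi^{-j(d+1)}$ for $j\leq n-1$, with the single exception of the top resource $n$, whose coefficient $a_n$ is enlarged as explained below. The equilibrium candidate $\vecc{s}$ places each player $i$ on resource $i$, while the reference state $\vecc{s}^{*}$ shifts everyone up by one, placing player $i$ on resource $i+1$.

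The heart of the argument is verifying that $\vecc{s}$ is a genuine $\rho$-equilibrium. In $\vecc{s}$ resource $i$ carries only player $i$, so $C_i(\vecc{s})=w_i\,a_i w_i^{d}=a_i w_i^{d+1}=1$ for every player; the scaling is arranged precisely so that all players incur the same cost. The only deviation available to player $i$ is to resource $i+1$, which already carries the heavier player $i+1$; moving there raises the load to $w_{i+1}+w_i=\Phi^{i}(\Phi+1)$, and a direct computation gives deviation cost $w_i\,a_{i+1}\bigl(\Phi^{i}(\Phi+1)\bigr)^{d}=\Phi^{-(d+1)}(\Phi+1)^{d}=1/\rho$, where the last equality is exactly the defining relation $\rho(\Phi+1)^d=\Phi^{d+1}$. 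Hence $C_i(\vecc{s})=1=\rho\cdot(1/\rho)$, so every interior player sits exactly at the $\rho$-move threshold and has no $\rho$-move. This is the step where the root $\grat{d,\rho}$ enters, and I expect it to be the crux: the weights and coefficients must be calibrated so that the \emph{single} defining equation makes every interior deviation simultaneously non-improving by precisely the factor $\rho$.

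The remaining obstacle is a pure boundary effect. The topmost player $n-1$ would, in $\vecc{s}$, see resource $n$ empty, and without intervention would strictly prefer to move there, destroying the equilibrium. I would handle this by inflating $a_n$ to the smallest value rendering player $n-1$ indifferent (equivalently, making resource $n$ exactly $\rho$ times as costly as staying, i.e. $a_n=a_{n-1}/\rho$); since resource $n$ is unused in $\vecc{s}$, this leaves $\scsum(\vecc{s})$ untouched, while in $\vecc{s}^{*}$ it contributes only the $n$-independent term $a_n w_{n-1}^{d+1}=1/\rho$. It then remains to compute the two social costs directly: $\scsum(\vecc{s})=n$, whereas in $\vecc{s}^{*}$ each resource $1,\dots,n-1$ carries a single player at the uniform cost $\Phi^{-(d+1)}$, giving $\scsum(\vecc{s}^{*})=(n-1)\Phi^{-(d+1)}+1/\rho$. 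Therefore $\scsum(\vecc{s})/\scsum(\vecc{s}^{*})=n/\bigl[(n-1)\Phi^{-(d+1)}+1/\rho\bigr]\to\Phi^{d+1}$, and since the true optimum can only be cheaper than $\vecc{s}^{*}$, the $\rho$-approximate $\poa$ of $\Gamma_n$ exceeds $\grat{d,\rho}^{d+1}-\varepsilon$ once $n$ is large. The one subtlety worth double-checking is the lower boundary player $0$, whose only deviation lands on the occupied resource $1$; the same calibration makes it indifferent as well, so no player anywhere possesses a $\rho$-move.
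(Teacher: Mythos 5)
Your proposal is correct and is essentially the paper's own construction in mirror image: the paper uses geometrically \emph{decreasing} weights $\grat{d,\rho}^{-i}$, takes the ``shift-up'' profile as the equilibrium and the identity profile as the reference state, and calibrates the single empty boundary resource with a \emph{constant} cost function, whereas you use increasing weights $\grat{d,\rho}^{i}$, the identity profile as equilibrium, and a calibrated degree-$d$ monomial on the empty top resource. The substance — a chain of two-strategy singleton players, the exact-threshold verification via the defining relation $\rho(\Phi+1)^d=\Phi^{d+1}$ making every deviation cost exactly $1/\rho$ of the current cost, and the $n\to\infty$ limit against the explicit reference state — is identical, so there is no gap.
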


\begin{proof}
Consider the following congestion game, with $n$ players $N=\{1,\dots,n\}$ and $n+1$
resources $E= \{1, \ldots, n+1\}$. Each player $i$ has a weight of $w_i=w^i$, where
$w := \frac{1}{\grat{d, \rho}}$. Resources have cost functions
$$
c_{j}(t)=
\begin{cases}
\frac{1}{\rho}\grat{d, \rho}^{d+2}, &j=1,\\[10pt]
\grat{d, \rho}^{(d+1)j} t^d, &j=2,\dots,n+1.
\end{cases}
$$
Each player $i$ has only two available strategies, denoted by $s_i^*$ and $s_i$:
either use only resource $i$, or only resource $i + 1$. Formally,
$S_i=\ssets{s_i^*,s_i}$, where $s_i^*=\ssets{i}$ and $s_i=\ssets{i+1}$.
The social cost of profile $\vecc s=(s_1,\dots,s_n)$, where every player $i$ uses the $(i+1)$-th resource, is
\begin{align*}
\scsum({\vecc s}) 
& = \sum_{i = 1}^{n} C_{i}({\vecc s}) 
= \sum_{i = 1}^{n} w_i c_{i+1}(w_i)\\ 
& = \sum_{i = 1}^{n} w^{i}
\grat{d, \rho}^{(d+1)(i+1)}   w^{i d} 
= \sum_{i = 1}^{n}  \left(\grat{d,
\rho}^{i+1} \, w^{i} \right)^{d+1}\\ 
& = \sum_{i = 1}^{n}  \grat{d,
\rho}^{d+1} = n \, \grat{d, \rho}^{d+1}.
\end{align*}
while that of $\vecc s^*=(s_1^*,\dots,s_n^*)$, where player $i$ uses the $i$-th resource is
\begin{align*}
\scsum({\vecc s}^{*}) 
= \sum_{i = 1}^{n} w_i c_{i}(w_i) 
= w \frac{1}{\rho}\grat{d,
\rho}^{d+2} + \sum\limits_{i = 2}^{n} w^{i} \grat{d, \rho}^{(d+1)i}
w^{id} 
= \frac{1}{\rho}\grat{d, \rho}^{d+1} + \sum\limits_{i = 2}^{n} 1
= \frac{1}{\rho}\grat{d, \rho}^{d+1} + n - 1.
\end{align*}

We now claim that profile $\vecc s$ is a $\rho$-equilibrium. Indeed, for player $i=1$,
$$
\frac{C_{1}({\vecc s})}{C_{1}({\vecc s}_{-1}, s_1^{*})} 
=\frac{c_{2}(w_1)}{c_{1}(w_1)}
= \frac{\grat{d,
\rho}^{2(d+1)} w^d}{\frac{1}{\rho}\grat{d, \rho}^{d+2}} =\rho \frac{\grat{d,
\rho}^{2(d+1)-d}}{\grat{d, \rho}^{d + 2}} = \rho,
$$
and for all players $i=2,\dots,n$,
$$\frac{C_{i}({\vecc s})}{C_{i}({\vecc s}_{-u_{i}}, s_i^{*})}
=\frac{c_{i+1}(w_i)}{c_{i}(w_i + w_{i-1})}
=
\frac{\grat{d,
\rho}^{(d+1)(i+1)} (w^{i})^d}{\grat{d, \rho}^{(d+1)i} (w^{i} +
w^{i-1})^{d}} 
= \grat{d, \rho}^{d+1} \left(1 + \frac{1}{w} \right)^{-d} = \grat{d,
\rho}^{d+1} (\grat{d, \rho} + 1)^{-d} = \rho,$$
the last equality coming for the definition of the generalized golden ratio $\grat{d,
\rho}$; thus, no player can unilaterally deviate from $\vecc s$ and gain (strictly) more than a factor of $\rho$.

Since $\vecc s$ is a $\rho$-equilibrium, the $\rho$-PoA of our game is at least
$$
\frac{C(\vecc s)}{\min_{\vecc s'}C(\vecc s')}
\geq \frac{C(\vecc s)}{C(\vecc s^*)}
=\frac{n \grat{d, \rho}^{d+1}}{n - 1+\frac{1}{\rho}\grat{d,
\rho}^{d+1}}
\to \grat{d, \rho}^{d+1},
$$
as $n$ grows arbitrarily large. This concludes our proof.
\end{proof}

\section{The Algorithm} \label{sec:algorithm}

In this section we describe and study our algorithm for computing $d^{d +
o(d)}$-approximate equilibria in weighted congestion games of degree $d$. The
algorithm, as well as the general outline of its analysis, are inspired by the work
of Caragiannis et al.~\cite{Caragiannis2015a}. However, as discussed in \cref{subsec:ourres}, here we
are using the approximate potential function of Christodoulou et al.~\cite{cggs2018}. This will
be crucial in proving that the algorithm is indeed poly-time and, more importantly,
has the improved approximation guarantee of $d^{d + o(d)}$.

In \cref{subseq:potent}, we introduce the aforementioned potential function
from~\cite{cggs2018}, along with some natural extensions that will be
useful for our analysis --- \emph{partial} and \emph{subgame} potentials. This
is complemented by a set of technical lemmas through which our use of this potential
function will be instantiated in the rest of the paper. In \cref{subseq:algorithm},
we describe our approximation algorithm and next, in \cref{subseq:runtime}, we show
that it indeed runs in polynomial time; the critical step in achieving this is
proving the ``Key Property'' (\cref{thm}) of our algorithm, which appropriately
bounds the potential of certain groups of players throughout its execution. Finally,
in \cref{subseq:approxFactor} we establish the desired approximation guarantee of
the algorithm.

\subsection{The Potential Function Technique} \label{subseq:potent}

Our subsequent proofs regarding both the runtime and the approximation guarantee of
our algorithm, will rely heavily on the use of an approximate potential function. In
particular, we will use a straightforward variant of the \emph{Faulhaber potential}
function, introduced by Christodoulou et al.~\cite{cggs2018}. Consider a polynomial weighted
congestion game of degree $d$. For every resource $e$ with cost function
$c_e(x)=\sum_{\nu=0}^da_{e,\nu}x^\nu$, we set
\begin{equation} 
\label{potentialDefine}
\pe{x} := a_{e, 0} \, x + \sum_{\nu=1}^{d}a_{e, \nu}
\left(x^{\nu+1}+\frac{\nu+1}{2} x^{\nu} \right).
\end{equation}
The potential of any state $\vecc{s}$ of the game is then defined as
$$
\p{}{\vecc{s}}{} := \sum_{e\in E} \pe{x_e(\vecc{s})}.
$$ 

The potential function introduced above satisfies a crucial property which is given
in the following lemma and which will be extensively used in the rest of our paper.
To state it, we will first define\footnote{The reason for introducing extra notation
here, and not making the arguably simpler choice to directly use the actual value of
$d+1$ instead of variable $\alpha$, for the rest of the paper, is that we want to
assist readability: by using $\alpha$, we make clear where exactly this factor from our
approximate potential comes into play.} an auxiliary constant
\begin{equation}
\label{eq:def_alphas}
\alpha := d + 1.
\end{equation}
\begin{lemma} 
\label{thelemma} 
For any resource $e$, $x \geq 0$ and $w \geq 1$:
\begin{equation*}
\label{eq:cond_approx_eq}
 w \cdot c_e(x+w) \leq \phi_e(x+w)-\phi_e(x) \leq \alpha \cdot w \cdot c_e(x+w),
\end{equation*} 
where function $\phi_e$ is defined given in \eqref{potentialDefine} and parameter $\alpha$ in~\eqref{eq:def_alphas}. 
\end{lemma}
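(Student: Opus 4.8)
The plan is to exploit the fact that both $c_e$ and the potential $\phi_e$ are sums, with \emph{nonnegative} coefficients $a_{e,\nu}$, of contributions coming from the individual monomials $x^\nu$. Since all three quantities in the claimed chain are linear in the $a_{e,\nu}$, it suffices to establish both inequalities separately for each degree $\nu\in\{0,1,\dots,d\}$ and then take the nonnegative combination $\sum_\nu a_{e,\nu}(\cdots)$. Writing $b:=x+w$ and $a:=x$ (so $b\ge a\ge 0$ and $w=b-a\ge 1$), and setting $g_\nu(t):=t^{\nu+1}+\tfrac{\nu+1}{2}t^{\nu}$ for $\nu\ge 1$, the degree-$\nu$ reduction asks us to show
\[
w\,b^{\nu}\;\le\;g_\nu(b)-g_\nu(a)\;\le\;(\nu+1)\,w\,b^{\nu}
\]
for $\nu\ge1$; the constant term $\nu=0$ is trivial, since there the middle quantity equals exactly $w=w\cdot b^{0}$. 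For the upper bound I only prove the per-monomial factor $\nu+1$, which suffices because $\nu+1\le d+1=\alpha$ and $w\,b^{\nu}\ge0$.

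The lower bound is a short computation: expanding and regrouping gives
\[
g_\nu(b)-g_\nu(a)-w\,b^{\nu}=\Big(a+\tfrac{\nu+1}{2}\Big)\big(b^{\nu}-a^{\nu}\big)\;\ge\;0,
\]
since $b\ge a\ge0$. Notably this direction needs only $w\ge0$, not $w\ge1$.

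The upper bound is the heart of the argument, and it is exactly here that the hypothesis $w\ge1$ is indispensable. I rewrite the gap as an integral: since $g_\nu'(t)=(\nu+1)t^{\nu}+\tfrac{(\nu+1)\nu}{2}t^{\nu-1}$,
\[
(\nu+1)w\,b^{\nu}-\big(g_\nu(b)-g_\nu(a)\big)=(\nu+1)\int_{a}^{b}\Big[(b^{\nu}-t^{\nu})-\tfrac{\nu}{2}\,t^{\nu-1}\Big]\,dt .
\]
Substituting $s=b-t$ and introducing $\psi(s):=b^{\nu}-(b-s)^{\nu}$, one checks that $\int_a^b(b^{\nu}-t^{\nu})\,dt=\int_0^{w}\psi(s)\,ds$ and $\tfrac{\nu}{2}\int_a^b t^{\nu-1}\,dt=\tfrac12\psi(w)$, so the claim reduces to $\int_0^{w}\psi(s)\,ds\ge\tfrac12\psi(w)$. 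The key observation is that $\psi$ is \emph{concave} on $[0,b]$ (because $(b-s)^{\nu}$ is convex) and satisfies $\psi(0)=0$ with $\psi\ge0$; hence the chord bound $\psi(s)\ge\tfrac{s}{w}\psi(w)$ yields $\int_0^{w}\psi(s)\,ds\ge\tfrac{w}{2}\psi(w)$, and finally $\tfrac{w}{2}\psi(w)\ge\tfrac12\psi(w)$ precisely because $w\ge1$.

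The step I expect to be the main obstacle is this upper bound. The naive estimates—bounding $g_\nu'$ by its maximum $g_\nu'(b)$, or splitting $b^{\nu+1}-a^{\nu+1}$ and $b^{\nu}-a^{\nu}$ termwise via the factorization $b^{k}-a^{k}=(b-a)\sum b^{i}a^{k-1-i}$—overshoot the target factor $\nu+1$ and, crucially, never reveal where $w\ge1$ enters. Recasting the gap as $\int_0^{w}\psi-\tfrac12\psi(w)$ and invoking concavity is what makes both the tight constant $\nu+1$ and the role of $w\ge1$ transparent; indeed the $\nu=1$ case already shows the bound is tight and fails for $w<1$.
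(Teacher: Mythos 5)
Your proof is correct, and it takes a genuinely different route from the paper's. The paper does not argue from first principles: it imports the chain $\frac{1}{d+1}\,w\,c_e(x+w)\le\hat\phi_e(x+w)-\hat\phi_e(x)\le w\,c_e(x+w)$ for the Faulhaber potential $\hat\phi_e(x)=\sum_{\nu}a_{e,\nu}S_\nu(x)$ from Christodoulou et al.~\cite{cggs2018-arxiv}, and then observes that multiplying the degree-$\nu$ term by $\nu+1$ (which is exactly how $\phi_e$ arises from $\hat\phi_e$) rescales those bounds into the stated ones ``by following the same proof''. You instead re-derive the underlying per-monomial inequality $w\,b^{\nu}\le g_\nu(b)-g_\nu(a)\le(\nu+1)\,w\,b^{\nu}$ (with $b=x+w$, $a=x$) in a self-contained way, and both of your steps check out: the lower bound follows from the exact identity $g_\nu(b)-g_\nu(a)-w b^{\nu}=\bigl(a+\tfrac{\nu+1}{2}\bigr)(b^{\nu}-a^{\nu})\ge 0$, and for the upper bound the slack equals $(\nu+1)\bigl[\int_0^{w}\psi(s)\,ds-\tfrac12\psi(w)\bigr]$ with $\psi(s)=b^{\nu}-(b-s)^{\nu}$ concave and vanishing at $0$, so the chord bound $\psi(s)\ge\tfrac{s}{w}\psi(w)$ yields $\int_0^{w}\psi(s)\,ds\ge\tfrac{w}{2}\psi(w)\ge\tfrac12\psi(w)$, which is the single place where $w\ge 1$ enters; summing over $\nu$ with the nonnegative coefficients $a_{e,\nu}$ and using $\nu+1\le d+1=\alpha$ finishes the argument. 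As for what each route buys: the paper's citation-plus-rescaling is short and ties $\phi_e$ conceptually to the Faulhaber potential of prior work, but it is not self-contained and is somewhat informal at the rescaling step; your proof is verifiable in isolation, exposes the sharper per-monomial constant $\nu+1$ (so the factor $\alpha=d+1$ is forced only by the top-degree term), shows that the lower bound needs merely $w\ge 0$, and pinpoints---with a tight example at $\nu=1$, $w=1$---exactly where the hypothesis $w\ge 1$ is indispensable.
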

\begin{proof}
From the proof\footnote{In particular, see Eq.~(4.10) in the proof of
\cite[Claim~4.7]{cggs2018} and the related \cite[Lemma~4.3]{cggs2018},
both applied for the special case of $\gamma=1$ here.} of Theorem~4.4
of Christodoulou et al.~\cite{cggs2018} we know that, if for resource $e$ with cost function
$c_e(x)=\sum_{\nu=0}^da_{e,\nu}x^\nu$ we define
\begin{equation}
\label{eq:def_pot_christo_etal}
\hat\phi_e(x)=\sum_{\nu=1}^d a_{e,\nu} S_\nu(x),
\qquad\text{where}\;\; 
S_\nu(x)=
\begin{cases}
\frac{x^{\nu+1}}{\nu+1}+\frac{x^\nu}{2},& \nu= 1,\dots,d,\\
x, &\nu=0,
\end{cases}
\end{equation}
then
$$
\frac{1}{d+1}\cdot w \cdot c_e(x+w) \leq \hat\phi_e(x+w) - \hat\phi_e(x)  \leq w \cdot c_e(x+w).
$$

We now multiply each $\nu$-th term of the sum for $\hat\phi_e(x)$
in~\eqref{eq:def_pot_christo_etal} by $\nu+1$, defining $$\phi_e(x)=\sum_{\nu=1}^d
a_{e,\nu} (\nu+1) S_\nu(x)$$ which coincides with our definition of the potential in
\eqref{potentialDefine}. It is not difficult to see (by following the same proof of
\cite[Thereom~4.4]{cggs2018}) that the above translates into essentially scaling the entire potential by $d+1$, resulting in the desired bound in the statement
of~\cref{thelemma}.
\end{proof}

The notions of \emph{partial} and \emph{subgame} potential were introduced in
\cite{Caragiannis2015a}, and we now adapt them for the setting of our new approximate
potential. The \emph{subgame potential} with respect to a group of players $R
\subseteq N$ of a state $\vecc{s}$ is defined as:
$$
\p{}{\vecc{s}}{R} := \sum_{e\in E} \pe{\xe{R}{\vecc{s}}}.
$$
The \emph{partial potential} with respect to a group of players $R \subseteq N$ is then defined as:
\begin{equation}
\label{eq:partial_pot_def}
\p{R}{\vecc{s}}{} := \p{}{\vecc{s}}{} - \p{}{\vecc{s}}{N \setminus R} = \sum_{e\in E} \left[\pe{x_e(\vecc{s})} -
\pe{\xe{N \setminus R}{\vecc{s}}} \right].
\end{equation}

In the original work by Christodoulou et al.~\cite{cggs2018}, the variable
$w$ in \cref{thelemma} was interpreted as a single player's weight, and $x$ as the
total remaining weight on resource $e$. In our analysis, however, $x$ and $w$ will
each play the role of the total weight of some group's players who are using
resource $e$. \Cref{thelemma} then becomes a very powerful algebraic tool that
relates social cost of groups of players and partial potentials in a variety of
settings, and plays an important role in many of our proofs. In many cases, it will
be present in such proofs implicitly, through the following corollary
(\cref{alphalemma}), which says that the chosen potential function is cost-revealing
in a strong sense. That is, for any group of players, the ratio of the partial
potential and the social cost of that group is bounded from both above and below.

\begin{lemma} 
\label{alphalemma} 
For any group of players $R \subseteq N$,
$$
\sum_{u \in R} \cu{\vecc{s}} \leq 
\p{R}{\vecc{s}}{} \leq \alpha \sum_{u \in R}\cu{\vecc{s}}.$$
\end{lemma}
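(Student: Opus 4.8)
The plan is to prove \cref{alphalemma} by relating the partial potential $\p{R}{\vecc{s}}{}$ to the social cost $\scsum_R(\vecc s)=\sum_{u\in R}\cu{\vecc s}$ on a per-resource basis, leveraging \cref{thelemma} with the group interpretation of its variables. The key observation is that the partial potential, as defined in~\eqref{eq:partial_pot_def}, is a sum over resources of telescoping differences $\pe{x_e(\vecc s)}-\pe{\xe{N\setminus R}{\vecc s}}$, and that on each resource $e$ we have $x_e(\vecc s)=\xe{N\setminus R}{\vecc s}+\xe{R}{\vecc s}$. So first I would set, for each fixed resource $e$, the substitution $x\gets \xe{N\setminus R}{\vecc s}$ and $w\gets \xe{R}{\vecc s}$, so that $x+w=x_e(\vecc s)$, turning the per-resource potential difference into exactly $\phi_e(x+w)-\phi_e(x)$.

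The main technical point to verify before applying \cref{thelemma} is the hypothesis $w\geq 1$: here $w=\xe{R}{\vecc s}=\sum_{u\in R:\,e\in s_u}w_u$ is the total weight of players in $R$ using $e$, and since we assumed $w_u\geq 1$ for all players, any resource actually used by at least one player of $R$ satisfies this. For resources not used by any player in $R$, we have $\xe{R}{\vecc s}=0$, so both the per-resource potential difference $\phi_e(x)-\phi_e(x)=0$ and the per-resource social cost contribution $\xe{R}{\vecc s}\ce{x_e(\vecc s)}=0$ vanish, and the inequalities hold trivially. For the remaining (used) resources, \cref{thelemma} yields
$$
\xe{R}{\vecc s}\,\ce{x_e(\vecc s)}\ \leq\ \pe{x_e(\vecc s)}-\pe{\xe{N\setminus R}{\vecc s}}\ \leq\ \alpha\,\xe{R}{\vecc s}\,\ce{x_e(\vecc s)}.
$$

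Finally I would sum these per-resource inequalities over all $e\in E$. The right-hand and left-hand bounds sum to $\alpha\sum_{e}\xe{R}{\vecc s}\ce{x_e(\vecc s)}$ and $\sum_{e}\xe{R}{\vecc s}\ce{x_e(\vecc s)}$ respectively, and the middle term sums to $\p{R}{\vecc s}{}$ by definition~\eqref{eq:partial_pot_def}. The key identity that closes the argument is the already-established fact from \cref{sec:notation} that $\scsum_R(\vecc s)=\sum_{e\in E}\xe{R}{\vecc s}\ce{x_e(\vecc s)}=\sum_{u\in R}\cu{\vecc s}$, so both the upper and lower summed bounds equal (a multiple of) $\sum_{u\in R}\cu{\vecc s}$, giving exactly the claimed double inequality.

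I do not anticipate any serious obstacle in this proof, as it is essentially a direct aggregation of \cref{thelemma} over resources. The only subtlety worth stating explicitly is the $w\geq 1$ condition and the handling of resources with $\xe{R}{\vecc s}=0$; this is where the normalization assumption $w_u\geq 1$ is genuinely used, so I would take care to flag it rather than treat it as routine.
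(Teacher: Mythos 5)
Your proof is correct and takes essentially the same route as the paper's: the identical substitution $x \gets \xe{N\setminus R}{\vecc s}$, $w \gets \xe{R}{\vecc s}$ into \cref{thelemma} for each resource, followed by summation over $e\in E$ and the identity $\sum_{u\in R}\cu{\vecc s}=\sum_{e\in E}\xe{R}{\vecc s}\,\ce{x_e(\vecc s)}$. Your explicit handling of the hypothesis $w\geq 1$ and of resources with $\xe{R}{\vecc s}=0$ is a careful touch that the paper's own proof leaves implicit.
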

\begin{proof}
First consider each resource $e$ separately. Setting $x = \xe{N \setminus
R}{\vecc{s}}$ and $w = \xe{R}{\vecc{s}}$ in \cref{thelemma}, we obtain
\begin{align*}
\xe{R}{\vecc{s}} \cdot \ce{\xe{N \setminus R}{\vecc{s}} + \xe{R}{\vecc{s}}} &\leq
\pe{\xe{N \setminus R}{\vecc{s}} + \xe{R}{\vecc{s}}} - \pe{\xe{N \setminus
R}{\vecc{s}}} \\ &\leq \alpha \cdot \xe{R}{\vecc{s}} \cdot \ce{\xe{N \setminus
R}{\vecc{s}} + \xe{R}{\vecc{s}}}.
\end{align*} Summing over all resources, we get that
\begin{align*}
\sum_{u \in R} \cu{\vecc{s}} \leq \p{}{\vecc{s}}{} - \p{}{\vecc{s}}{N \setminus R}
\leq \alpha \sum_{u \in R} \cu{\vecc{s}}.
\end{align*} By definition~\eqref{eq:partial_pot_def}, $\p{R}{\vecc{s}}{} = \p{}{\vecc{s}}{} - \p{}{\vecc{s}}{N
\setminus R}$, concluding the proof.
\end{proof}

The following lemma provides a relation between the change in potential due to a single
player's deviation and a linear combination of that player's old and new costs. For
an exact potential function, the latter would simply be the difference in the cost
experienced by that player. However, in our case, the player's cost in one
of the two states is weighted by an additional factor of $\alpha$ compared to her
cost in the other state. 
Thus, \cref{potentialProp} only implies a decrease in the potential function 
if the deviating player has improved her cost by at least a factor $\alpha$.

\begin{lemma} 
\label{potentialProp} 
Let $u$ be a player, $R$ be an arbitrary subset
of players with $u \in R$, and $\vecc{s}$ and $\vecc{s}'$ be two states that differ
\emph{only} in the strategy of player $u$. Then, 
$$
\p{R}{\vecc{s}}{} - \p{R}{\vecc{s}'}{} \geq 
\cu{\vecc{s}} - \alpha \cu{\vecc{s}'}.$$
\end{lemma}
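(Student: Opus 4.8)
The plan is to reduce the partial-potential difference to a difference of \emph{full} potentials, and then expand that difference resource by resource using \cref{thelemma}. First I would exploit the hypothesis $u \in R$: player $u$ does not belong to $N \setminus R$, and since $\vecc{s}$ and $\vecc{s}'$ differ only in $u$'s strategy, every player of $N \setminus R$ plays identically in the two states. Hence $\xe{N \setminus R}{\vecc{s}} = \xe{N \setminus R}{\vecc{s}'}$ for each resource $e$, so $\p{}{\vecc{s}}{N \setminus R} = \p{}{\vecc{s}'}{N \setminus R}$. Substituting the definition~\eqref{eq:partial_pot_def} of the partial potential, the two $N\setminus R$ terms cancel, giving $\p{R}{\vecc{s}}{} - \p{R}{\vecc{s}'}{} = \p{}{\vecc{s}}{} - \p{}{\vecc{s}'}{}$. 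After this step the group $R$ plays no further role and it suffices to bound the change in the full potential.

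Next I would make the effect of $u$'s deviation explicit. Writing $s_u$ and $s_u'$ for $u$'s strategies in $\vecc{s}$ and $\vecc{s}'$, and letting $y_e$ denote the total weight placed on resource $e$ by all players other than $u$ (common to both states), I have $x_e(\vecc{s}) = y_e + w_u$ for $e \in s_u$ and $x_e(\vecc{s}) = y_e$ otherwise, and symmetrically for $\vecc{s}'$. Resources outside $s_u \cup s_u'$ contribute equal terms to both potentials, and the terms of $s_u \cap s_u'$ cancel as well; a short telescoping therefore yields
$$
\p{}{\vecc{s}}{} - \p{}{\vecc{s}'}{} = \sum_{e \in s_u}\left[\pe{y_e + w_u} - \pe{y_e}\right] - \sum_{e \in s_u'}\left[\pe{y_e + w_u} - \pe{y_e}\right].
$$

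Finally I would invoke \cref{thelemma} with the substitution $x \gets y_e$, $w \gets w_u$ (legitimate since $w_u \geq 1$). Its lower bound $w_u\,\ce{y_e + w_u} \leq \pe{y_e + w_u} - \pe{y_e}$, summed over $e \in s_u$, gives $\sum_{e \in s_u}[\pe{y_e + w_u}-\pe{y_e}] \geq w_u\sum_{e \in s_u}\ce{y_e+w_u} = \cu{\vecc{s}}$; its upper bound, summed over $e \in s_u'$, gives $\sum_{e \in s_u'}[\pe{y_e+w_u}-\pe{y_e}] \leq \alpha\,w_u\sum_{e \in s_u'}\ce{y_e+w_u} = \alpha\,\cu{\vecc{s}'}$. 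Subtracting and combining with the first step produces $\p{R}{\vecc{s}}{} - \p{R}{\vecc{s}'}{} \geq \cu{\vecc{s}} - \alpha\,\cu{\vecc{s}'}$, as claimed. The proof is short, so the only real obstacle is bookkeeping: one must match the two \emph{directions} of \cref{thelemma} to the correct sums, applying the lower bound on the old strategy $s_u$ (to recover $\cu{\vecc{s}}$ with no excess factor) and the upper bound on the new strategy $s_u'$ (which is where the factor $\alpha$ enters). This directional asymmetry is precisely the manifestation of the potential being only approximate, and it is what forces the weight $\alpha$ onto the new cost rather than giving the clean telescoping equality of an exact potential.
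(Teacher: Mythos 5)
Your proposal is correct, and its first step coincides exactly with the paper's: since $u \in R$ and only $u$ moves, $\xe{N \setminus R}{\vecc{s}} = \xe{N \setminus R}{\vecc{s}'}$ for every resource, so the partial-potential difference collapses to the full-potential difference $\p{}{\vecc{s}}{} - \p{}{\vecc{s}'}{}$. Where you diverge is in how that full-potential difference is bounded. The paper at this point simply cites an external result (Lemma~4.1 of Christodoulou et al., invoked via \cref{thelemma}) asserting that $\varPhi$ is an $\alpha$-approximate potential, whereas you prove this single-deviation inequality from scratch: you telescope the potential difference over the resources of $s_u$ and $s_u'$ (the contributions of resources outside $s_u \cup s_u'$ and inside $s_u \cap s_u'$ cancel), and then apply the two directions of \cref{thelemma} asymmetrically --- the lower bound $w_u\,\ce{y_e+w_u} \leq \pe{y_e+w_u}-\pe{y_e}$ on the resources of the old strategy to recover $\cu{\vecc{s}}$, and the upper bound $\pe{y_e+w_u}-\pe{y_e} \leq \alpha\, w_u\,\ce{y_e+w_u}$ on the resources of the new strategy to produce $\alpha\,\cu{\vecc{s}'}$. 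This bookkeeping is sound (including the harmless double-counting of resources in $s_u \cap s_u'$, since bounding $A$ below and $B$ above bounds $A-B$ below), and the hypothesis $w_u \geq 1$ needed for \cref{thelemma} is available by the paper's normalization. What your route buys is self-containedness: the reader sees exactly where the factor $\alpha$ attaches to the new cost and why the potential is only approximately cost-revealing, rather than having to trust a black-box citation; what it costs is only the modest extra length of the resource-level argument, which is precisely the content the cited lemma encapsulates.
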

\begin{proof}
First observe that, since each player in $N \setminus R$ plays the same strategy in both profiles $\vecc s$ and $\vecc s'$, it must be that  $\xe{N \setminus R}{\vecc{s}} = \xe{N \setminus R}{\vecc{s}'}$. Using this, we can derive that
$$
\p{}{\vecc{s}}{N \setminus R} 
= \sum_{e \in E} \pe{\xe{N \setminus R}{\vecc{s}}} 
= \sum_{e\in E} \pe{\xe{N \setminus R}{\vecc{s}'}} 
= \p{}{\vecc{s}'}{N \setminus R}, 
$$ 
and thus
$$
\p{R}{\vecc{s}}{} - \p{R}{\vecc{s}'}{} 
= \left[\p{}{\vecc{s}}{} - \p{}{\vecc{s}}{N
\setminus R}\right] - \left[\p{}{\vecc{s}'}{} - \p{}{\vecc{s}'}{N \setminus R}\right] 
=\p{}{\vecc{s}}{} - \p{}{\vecc{s}'}{},
$$
where the first equality is due to the definition of partial
potentials~\eqref{eq:partial_pot_def}. 

Finally, due to~\cref{thelemma}  (via Lemma~4.1 of Christodoulou et al.~\cite{cggs2018}) we know that $\varPhi$ is indeed an $\alpha$-approximate potential, that is, 
$$
\p{}{\vecc{s}}{}-\p{}{\vecc{s}'}{}  \geq
\cu{\vecc{s}}- \alpha \cu{\vecc{s}'}
$$
given that $\vecc s$ and $\vecc s'$ differ only on the strategy of player $u$.
\end{proof}

\subsection{Description of the Algorithm}
\label{subseq:algorithm}

We shall now describe our algorithm for finding $d^{d+o(d)}$-approximate equilibria
in weighted congestion games of degree $d$ (see \cref{main_alg}). We remark, once again, that it is inspired by a similar algorithm by Caragiannis et al.~\cite{Caragiannis2015a}. However, a critical difference is that our algorithm runs directly in the \emph{actual} game (using the original cost functions, and thus, players' deviations that are best-responses with respect to the actual game); as a result, we also need to appropriately calibrate the original parameters from Caragiannis et al.~\cite{Caragiannis2015a}. 
First, we fix the following constant that essentially captures our \emph{target approximation factor}:
\begin{equation}
\label{eq:pdef}
p := (2d+3)(d+1)(4d)^{d+1}=d^{d+o(d)}.
\end{equation}
The following lemma captures a critical property of the above parameter, which is the one that will essentially give rise to the specific approximation factor of our algorithm (see \cref{equil}).
\begin{lemma} \label{lemma_p_gamma}
For any positive integer $d$, the parameter $p$ defined in \eqref{eq:pdef} satisfies the following property:
$$
p \geq (2\alpha+1)\alpha\cdot\poa_d\left(\alpha+\frac{1}{p}\right)
=
(2\alpha+1)\alpha\Phi_{d,\alpha+\frac{1}{p}}^{d+1}
$$
where $\alpha$ is given in \eqref{eq:def_alphas}.
\end{lemma}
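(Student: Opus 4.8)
The plan is to reduce the claim to a single clean scalar inequality about the generalized golden ratio and then dispatch it using \cref{lem:phi_lambert}. First I would substitute $\alpha = d+1$, so that $(2\alpha+1)\alpha = (2d+3)(d+1)$ and the target parameter factors exactly as $p = (2\alpha+1)\alpha\,(4d)^{d+1}$. The second equality in the statement, $\poa_d(\rho)=\grat{d,\rho}^{d+1}$, is precisely the tight PoA characterization furnished by \cref{th:poa_upper} together with \cref{th:poa_lower}, so no extra work is needed there. Dividing the desired inequality $p \geq (2\alpha+1)\alpha\,\grat{d,\alpha+1/p}^{d+1}$ through by the positive factor $(2\alpha+1)\alpha$, the whole statement collapses to $(4d)^{d+1}\geq \grat{d,\alpha+1/p}^{d+1}$; that is, it suffices to prove
$$
\grat{d,\,\alpha+\tfrac{1}{p}} \leq 4d.
$$

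Next I would invoke \cref{lem:phi_lambert} with $\rho := \alpha + \tfrac1p = d+1+\tfrac1p$, which gives $\grat{d,\rho} \leq d/\lambertW{d/\rho}$. Hence it suffices to show $d/\lambertW{d/\rho}\leq 4d$, equivalently $\lambertW{d/\rho}\geq \tfrac14$. To control $\rho$ I only need the crude estimate $p\geq 1$ (immediate from the definition, since already $p\geq 40$ for $d=1$), which yields $\tfrac1p\leq 1$ and therefore $\rho\leq d+2$. Consequently $d/\rho \geq d/(d+2) \geq \tfrac13$ for every integer $d\geq 1$.

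Finally, since the Lambert-W function is increasing on $\R_{\geq 0}$, the bound $d/\rho\geq\tfrac13$ gives $\lambertW{d/\rho}\geq\lambertW{1/3}$, and a direct numerical check closes the argument: because $\tfrac14 e^{1/4}\approx 0.321 < \tfrac13$ and $x\mapsto x e^{x}$ is increasing, we get $\lambertW{1/3}>\tfrac14$. Chaining these estimates, $\grat{d,\rho}\leq d/\lambertW{d/\rho}\leq d/\lambertW{1/3} < 4d$, which is exactly what was needed.

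The only subtle point is the self-referential appearance of $p$ inside the argument $\rho=\alpha+\tfrac1p$ of the PoA term; but this causes no circularity, because the claimed inequality requires only an \emph{upper} bound on $\grat{d,\rho}$, and through \cref{lem:phi_lambert} this in turn requires only an \emph{upper} bound on $\rho$, for which the trivial estimate $p\geq 1$ more than suffices. The remaining ``hard part'' is therefore purely the elementary verification that $\lambertW{1/3}>\tfrac14$, which is genuinely a one-line numerical estimate.
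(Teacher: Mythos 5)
Your proof is correct and essentially coincides with the paper's: both reduce the claim to showing $\grat{d,\alpha+1/p}^{d+1}\leq(4d)^{d+1}$ via the Lambert-W bound of \cref{lem:phi_lambert}/\cref{th:poa_upper}, the estimate $d/(d+2)\geq 1/3$, and the numerical check $\lambertW{1/3}>1/4$, with the equality handled identically through the tight characterization of \cref{th:poa_upper} and \cref{th:poa_lower}. The only cosmetic difference is that the paper first invokes monotonicity of $\poa_d(\rho)$ in $\rho$ to pass from $\alpha+1/p$ to $d+2$ before applying the Lambert-W bound, whereas you apply \cref{lem:phi_lambert} at $\rho=\alpha+1/p$ directly and use monotonicity of $\lambertWnoarg$ instead; both hinge on the same trivial bound $p\geq 1$.
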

\begin{proof}
First, recall from~\eqref{eq:def_alphas} that $\alpha = d + 1$.
Next, we note that the Lambert-W function is increasing on the positive reals (see Corless et al.~\cite{Corless1996}) and so, for any $d\geq 1$, 
$$
\lambertW{\frac{d}{d + 2}} \geq \lambertW{\frac{1}{3}} \approx 0.258 > \frac{1}{4}.
$$
Furthermore, the $\rho$-approximate Price of Anarchy $\poa_d(\rho)$ is also nondecreasing with respect to the approximation parameter $\rho\geq 1$ (since the set of allowable approximate equilibria gets larger; see \cref{sec:notation}). Thus, from \cref{th:poa_upper} we can see that
$$
\poa_d\left(\alpha+\frac{1}{p}\right)
\leq 
\poa_d\left(d+2\right)
\leq \left[\frac{d}{\lambertW{\frac{d}{d + 2}}} \right]^{d+1}
\leq (4d)^{d+1}.
$$
From this, we can finally get that indeed
$$
(2d+3)(d+1)\poa_d\left(\alpha+\frac{1}{p}\right) \leq (2d+3)(d+1)(4d)^{d+1}=p.
$$
The equality in the statement of our lemma is just a consequence of the fact that the $\rho$-PoA is exactly equal to $\Phi_{d,\rho}^{d+1}$ (from \cref{sec:poa}; see \cref{th:poa_upper} and \cref{th:poa_lower}).
\end{proof}

\begin{algorithm}[t]
\caption{Computing $d^{d+o(d)}$-approximate equilibria in weighted polynomial congestion games of degree $d$}
\label{main_alg}
\begin{algorithmic}[1]
	\INPUT A polynomial weighted congestion game $\Gamma$ of degree $d$; A state $\vecc{s}_{\text{init}}$ of $\Gamma$
	\OUTPUT A $d^{d+o(d)}$-equilibrium state $\vecc{s}$
		\State $c_{\max} \gets \max_{u \in N}\cu{\vecc{s}_{\text{init}}}$, $c_{\min} \gets \min_{u \in N} \cu{\vecc{0}_{-u},\br{\vecc 0}}$, $m \gets \log \frac{c_{\max}}{c_{\min}}$
		\State $g \gets n p^{3} (1 + m(1 + p))^{d} d^{d}+1$\label{algo_g_def} 
		\Comment{$p$ is defined in \eqref{eq:pdef}}
		\State $b_i \gets g^{-i} c_{\max}$ \quad for all $i=1,\dots,m$
		\State $\vecc{s} \gets \vecc{s}_{\text{init}}$
		\State $N_{\text{fixed}} \gets \emptyset$
		\While {$\exists\, u \in N: \; \cu{\vecc{s}} \geq b_1 \;\land\; u\; \text{has an $(\alpha + \frac{1}{p})$-move}$}
			\State $\vecc{s} \gets (\vecc{s}_{-u}, \br{\vecc{s}})$
		\EndWhile
		\For {$i = 1$ to $m - 1$}
			\While { %
			 $\exists u \in N\setminus N_{\text{fixed}}:$ \par 
			 \hskip\algorithmicindent $\Big[\cu{\vecc{s}}\in [b_{i+1}, b_i) \;\land\; u\;\text{has an $(\alpha + \frac{1}{p})$-move}\Big] \biglor
			 \Big[\cu{\vecc{s}} \geq b_i\;\land\; u\;\text{has a $p$-move}\Big]$} \label{algo_improvements}
			\State $\vecc{s} \gets (\vecc{s}_{-u}, \br{\vecc{s}})$ \label{algo_improvements_gets}
		\EndWhile
		\State $N_{\text{fixed}}\gets N_{\text{fixed}}\union\sset{u \in N \setminus N_{\text{fixed}} : \cu{\vecc{s}} \geq b_i}$ \label{algo_nfixed}
		\EndFor
		\smallskip
		\State $N_{\text{fixed}}\gets N_{\text{fixed}}\union\sset{u \in N \setminus N_{\text{fixed}} : \cu{\vecc{s}} \geq b_m}$
\end{algorithmic}
\end{algorithm}
The input to the algorithm consists of the description of a weighted polynomial congestion game
$\Gamma$ of fixed degree $d$ and an (arbitrary) initial state $\vecc{s}_{\text{init}}$ of $\Gamma$. 
We now let $c_{\max} := \max_{u \in N}
\cu{\vecc{s}_{\text{init}}}$ and $c_{\min} := \min_{u \in N} \cu{\vecc{0}_{-u},
\br{\vecc{0}}}$. Here, $\vecc{0}$ denotes the ``empty state'', that is, a
fictitious state in which all players' strategies are empty sets. Recall also
(see~\cref{sec:notation}) that $\br{\cdot}$ returns a best-response move of player
$u$ at a given state. Observe that $c_{\max}$ is the maximum cost of a player in
state $\vecc{s}_{\text{init}}$, and $c_{\min}$ can be used as a lower bound on the
cost of any player in any state of our game. We also define parameter $m =
\log \frac{c_{\max}}{c_{\min}}$. Notice that $m$ is polynomial on the input (that
is, the description of our game $\Gamma$). Next, we introduce a factor $g$ (see
\cref{algo_g_def} of \cref{main_alg}), that depends (polynomially) on both the
aforementioned parameter $m$ of our game and the approximation factor $p$. Using
this, we set $m+1$ ``boundaries'' $c_{\max}=b_0,b_1, \ldots, b_m$ so that $b_i =
g^{-i} c_{\max}$ for the player costs (see discussion below).

The algorithm runs in $m$ \emph{phases}, indexed by $i=0,1,\dots,m-1$. It is
helpful to introduce here the following notation. We denote by $\vecc{s}^{i}$ the
state of the game immediately after the end of phase $i$. Furthermore, let $R_i$ be
the set of players who were at least once selected by the algorithm to make an
improvement move during phase $i$.

Phase $i=0$ is ``preparatory'': we repeatedly select a player whose cost exceeds the
largest (non-trivial) boundary $b_1$ and who has an $(\alpha + p^{-1})$-move, and
allow her to make a best-response move. This phase ends when there are no such
players left.

Next follow the ``main'' $m-1$ phases.  The algorithm constructs the final state
$\vecc s^{m-1}$, which is the sought $d^{d+o(d)}$-approximate equilibrium, step by
step. That is, it finalizes players' strategies in ``packets'' rather than
one-by-one. In particular, during each phase it fixes the strategies of a certain
group of players, and never changes them again.

The $i$-th phase itself, consists of a sequence of best-responses. We allow players
(who are not ``fixed'' yet) to repeatedly make best-response moves, according to a
certain rule: if a player's current cost is ``large'', i.e.\ lies in  $[b_i,\infty)$, she
is allowed to best-respond only if she can improve at least by a factor of $p$ (that is, she
has a $p$-move); if, however, her cost is ``small'', i.e.\ in  $[b_{i+1},b_i)$, she
can play even if she only has an $(\alpha + p^{-1})$-move.\footnote{One can think of
this second type of moves as ``preparatory'' for the next phase. Indeed, they ensure
that immediately before the beginning of the $(i+1)$-th phase, players who are not
yet fixed and whose costs exceed $b_{i+1}$ are in an $(\alpha +
p^{-1})$-equilibrium. We can expect such players to perform relatively few $p$-moves
during the $(i+1)$-th phase, since the approximation factor $\alpha + p^{-1}$ is
significantly smaller than $p$ (see \eqref{eq:pdef}).} The phase ends when there are
no players left to make a deviation according to the above rule.
At this point, the algorithm fixes the strategies of
all players whose \emph{current} cost lies in $[b_i,\infty)$, and adds them to the
set $N_{\text{fixed}}$.

Immediately after the end of phase $m-1$, the algorithm returns the current state of
the game and terminates. Observe that after the final phase of the algorithm, all
players have been included in $N_{\text{fixed}}$: using the definitions of $m$ and
$b_m$, it is not difficult to see that $b_m \leq c_{\min}$.

\subsection{Running Time} \label{subseq:runtime}
In this section, we prove the ``Key Property'' of our algorithm (\cref{thm}), and
then establish \cref{runtime} which guarantees that the runtime of the algorithm is
polynomial. We remark here that we assume, for the runtime analysis, that
best-response strategies are efficiently computable.\footnote{Formally, we want
function $\br{\vecc s_{-u}}$ to be computable in polynomial time, for any player $u$ and
any strategies $\vecc s_{-u}$ of the other players. This assumption is necessary in
general, since the number of strategies of a player can, in principle, be
exponential in the input size. Think, for example, of source-sink paths in network
congestion games. Of course, if the strategy sets of the players have polynomial
size in the first place, then the time-efficiency of best-responses comes for free
anyway.} One should also keep in mind that the degree $d$ of the game is considered
a \emph{constant}. However, with respect to the analysis of the approximation
guarantee of the algorithm (see the following~\cref{subseq:approxFactor}), it will
actually be treated as a parameter: the approximation factor will be given as a
function of $d$. We want to emphasize here that all these assumptions are standard
in the literature of algorithms for approximate equilibria in congestion games (see
e.g. \cite{Caragiannis2011,Caragiannis2015a,Feldotto2014,Feldotto2017}).

Recall that $R_i$ is the set of players that made at least one improvement move
during the $i$-th phase of the algorithm. In the following it will be useful to
denote by $C_u(i)$, for any $u\in R_i$, the cost of player $u$ immediately after her
last move within phase $i$. Then we can prove the following upper bound on the
partial potential of $R_i$ immediately after phase $i$:

\begin{lemma} \label{lemmaKeyProperty} 
For every phase $i$, $\p{R_i}{\vecc{s}^i}{} \leq
\alpha \sum_{u \in R_i} C_u(i)$.
\end{lemma}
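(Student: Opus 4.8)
The plan is to bound the partial potential $\p{R_i}{\vecc{s}^i}{}$ at the end of phase $i$ in terms of the final individual costs $C_u(i)$ of the players in $R_i$. The natural starting point is \cref{alphalemma}, which already gives $\p{R_i}{\vecc{s}^i}{} \leq \alpha \sum_{u \in R_i} \cu{\vecc{s}^i}$. If I could argue that $\cu{\vecc{s}^i} \leq C_u(i)$ for every $u \in R_i$, the claim would follow immediately. However, this is where the subtlety lies: $C_u(i)$ is defined as the cost of player $u$ \emph{immediately after her last move within phase $i$}, whereas $\cu{\vecc{s}^i}$ is her cost at the very end of the phase, after other players in $R_i$ may have continued to move. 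Those subsequent moves could in principle alter the load on the resources $u$ is using, and thus change $u$'s cost. So the main obstacle is controlling how a player's cost can drift after her final move of the phase.

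The key observation I would invoke is that after player $u$ makes her last move in phase $i$, every subsequent deviation in that phase is performed by \emph{other} players who are themselves making best-response moves. First I would note that \cref{alphalemma} reduces the goal to showing $\sum_{u \in R_i}\cu{\vecc{s}^i} \leq \sum_{u\in R_i} C_u(i)$ — or more robustly, I would try to establish the potential bound directly. The cleaner route is to track the full-game potential $\p{}{\cdot}{}$. Since $\p{R_i}{\vecc{s}^i}{} = \p{}{\vecc{s}^i}{} - \p{}{\vecc{s}^i}{N\setminus R_i}$ and the players in $N\setminus R_i$ did not move during phase $i$ (hence their contribution is unchanged), I can relate the partial potential of $R_i$ at the end of the phase to the sum of potential \emph{drops} caused by the moves of $R_i$-players during the phase, via a telescoping argument built on \cref{potentialProp}.

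The heart of the argument, then, is the following telescoping. Order the moves of phase $i$ chronologically; each move is made by some player in $R_i$, and by \cref{potentialProp} applied with the set $R_i$, a move of player $u$ from state $\vecc{s}$ to $\vecc{s}'$ satisfies $\p{R_i}{\vecc{s}}{} - \p{R_i}{\vecc{s}'}{} \geq \cu{\vecc{s}} - \alpha\cu{\vecc{s}'}$. Summing over all moves telescopes the left side to $\p{R_i}{\vecc{s}^{i-1}}{} - \p{R_i}{\vecc{s}^i}{}$ (or to a bound starting from the phase's initial state). I would then argue that the right-hand-side contributions can be rearranged so that each player's \emph{final} cost $C_u(i)$ is what survives: for a player's non-final moves the new cost reappears as the old cost of her next move and cancels favorably, while her very last move leaves $-\alpha C_u(i)$ uncancelled. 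Rearranging yields exactly $\p{R_i}{\vecc{s}^i}{} \leq \alpha\sum_{u\in R_i} C_u(i)$ (possibly after discarding nonnegative terms and using that the partial potential is nonnegative). The delicate bookkeeping will be matching up consecutive moves of the \emph{same} player so the cost terms cancel, and handling the first move of each player; I expect this cancellation structure to be the step requiring the most care, since the approximate (factor-$\alpha$) nature of \cref{potentialProp} means the cancellations are not exact but must be steered so that the leftover terms combine into the clean $\alpha\sum_u C_u(i)$ bound.
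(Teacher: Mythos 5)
Your reduction via \cref{alphalemma} correctly identifies the obstacle (a player's cost can drift after her last move of the phase), but the telescoping-over-moves argument you propose to get around it cannot be made to work. Summing \cref{potentialProp} over the $\tau$ moves of phase $i$ yields
\[
\p{R_i}{\vecc{s}^i}{} \;\leq\; \p{R_i}{\vecc{s}^{i-1}}{} \;-\; \sum_{j=1}^{\tau}\left[\cuj{\vecc{t}_{j-1}} - \alpha\, \cuj{\vecc{t}_j}\right],
\]
so any bound you extract is anchored to the \emph{initial} partial potential $\p{R_i}{\vecc{s}^{i-1}}{}$ --- a quantity the target inequality must not involve, and which is not controlled by $\alpha\sum_{u\in R_i}C_u(i)$; indeed, bounding $\p{R_i}{\vecc{s}^{i-1}}{}$ is precisely the content of the much harder \cref{thm}, whose proof relies on \cref{lemmaKeyProperty}, so assuming such control here would be circular. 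The simplest instance already shows the residue does not vanish: one player, one move, cost $c_0$ before and $c_1$ after. The display above gives $\p{R_i}{\vecc{s}^i}{} \leq \p{R_i}{\vecc{s}^{i-1}}{} - c_0 + \alpha c_1$, while \cref{alphalemma} only guarantees $c_0 \leq \p{R_i}{\vecc{s}^{i-1}}{} \leq \alpha c_0$, and this is not slack: for $c_e(x)=x^d$ and unit weight, a single player's partial potential equals $\frac{d+3}{2}$ times her cost, so the leftover $\p{R_i}{\vecc{s}^{i-1}}{}-c_0$ is of order $d\cdot c_0$ and cannot be discarded. Your cancellation scheme suffers from the same defect: pairing $+\alpha\cu{\vecc{t}_{j_l}}$ (after a non-final move of $u$) with $-\cu{\vecc{t}_{j_{l+1}-1}}$ (before her next move) leaves, even with no drift whatsoever, a residue $(\alpha-1)\cu{\vecc{t}_{j_l}}\geq 0$ on the wrong side of the inequality --- the factor mismatch between $\alpha$ and $1$ makes these pairings unfavorable, not favorable. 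The approach therefore proves only a strictly weaker statement.

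The paper's proof avoids the dynamics of the phase entirely. Order the players $u_1,\dots,u_{|R_i|}$ of $R_i$ by the time of their \emph{last} move, let $\vecc{s}^{i,j}$ be the state right after $u_j$'s last move, and let $R_i^j=\{u_{j+1},\dots,u_{|R_i|}\}$. Then decompose the final potential as a telescoping sum over the \emph{player set}, not over moves:
\[
\p{R_i}{\vecc{s}^i}{} \;=\; \sum_{j=1}^{|R_i|}\left[\p{}{\vecc{s}^i}{N\setminus R_i^j} - \p{}{\vecc{s}^i}{N\setminus R_i^{j-1}}\right].
\]
Every player counted in the $j$-th increment (every player in $N\setminus R_i^j$) is frozen from $\vecc{s}^{i,j}$ onwards, so the increment can be evaluated at $\vecc{s}^{i,j}$ instead of $\vecc{s}^i$; and since $N\setminus R_i^j=(N\setminus R_i^{j-1})\cup\{u_j\}$, \cref{thelemma} (not \cref{potentialProp}) together with the monotonicity of $c_e$ bounds it by $\alpha\,\cuj{\vecc{s}^{i,j}} = \alpha\, C_{u_j}(i)$. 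Summing over $j$ gives the lemma. This static, per-player decomposition of the final-state potential is the idea your proposal is missing.
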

\begin{proof}
Rename players in $R_i$ as $u_1, ..., u_{|R_i|}$ 
in increasing order of time of their last move 
during phase $i$. Call $\vecc{s}^{i,j}$ the state immediately after player $u_j$ made her
last move during phase $i$. 
Define $R_i^j := \{u_{j+1}, ..., u_{|R_i|}\}$, the set of players who moved during phase $i$ \emph{after}  state $\vecc{s}^{i,j}$. 
Then, using the definition of the partial potential \eqref{eq:partial_pot_def}, we can obtain the telescoping sum 
\begin{equation}
\label{eq:telescope_helper}
\p{R_i}{\vecc{s}^i}{} = \p{}{\vecc{s}^i}{} - \p{}{\vecc{s}^i}{N \setminus R_i}
= \sum_{j=1}^{|R_i|} \left[\p{}{\vecc{s}^i}{N \setminus R_i^j} 
- \p{}{\vecc{s}^i}{N \setminus R_i^{j-1}} \right] 
\end{equation}

We now consider each term of the above sum separately. For any $j =1, \dots, |R_i|$
and any resource $e$, we have $\xe{N \setminus R_i^j}{\vecc{s}^{i,j}} = \xe{N \setminus
R_i^j}{\vecc{s}^{i}}$ and $\xe{N \setminus R_i^{j-1}}{\vecc{s}^{i,j}} = \xe{N \setminus
R_i^{j-1}}{\vecc{s}^{i}}$. Indeed, only players in $R_i^j$ made moves between state
$\vecc{s}^{i,j}$ and the end of phase $i$, and consequently, players in $N \setminus R_i^j$
and in $N \setminus R_i^{j-1} \subseteq N \setminus R_i^j$ never changed their
strategies between states $\vecc{s}^{i,j}$ and $\vecc{s}^{i}$. Thus,
\begin{align*}
\p{}{\vecc{s}^i}{N \setminus R_i^j} 
- \p{}{\vecc{s}^i}{N \setminus R_i^{j-1}} 
& = \sum_{e\in E} \left[\pe{\xe{N \setminus R_i^j}{\vecc{s}^{i}}} 
- \pe{\xe{N \setminus R_i^{j - 1}}{\vecc{s}^{i}}}\right]\\
&= \sum_{e\in E} \left[\pe{\xe{N \setminus R_i^j}{\vecc{s}^{i,j}}} 
- \pe{\xe{N \setminus R_i^{j - 1}}{\vecc{s}^{i,j}}}\right]
\end{align*}
At this point, we use that $N \setminus R_i^j = (N \setminus R_i^{j-1}) \union \{u_j\}$, paired with \cref{thelemma} and the monotonicity of $c_e$, to further obtain
\begin{align*}
\p{}{\vecc{s}^i}{N \setminus R_i^j} 
- \p{}{\vecc{s}^i}{N \setminus R_i^{j-1}} 
& = \sum_{e\in E} \left[\pe{\xe{N \setminus R_i^{j-1}}{\vecc{s}^{i,j}} + \xe{u_j}{\vecc{s}^{i,j}}} 
- \pe{\xe{N \setminus R_i^{j-1}}{\vecc{s}^{i,j}}}\right] \\ 
& \leq \sum_{e\in E} \alpha \xe{u_j}{\vecc{s}^{i,j}} \cdot 
\ce{\xe{N \setminus R_i^{j-1}}{\vecc{s}^{i,j}} 
+ \xe{u_j}{\vecc{s}^{i,j}}} \\ 
& \leq \alpha\sum_{e\in E}  \xe{u_j}{\vecc{s}^{i,j}} \cdot \ce{x_e(\vecc{s}^{i,j})} \\
& = \alpha  C_{u_j}(\vecc{s}^{i,j}).
\end{align*}
Applying this bound to each term in the sum in \eqref{eq:telescope_helper}, we finally get the desired inequality
\begin{align*}
\p{R_i}{\vecc{s}^i}{} 
\leq \, \sum_{j=1}^{|R_i|} \alpha \, C_{u_j}(\vecc{s}^{i,j})
= \alpha \sum_{u \in R_i} C_u(i),
\end{align*}
where for the last equality we used the definition of $C_u(i)$.
\end{proof}

The following result is necessary to prove \cref{thm} and is essentially the
cornerstone of the approach to computing approximate equilibria that our algorithm
takes. Namely, \cref{rhoStretch} considers a group of players in an approximate
equilibrium at some state $\vecc{s}$ and shows that the potential of $\vecc{s}$ is
at most a fixed factor away from the potential of any other state that only differs
from $\vecc{s}$ in the strategies of players in that group. In particular, this
factor turns out to be of the order of the $\rho$-PoA (see
also~\cref{th:poa_upper}).

\begin{lemma} \label{rhoStretch} 
For any group of players $R \subseteq N$, consider any two states $\vecc{s}$ and
$\vecc{s}'$ such that $\vecc{s}$ is a $\rho$-equilibrium state for the players in
$R$ and every player in $N \setminus R$ uses the same strategy in $\vecc{s}$ and
$\vecc{s}'$. Then $\p{R}{\vecc{s}}{} \leq \alpha  \grat{d,\rho}^{d+1} 
\p{R}{\vecc{s}'}{}$.
\end{lemma}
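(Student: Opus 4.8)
The plan is to reduce this potential-ratio bound to the social-cost ratio bound already established in \cref{partialpoa}, by sandwiching each partial potential between constant multiples of the corresponding group social cost via \cref{alphalemma}. Concretely, I would first invoke the \emph{upper} bound of \cref{alphalemma}, applied to the group $R$ in state $\vecc{s}$, to obtain $\p{R}{\vecc{s}}{} \leq \alpha \sum_{u\in R}\cu{\vecc{s}} = \alpha\,\scsum_R(\vecc{s})$.

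Next, I would observe that the hypotheses of \cref{partialpoa} are satisfied verbatim by the pair $(\vecc{s},\vecc{s}')$: by assumption $\vecc{s}$ is a $\rho$-equilibrium for the players in $R$, and every player in $N\setminus R$ plays identically in $\vecc{s}$ and $\vecc{s}'$. Hence \cref{partialpoa}, applied with $\vecc{s}^{*}:=\vecc{s}'$, yields $\scsum_R(\vecc{s}) \leq \grat{d,\rho}^{d+1}\,\scsum_R(\vecc{s}')$. Finally I would apply the \emph{lower} bound of \cref{alphalemma}, this time to the state $\vecc{s}'$, which gives $\scsum_R(\vecc{s}') = \sum_{u\in R}\cu{\vecc{s}'} \leq \p{R}{\vecc{s}'}{}$. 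Chaining these three inequalities produces $\p{R}{\vecc{s}}{} \leq \alpha\,\grat{d,\rho}^{d+1}\,\p{R}{\vecc{s}'}{}$, exactly as claimed.

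I expect essentially no obstacle here beyond bookkeeping: the only point that needs checking is that the $\rho$-equilibrium condition and the ``common strategy on $N\setminus R$'' condition transfer directly to the inputs of \cref{partialpoa}, which they do by hypothesis. The conceptual content to flag is the origin of the extra factor $\alpha$, which is absent from the pure social-cost bound of \cref{partialpoa}. This factor is the unavoidable price of working with an \emph{approximate} potential rather than an exact one: \cref{alphalemma} only brackets the partial potential two-sidedly between $\scsum_R$ and $\alpha\,\scsum_R$, so using its upper bound on $\vecc{s}$ together with its lower bound on $\vecc{s}'$ loses precisely one factor of $\alpha$ relative to the underlying PoA bound.
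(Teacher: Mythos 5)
Your proposal is correct and matches the paper's own proof exactly: both chain the upper bound of \cref{alphalemma} at $\vecc{s}$, the group PoA bound of \cref{partialpoa} with $\vecc{s}^{*}=\vecc{s}'$, and the lower bound of \cref{alphalemma} at $\vecc{s}'$. Your remark on the factor $\alpha$ being the price of the approximate potential is also the right way to understand where the bound loses relative to \cref{partialpoa}.
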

\begin{proof}
Utilizing~\cref{alphalemma} we can bound the ratio of the partial potentials by 
$$\frac{\p{R}{\vecc{s}}{}}{\p{R}{\vecc{s}'}{}} 
\leq \frac{\alpha \sum_{u \in
R} \cu{\vecc{s}}}{\sum_{u \in R} \cu{\vecc{s}'}}  
\leq \alpha \, \grat{d,\rho}^{d+1},$$ where the
last inequality holds from by \cref{partialpoa}.
\end{proof}

Finally, the following theorem establishes that the partial potential of the players
$R_i$, who are going to move during the $i$-th phase, is linearly-bounded by the
$i$-th cost boundary $b_i$ at the beginning of that phase. This result will be used in two ways: on one hand, it is the
basis of the proof of \cref{runtime}, which guarantees polynomial runtime of the
algorithm; on the other hand, \cref{thm} will be used in \cref{subseq:approxFactor}
as well, to bound the approximation factor of the algorithm (\cref{relate_fix_end}
in particular).
\begin{theorem}[Key Property] \label{thm} 
For every phase $i \geq 1$ of the algorithm,
$\p{R_i}{\vecc{s}^{i-1}}{} \leq npb_i.$
\end{theorem}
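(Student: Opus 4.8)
The plan is to prove the Key Property $\p{R_i}{\vecc{s}^{i-1}}{} \leq npb_i$ by bounding the partial potential of the group $R_i$ at the \emph{start} of phase $i$ in terms of the state at the end of the \emph{previous} phase. The natural strategy is to relate the quantity we want to bound, $\p{R_i}{\vecc{s}^{i-1}}{}$, to the partial potential $\p{R_{i-1}}{\vecc{s}^{i-1}}{}$ of the previous phase's moving players, for which we already have good control via~\cref{lemmaKeyProperty}. First I would observe that the only players who can move during phase $i$ are those not yet fixed, and crucially, the players in $R_i$ all have cost below the boundary $b_i$ at state $\vecc{s}^{i-1}$ (otherwise they would already have been placed in $N_{\text{fixed}}$ at the end of phase $i-1$, by line~\ref{algo_nfixed} of the algorithm). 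This immediately gives, via~\cref{alphalemma}, a bound of the form
$$
\p{R_i}{\vecc{s}^{i-1}}{} \leq \alpha \sum_{u \in R_i} \cu{\vecc{s}^{i-1}} < \alpha \, |R_i| \, b_i \leq \alpha\, n\, b_i,
$$
but this has an extra factor of $\alpha = d+1$, so a more refined argument is needed to get the clean factor of $n$ stated in the theorem.

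The refinement is to combine this cost-based bound with the structural information available from the preceding phase. The key idea will be an induction on the phase index $i$. For the base case $i=1$, the state $\vecc{s}^0$ is produced by the preparatory phase $0$, after which no player with cost $\geq b_1$ has an $(\alpha+\frac1p)$-move; since $R_1$ consists of players with cost below $b_1$, I expect to bound $\p{R_1}{\vecc{s}^0}{}$ directly by $n b_1$ using the cost threshold together with the fact that each player's cost is controlled by the boundaries. For the inductive step, the plan is to split $R_i$ according to whether a player was active in the previous phase or only becomes active in phase $i$, and to feed the bound from~\cref{lemmaKeyProperty} (which gives $\p{R_{i-1}}{\vecc{s}^{i-1}}{} \leq \alpha\sum_{u\in R_{i-1}} C_u(i-1)$) through the observation that the costs $C_u(i-1)$ at the moment of last move in phase $i-1$ are themselves bounded by the boundaries, since by the phase rule a player with cost in $[b_i,b_{i-1})$ only moves on $(\alpha+\frac1p)$-moves and a player with cost $\geq b_{i-1}$ only moves on $p$-moves.

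The hard part will be getting the constant right: tracking how the geometric gap $g$ between consecutive boundaries $b_{i-1} = g\, b_i$ absorbs the factors of $\alpha$, $p$, and $m$ that accumulate across phases, so that the final bound collapses to exactly $npb_i$ rather than something larger. In particular, I expect the argument to rely on the enormous separation built into the definition of $g$ in line~\ref{algo_g_def} (which contains factors of $p^3$, $(1+m(1+p))^d$, and $d^d$): the point is that when we convert potentials at state $\vecc{s}^{i-1}$ back to costs and compare across the boundary, the ratio $b_{i-1}/b_i = g$ is chosen precisely to swamp all the lower-order multiplicative constants introduced by~\cref{alphalemma}, \cref{potentialProp}, and the $\rho$-PoA bound of~\cref{rhoStretch}. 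So the technical crux is a careful bookkeeping argument: express $\p{R_i}{\vecc{s}^{i-1}}{}$ via the telescoping/partial-potential machinery, bound each contribution by a boundary value, and verify that the total is dominated by $npb_i$ thanks to the slack in $g$. I would expect to invoke~\cref{rhoStretch} with $\rho = \alpha + \frac1p$ to relate the potential at $\vecc{s}^{i-1}$ to the potential at a comparison state, since at the end of each phase the non-fixed high-cost players are in an $(\alpha+\frac1p)$-equilibrium, and this is where the $\poa_d(\alpha+\frac1p)$ factor enters — which is exactly the factor that~\cref{lemma_p_gamma} was designed to tame by the choice of $p$.
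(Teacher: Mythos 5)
Your proposal has a genuine gap, and it begins at the very first step. Players in $R_i$ do \emph{not} all have cost below $b_i$ at state $\vecc{s}^{i-1}$: the fixing rule at the end of phase $i-1$ (\cref{algo_nfixed} of \cref{main_alg}) removes only players whose cost is at least $b_{i-1}$, so a player in $R_i$ may enter phase $i$ with cost anywhere below $b_{i-1} = g\, b_i$, where $g$ is astronomically larger than $np$. Hence your ``immediate bound'' via \cref{alphalemma} is really $\alpha n g b_i$, not $\alpha n b_i$, and it is useless; the theorem is not ``nearly trivial up to a factor $\alpha$''. The same error undermines the base case of your induction ($R_1$ may contain players whose cost at $\vecc{s}^{0}$ exceeds $b_1$, since they can acquire $p$-moves mid-phase), and the inductive step fails for a second reason: for a player whose last move in phase $i-1$ was a $p$-move, the post-move cost $C_u(i-1)$ admits no upper bound in terms of the boundaries (only her \emph{pre}-move cost is known to be at least $b_{i-1}$), so the claim that ``the costs $C_u(i-1)$ are bounded by the boundaries'' is unavailable exactly where it is needed. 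Finally, the role you assign to $g$ is backwards: enlarging the gap $g$ makes every cost-threshold bound \emph{weaker}, and in fact the paper's proof of \cref{thm} never uses $g$ at all --- $g$ is needed for the runtime bound (\cref{runtime}) and for \cref{relate_fix_end}, not here.

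The actual argument is a proof by contradiction driven by the potential \emph{decrease} during phase $i$ itself, not bookkeeping across phase $i-1$. Assume $\p{R_i}{\vecc{s}^{i-1}}{} > npb_i$. Split $R_i = P_i \cup Q_i$ according to the type of each player's \emph{last move within phase $i$} ($p$-move versus $(\alpha+p^{-1})$-move) --- not according to activity in the previous phase. Each last move of a player in $P_i$ decreases the partial potential by at least $(p-\alpha)C_u(i)$ (\cref{potentialProp}), while \cref{lemmaKeyProperty} gives $\p{R_i}{\vecc{s}^{i}}{} \leq \alpha\sum_{u\in R_i}C_u(i)$ and every player in $Q_i$ has $C_u(i) < b_i$; combining these yields $\p{R_i}{\vecc{s}^{i}}{} < \frac{2\alpha}{p}\,\p{R_i}{\vecc{s}^{i-1}}{}$ under the assumption. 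To contradict this one wants \cref{rhoStretch} with $\rho = \alpha + p^{-1}$ together with \cref{lemma_p_gamma} --- ingredients you correctly identified --- but here is the subtlety your sketch misses: the stopping rule of phase $i-1$ only guarantees that players with cost \emph{at least} $b_i$ have no $(\alpha+p^{-1})$-move at $\vecc{s}^{i-1}$; the members of $R_i$ with cost below $b_i$ need not be in any approximate equilibrium there, so \cref{rhoStretch} cannot be applied to the group $R_i$ in the original game. The paper circumvents this with a helper game $\bar\Gamma$: every alternative strategy of each low-cost player $u$ is augmented by a fresh dummy resource $e_u$ of cost $b_i/w_u$, which makes $\bar{\vecc{s}}^{i-1}$ an $(\alpha+p^{-1})$-equilibrium for \emph{all} of $R_i$ while changing the relevant partial potential by at most $nb_i$. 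Running \cref{rhoStretch} in $\bar\Gamma$ and invoking \cref{lemma_p_gamma} then produces the contradiction. Without this construction, or a substitute for it, your outline cannot be completed.
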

\begin{proof}
We split $R_i = P_i \cup Q_i$, where $P_i$ are players in $R_i$ whose last move
during phase $i$ was a $p$-move, and $Q_i$ those whose last move during phase $i$
was an $(\alpha + p^{-1})$-move. Now suppose there were $\tau$ best-response moves
executed during the course of phase $i$ (see \cref{algo_improvements_gets} in
\cref{main_alg}), and denote the corresponding states in the game by $\vecc{s}^{i-1}
= \vecc{t}_0, \vecc{t}_1, \ldots, \vecc{t}_{\tau} = \vecc{s}^i$. Then
\begin{equation} \label{eq:sumTelescoping}
\p{R_i}{\vecc{s}^{i-1}}{} - \p{R_i}{\vecc{s}^{i}}{} 
= \sum_{j=1}^{\tau} \left[ \p{R_i}{\vecc{t}_{j-1}}{} - \p{R_i}{\vecc{t}_j}{}\right].
\end{equation}
Notice here that, due to~\cref{potentialProp} and the fact that $p, \alpha+p^{-1}>
\alpha$, i.e., all best-responses in the above sum are better than $\alpha$-moves,
all terms of the above sum are positive. Thus, keeping only those terms $j$ for
which $\vecc{t}_{j-1} \rightarrow \vecc{t}_j$ corresponds to the last move during
phase $i$ of some player in $P_i$, we get that
\begin{equation} \label{ineqThruSumP}
\p{R_i}{\vecc{s}^{i-1}}{} - \p{R_i}{\vecc{s}^{i}}{} \geq \sum_{u \in P_i} (p-\alpha) C_u(i),
\end{equation}
where recall that $C_u(i)$ denotes the cost of player $u$ after her last move within phase $i$.

By \cref{lemmaKeyProperty}, we also have that
$$
\sum_{u \in P_i} C_u(i) 
= \sum_{u \in R_i} C_u(i) - \sum_{u \in Q_i} C_u(i) 
\geq \frac{1}{\alpha} \p{R_i}{\vecc{s}^i}{} 
- \sum_{u \in Q_i} C_u(i) 
\geq \frac{1}{\alpha} \p{R_i}{\vecc{s}^i}{} 
- n b_i, 
$$
the last inequality holding due to the fact that all players in $Q_i$ have cost at
most $b_i$ at the end of phase $i$. Combining this with \eqref{ineqThruSumP}, and
solving with respect to $\p{R_i}{\vecc{s}^{i}}{} $, we finally get that
$$
 \p{R_i}{\vecc{s}^{i}}{} 
\leq \frac{\alpha}{p}\left[\p{R_i}{\vecc{s}^{i-1}}{} + (p-\alpha) nb_i\right]
\leq \frac{\alpha}{p}\left[\p{R_i}{\vecc{s}^{i-1}}{} + pnb_i\right].
$$
To arrive at a contradiction, and complete the proof of our theorem, from now on assume that $\p{R_i}{\vecc{s}^{i-1}}{} > nb_ip$. Then, the above inequality becomes
\begin{equation} \label{ineqAssumption}
\p{R_i}{\vecc{s}^{i}}{} 
< \frac{2\alpha}{p}
\p{R_i}{\vecc{s}^{i-1}}{}.
\end{equation}

We now introduce some additional notation for the current proof. First, partition
$R_i$ into $X_i$ and $Y_i$, where $X_i$ contains the players in $R_i$ whose cost is
at least $b_i$ at state $\vecc{s}^{i-1}$, and $Y_i$ those whose cost is strictly
less than $b_i$. Secondly, consider a helper game\footnote{This is exactly the same
construction as in the proof of Caragiannis et al.~\cite[Lemma~4.3]{Caragiannis2015a}.} $\bar \Gamma$
with the same set of players $\bar N=N$ and resources $\bar E=E \cup \{e_u: u \in
Y_i\}$. The new, extra resources have costs
$$
\bar c_{e_u}(t) = \frac{b_i}{w_u}\qquad \text{for all}\;\; u \in Y_i,
$$ 
while the old ones have the same costs $\bar c_{e}(t) = c_e(t)$ for all $e\in E$.
All players in $N\setminus Y_i$, have the same strategies in $\bar\Gamma$ as before
in $\Gamma$. However, for the rest we set
$$
\bar S_u=\ssets{s_u}\union \sset{s_u'\union\ssets{e_u}\fwh{s_u'\in
S_u\setminus\ssets{s_u}}}\qquad \text{for all}\;\; u \in Y_i,
$$
where $s_u$ is the strategy that $u$ uses in state $\vecc{s}^{i-1}$ of the original
game $\Gamma$. Denote the potential in $\Gamma'$ by $\bar{\varPhi}$.

We define two special strategy profiles, namely $\bar{\vecc{s}}^{i-1}$ and
$\bar{\vecc{s}}^{i}$, of our new game $\bar\Gamma$.
In state $\bar{\vecc{s}}^{i-1}$, every player uses the same strategy as in state
$\vecc{s}^{i-1}$ of the original game $\Gamma$.
In state $\bar{\vecc{s}}^{i}$, again players use the same strategies as in state
$\vecc{s}^{i}$ of $\Gamma$, except from those players in $Y_i$ that happen to have
different strategies between $\vecc{s}^{i-1}$ and $\vecc{s}^{i}$. In particular, if
a player $u\in Y_i$ uses strategy $s_u$ under profile $\vecc{s}^{i-1}$ and strategy
$s'_u\neq s_u$ at profile $\vecc{s}^{i}$, then we set her strategy at
$\bar{\vecc{s}}^{i}$ to be $s'_u \cup \{e_u\}$.

Then, one can show (see~\cref{app:proof_key}) the following about these states:
first, $\bar{\vecc{s}}^{i-1}$ is an $(\alpha + p^{-1})$-equilibrium (of game
$\bar\Gamma$) for all players in $R_i$; and secondly,
$$
\pbar{R_i}{\bar{\vecc{s}}^{i}}{}
\leq \p{R_i}{\vecc{s}^{i}}{} + nb_i.
$$
Using \eqref{ineqAssumption} and recalling our assumption that
$\p{R_i}{\vecc{s}^{i-1}}{} > nb_ip$, the above inequality gives us that
\begin{equation*}
\pbar{R_i}{\bar{\vecc{s}}^{i}}{} < \frac{2\alpha + 1}{p}
\p{R_i}{\vecc{s}^{i-1}}{}
=
\frac{2\alpha + 1}{p}
\pbar{R_i}{\bar{\vecc{s}}^{i-1}}{}, 
\end{equation*} 
where the last equality is a result of the fact that all players use
the same strategies in $\bar{\vecc{s}}^{i-1}$ and $\vecc{s}^{i-1}$, which consist
only of edges that have exactly the same cost functions at $\Gamma$ and
$\bar\Gamma$.
From~\cref{lemma_p_gamma}, this becomes
$$
\pbar{R_i}{\bar{\vecc{s}}^{i-1}}{} 
> \alpha \Phi_{d,\alpha+\frac{1}{p}}^{d+1} \pbar{R_i}{\bar{\vecc{s}}^{i}}{},
$$ 
which contradicts~\cref{rhoStretch}, since all players in $R_i$ are in an $(\alpha +
p^{-1})$-equilibrium in $\bar{\vecc{s}}^{i-1}$, and every player in $N \setminus
R_i$ has the same strategy in $\bar{\vecc{s}}^{i-1}$ and $\bar{\vecc{s}}^{i}$.
This contradiction concludes our proof.
\end{proof}

Now we are able to prove the main theorem of this section:
\begin{theorem} \label{runtime} 
\cref{main_alg} runs in time polynomial in the number
of bits of the representation of its input game $\Gamma$.
\end{theorem}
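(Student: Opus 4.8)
The plan is to bound separately the total number of best-response moves the algorithm performs and the cost of carrying out a single iteration of its while-loops, and then multiply. Throughout the runtime analysis the degree $d$ is a constant, so $\alpha=d+1$ and $p=(2d+3)(d+1)(4d)^{d+1}$ are constants, while $m=\log\frac{c_{\max}}{c_{\min}}$ is polynomial in the input size; consequently the factor $g=np^{3}(1+m(1+p))^{d}d^{d}+1$ is polynomial as well. Since the algorithm runs in $m$ phases ($i=0,\dots,m-1$), it suffices to bound the number of moves inside each individual phase by a polynomial in $n$ and $m$.

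The engine of the argument is that every move strictly decreases an appropriate potential by a controllable amount. By \cref{potentialProp}, whenever a player $u\in R$ deviates between two states that differ only in $u$'s strategy, the partial potential $\p{R}{\cdot}{}$ drops by at least $\cu{\vecc{s}}-\alpha\cu{\vecc{s}'}$. Every move made by \cref{main_alg} is either a $p$-move or an $(\alpha+p^{-1})$-move, and both beat the factor $\alpha$; hence an $(\alpha+p^{-1})$-move yields a decrease of at least $\frac{1}{\alpha p+1}\cu{\vecc{s}}$, and a $p$-move an even larger one. Since, by the phase move-rule, any player deviating during phase $i$ has cost at least $b_{i+1}$ at that moment, every move in phase $i$ decreases the relevant potential by at least $\frac{b_{i+1}}{\alpha p+1}$, the $(\alpha+p^{-1})$-move being the limiting case.

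For a main phase $i\ge1$ I would track the partial potential $\p{R_i}{\cdot}{}$ of the set $R_i$ of players moving in that phase. Each of the phase's moves is made by a player of $R_i$, so by the previous paragraph each such move decreases $\p{R_i}{\cdot}{}$ by at least $\frac{b_{i+1}}{\alpha p+1}$; telescoping over the whole phase, the total decrease equals $\p{R_i}{\vecc{s}^{i-1}}{}-\p{R_i}{\vecc{s}^{i}}{}$, which is at most $\p{R_i}{\vecc{s}^{i-1}}{}\le npb_i$ by the Key Property (\cref{thm}), using that $\p{R_i}{\vecc{s}^{i}}{}\ge0$ from \cref{alphalemma}. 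Dividing the available decrease by the per-move decrease bounds the number of moves in phase $i$ by $\frac{npb_i(\alpha p+1)}{b_{i+1}}=np(\alpha p+1)g$, since $b_i/b_{i+1}=g$. Phase $0$ is handled identically but with the full potential $\p{}{\cdot}{}$: each move is an $(\alpha+p^{-1})$-move by a player of cost at least $b_1$, so decreases $\p{}{\cdot}{}$ by at least $\frac{b_1}{\alpha p+1}$, while $\p{}{\vecc{s}_{\text{init}}}{}\le\alpha n c_{\max}$ by \cref{alphalemma}; as $c_{\max}/b_1=g$, this gives at most $\alpha n(\alpha p+1)g$ moves.

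Summing these bounds over the $m$ phases shows the total number of best-response moves is polynomial. Each iteration of a while-loop only needs to scan the players and test, via best-response computations, whether a qualifying move exists; this is polynomial under the standing assumption that best responses are efficiently computable, and the preprocessing (computing $c_{\max},c_{\min},m,g$ and the boundaries $b_i$) is likewise polynomial. Multiplying the polynomial move count by the polynomial per-iteration cost yields the claim. The one genuinely hard ingredient is already in hand: it is the Key Property of \cref{thm}, which supplies the per-phase potential bound $npb_i$; everything else is the routine ratio of total-decrease to per-move-decrease, which telescopes to the polynomial factor $g$.
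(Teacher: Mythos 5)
Your proof is correct and follows essentially the same route as the paper's own: telescoping the potential decrease over each phase, lower-bounding every move's decrease by $\frac{b_{i+1}}{\alpha p+1}$ via \cref{potentialProp}, and capping the total available decrease by the Key Property (\cref{thm}) for phases $i \geq 1$ and by \cref{alphalemma} for phase $0$. The only cosmetic differences are that you spell out the nonnegativity of $\p{R_i}{\vecc{s}^{i}}{}$ and the polynomial per-iteration cost, both of which the paper leaves implicit.
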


\begin{proof}
It suffices to show that every phase $i=0, \dots, m-1$ runs in
poly-time, since the number of phases $m = \log \frac{c_{\max}}{c_{\min}}$ is
polynomial in the representation of the game.

First, fix such a phase $i$ and suppose there were $\tau$ single-player
best-response moves (see \cref{algo_improvements_gets} of \cref{main_alg}) during
the course of this phase. For this proof only, set $\vecc{s}^{-1} :=
\vecc{s}_{\text{init}}$. Denote all intermediate states that correspond to these
deviations by $\vecc{s}^{i-1} =
\vecc{t}_0, \vecc{t}_1,
\ldots, \vecc{t}_{\tau} = \vecc{s}^i$ in chronological order, and call $u_j$ the
player who makes the best-response move corresponding to the change from state
$\vecc{t}_{j-1}$ to $\vecc{t}_{j}$. Then
\begin{equation} \label{telscope}
\p{}{\vecc{s}^{i-1}}{} - \p{}{\vecc{s}^{i}}{} = \p{R_i}{\vecc{s}^{i-1}}{} -
\p{R_i}{\vecc{s}^{i}}{} = \sum_{j=1}^{\tau} \left[ \p{R_i}{\vecc{t}_{j-1}}{} -
\p{R_i}{\vecc{t}_j}{} \right].
\end{equation}
The first equality holds using the definition of the partial
potential~\eqref{eq:partial_pot_def} and the fact that players in $N \setminus R_i$
do not move during phase $i$, so $\p{}{\vecc{s}^{i-1}}{N \setminus R_i} =
\p{}{\vecc{s}^{i}}{N \setminus R_i}$. The second uses a telescoping sum.

Now observe that during the phase, players selected to make a move either make an
$(\alpha + p^{-1})$-move or a $p$-move (for the $i=0$, only the former). Hence, for
any intermediate state $j$, it must be that
$$
\cuj{\vecc{t}_{j - 1}} \geq \min\ssets{p,a+p^{-1}}\cdot \cuj{\vecc{t}_{j}} =
(a+p^{-1}) \cuj{\vecc{t}_{j}}
$$
Applying \cref{potentialProp}, we then obtain
\begin{equation} \label{eq:deviation}
\p{R_i}{\vecc{t}_{j-1}}{} - \p{R_i}{\vecc{t}_j}{}
\geq \cuj{\vecc{t}_{j-1}} - \frac{\alpha}{\alpha+p^{-1}} \cuj{\vecc{t}_{j-1}} =
\frac{1}{\alpha p + 1} \cuj{\vecc{t}_{j-1}}.
\end{equation}
By definition of the algorithm (see \cref{main_alg}), player $u_j$ has to have cost
$\cuj{\vecc{t}_{j-1}} \geq b_{i+1}$, otherwise she wouldn't have made any moves during
phase $i$. Recalling that  $b_{i+1} = b_{i}g^{-1}$, inequality \eqref{eq:deviation}
now tells us that, for all $j=1,\dots,\tau$,
\begin{equation*}
\p{R_i}{\vecc{t}_{j-1}}{} - \p{R_i}{\vecc{t}_j}{}
\geq \frac{b_{i} g^{-1}}{\alpha p + 1}
\end{equation*} and so applying \eqref{telscope} we get that
$$
\p{R_i}{\vecc{s}^{i-1}}{} - \p{R_i}{\vecc{s}^{i}}{} \geq \tau \frac{b_{i} g^{-1}}{\alpha p
+ 1}.
$$
If $i\geq 1$, this can be rewritten as
$$
\tau \leq g(\alpha p +1) b_i^{-1}\cdot \left[\p{R_i}{\vecc{s}^{i-1}}{} -
\p{R_i}{\vecc{s}^{i}}{}\right]
\leq g(\alpha p +1) \cdot b_i^{-1}\p{R_i}{\vecc{s}^{i-1}}{}
\leq n\cdot g(\alpha p +1)p,
$$
where the last inequality holds due to~\cref{thm}.

For the remaining, special case of $i=0$, since $b_0=c_{\max}$ and
$\vecc{s}^{-1}={\vecc s}_{\text{init}}$, we similarly get that
$$
\tau \leq g(\alpha p +1) \cdot c_{\max}^{-1}\p{}{\vecc{s}_{\text{init}}}{}
\leq n\cdot \alpha g(\alpha p +1),
$$
where the last inequality is a consequence of applying~\cref{alphalemma} (with
$\vecc s=\vecc{s}_{\text{init}}$ and $R=N$):
$$
\p{}{\vecc{s}_{\text{init}}}{}
\leq \alpha \sum_{u \in N} \cu{\vecc{s}_{init}}
\leq \alpha n c_{\max}.
$$

In any case, we showed that indeed $\tau$ is polynomially bounded in the size of the
input.
\end{proof}

\subsection{Approximation Factor} \label{subseq:approxFactor}

In this section, we establish the desired approximation guarantee of our algorithm
(\cref{equil}). Our high-level approach is to examine the amount by which the cost of a player
can change after she is ``fixed'' by the algorithm, and in particular, how the
changes in other players' strategies during the following phases affect the beneficial
deviating moves that she might have at the final state. We formalize this in
\cref{relate_fix_end}, which is the backbone of proving \cref{equil}, together with
the Key Property (\cref{lemmaKeyProperty}) of the previous section.

The following theorem establishes two very important properties. First, after
the strategy of a player $u$ becomes fixed, immediately after some phase $j$, and
until the end of the algorithm (phase $\vecc{s}^{m-1}$), her cost will not increase
by more than a small factor; and furthermore,
no deviation $s'_u$ of player $u$ can be much more profitable at the end of the
algorithm than it was immediately after the fixing phase $j$:

\begin{theorem} \label{relate_fix_end}
Suppose the strategy of player $u$ was fixed at the end of phase $j$. Then,
\begin{equation}
\label{eq:relate_fix_end_1}
\cu{\vecc{s}^{m-1}} \leq (1+3p^{-1})  \cu{\vecc{s}^{j}}
\end{equation}
and
\begin{equation}
\label{eq:relate_fix_end_2}
\cu{\vecc{s}_{-u}^{m-1}, s'_u} \geq (1-2p^{-1}) \cu{\vecc{s}_{-u}^{j}, s'_u}
\qquad\text{for all}\;\; s_u'\in S_u.
\end{equation}
\end{theorem}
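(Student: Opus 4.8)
The plan is to reduce both inequalities to a single structural observation: once player $u$ is fixed at the end of phase $j$, her strategy $s_u$ is frozen, so for every resource $e\in s_u$ the only way her cost can move is through the load $x_e$, and that load splits into an \emph{inert} part, contributed by the players already fixed by the end of phase $j$ (this part never changes again, since fixed players never move), and a \emph{volatile} part, contributed by the players still unfixed after phase $j$. I would first record this decomposition and note that the inert part already contains $w_u$, so $x_e(\vecc{s}^{j})\geq w_u\geq 1$ on each $e\in s_u$. Everything then hinges on showing that, between $\vecc{s}^{j}$ and the final state $\vecc{s}^{m-1}$, the volatile load on every relevant resource moves only by an amount whose effect on $u$'s cost is a negligible fraction of what is already there.

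The crucial tool is a \emph{product-form} bound on the volatile contribution. At any end-of-phase state $\vecc{s}^{i}$ with $i\geq j$, each still-unfixed player $v$ on a resource $e$ pays at least $w_v\,\ce{x_e(\vecc{s}^{i})}$; since each such player has cost below the boundary $b_i$ and there are at most $n$ of them, summing gives $\xe{V}{\vecc{s}^{i}}\cdot\ce{x_e(\vecc{s}^{i})}\leq n\,b_i$, where $\xe{V}{\vecc{s}^{i}}$ denotes the volatile weight on $e$ and $V=N\setminus N_{\text{fixed}}$. Working with this product, rather than the weight alone, is exactly what makes the argument robust: it directly caps the \emph{cost} contributed by the volatile load, so resources on which $c_e$ happens to be tiny (where the volatile weight could be comparatively large) cause no harm, since they contribute little to $u$'s cost in the first place. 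To convert a load change $\delta_e$ into a cost change I would use the elementary inequalities $\ce{x+\delta_e}\leq\left(1+\tfrac{\delta_e}{x}\right)^{d}\ce{x}$ and $\ce{x-\delta_e}\geq\left(1-\tfrac{\delta_e}{x}\right)^{d}\ce{x}$, valid for every $c_e\in\mathfrak{P}_d$ because each monomial scales by at most the $d$-th power; taking $x=x_e(\vecc{s}^{j})\geq w_u\geq 1$, the per-resource cost increment is then of order $d\,\tfrac{\delta_e\,\ce{x}}{x}\leq d\,\tfrac{n\,b_i}{w_u}$.

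For \eqref{eq:relate_fix_end_1} the load on each $e\in s_u$ can only \emph{grow}, so I would sum these per-resource increments over $e\in s_u$ and over the phases $j+1,\dots,m-1$; since $C_u(\vecc{s}^{j})\geq b_j$ while the boundaries $b_i=g^{-i}c_{\max}$ telescope to at most about $g^{-1}b_j$, and $g$ is chosen (\cref{algo_g_def}) to dominate $n$, $p$, $d$ and $m$, the accumulated relative increase stays below $3p^{-1}$. For \eqref{eq:relate_fix_end_2} the roles are dual: $u$ considers a deviation $s'_u$, the relevant load on each $e\in s'_u$ is $x_e(\vecc{s}_{-u})+w_u$, and now the threat is that volatile players \emph{leave} $e$ and lower the load; bounding the departing weight by the same product estimate and applying the second monotone inequality yields the lower bound with factor $1-2p^{-1}$. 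The step I expect to be the main obstacle is the cross-phase bookkeeping: the clean product bound $\xe{V}{\vecc{s}^{i}}\,\ce{x_e}\leq n\,b_i$ is available only at end-of-phase states where the players in question are still unfixed, whereas at $\vecc{s}^{m-1}$ they have all been fixed and the loads they induce could a priori have drifted upward. Handling this cleanly requires either summing the genuine per-phase increments (each controlled by its own $b_i$) instead of comparing $\vecc{s}^{j}$ and $\vecc{s}^{m-1}$ in one shot, or a backward induction on the fixing phase that feeds \eqref{eq:relate_fix_end_1} back into itself; in either route it is the gigantic size of $g$ that absorbs the telescoped error into the stated $O(p^{-1})$ slack.
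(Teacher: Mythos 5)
Your plan is genuinely different from the paper's proof: you try to control player $u$'s cost by tracking resource loads directly, via a weight-times-cost product bound for still-unfixed players, instead of going through the approximate potential. Unfortunately there is a real gap, and it sits exactly where you placed your ``main obstacle''. The product bound $\xe{V}{\vecc{s}^{i}}\cdot\ce{x_e(\vecc{s}^{i})}\leq n b_i$ applies only to players that are still unfixed at the end-of-phase state $\vecc{s}^{i}$ --- and those are precisely the harmless ones. The dangerous players are those who, during phase $i$, first have their own costs inflated by other movers (the algorithm never bounds a mover's cost from above in mid-phase; its rules only impose \emph{lower} bounds on the cost of a player who moves), then best-respond onto a resource $e\in s_u$ carrying a weight you cannot relate to $c_e$ (the product trick ties their weight only to the cost functions of their \emph{old} resources, which are unrelated to $c_e$), and finally end the phase with cost at least $b_i$, so they are fixed and escape your bound altogether. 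The only unconditional control over the phase-$i$ movers is that they were unfixed at $\vecc{s}^{i-1}$, i.e.\ had cost below $b_{i-1}$ there; even granting all the analytic steps, this caps the per-phase damage only by roughly $nb_{i-1}$, not $nb_i$. That loss of a factor $g$ is fatal: already the single phase following the fixing phase $j$ then contributes an error of order $nb_j$, whereas the total slack you can afford in \eqref{eq:relate_fix_end_1} is $3p^{-1}\cu{\vecc{s}^{j}}$, and $\cu{\vecc{s}^{j}}$ may be as small as $b_j$. The paper closes exactly this hole with the Key Property (\cref{thm}), $\p{R_i}{\vecc{s}^{i-1}}{}\leq npb_i$, which bounds the aggregate footprint of \emph{all} phase-$i$ movers, including those fixed with inflated costs; its proof is not elementary --- it needs the helper game $\bar\Gamma$, the fact that the movers start the phase in an $(\alpha+p^{-1})$-equilibrium, and the approximate-PoA bound for groups (\cref{rhoStretch}, resting on \cref{partialpoa}). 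Nothing in your outline substitutes for this ingredient, and neither of your two proposed remedies (per-phase increments, backward induction) supplies it, since both still need a per-phase bound at the scale $npb_i$ rather than $nb_{i-1}$.

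A second, independent problem is your conversion of load changes into cost changes: the linearization ``$(1+\delta_e/x)^{d}-1$ is of order $d\,\delta_e/x$'' is valid only when $\delta_e\leq x$, and nothing prevents $\delta_e\gg x$ (take $x_e(\vecc{s}^{j})=w_u=1$ and a resource with a tiny leading coefficient that heavy players later flood); in that regime the increment behaves like $(\delta_e/x)^{d}$ and is no longer controlled by the product $\delta_e\cdot\ce{x_e(\vecc{s}^{j})}$. The paper's rigorous replacement is \cref{lemmacomb}, which incurs the constant $\xi_\varepsilon=\left(1+\frac{1}{\varepsilon}\right)^{d}d^{d}$ rather than $d$ --- and this constant is the very reason the parameter $g$ in \cref{algo_g_def} of \cref{main_alg} carries the factor $(1+m(1+p))^{d}d^{d}$. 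So even the analytic step of your plan really requires the paper's \cref{lemma2} (or an equivalent), whose error term is deliberately expressed through $\p{R_i}{\vecc{s}^{i-1}}{}$ precisely so that it can be fed into the Key Property.
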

To prove our theorem, we will need to capture the relation of the cost of player $u$
between state $\vecc s^j$ (when she was fixed) and $\vecc s^{m-1}$ (end state of the
algorithm), in a more refined way. In particular, we want to see how the cost of
player $u$ can change during all intermediate phases $i=j+1,\dots,m-1$. This is
exactly captured by the following lemma, proven in~\cref{app:proof_lemma2}:
\begin{lemma} \label{lemma2}
Fix any $\varepsilon > 0$ and let
$\xi_{\varepsilon} 
:= \left(1+\frac{1}{\varepsilon} \right)^d d^d$. 
Consider any player $u$ whose strategy was fixed immediately after phase $j$. 
Then, for any phase $i > j$, 
\begin{equation}
\label{eq:lemma2_1}
\cu{\vecc{s}^{i}} \leq (1+\varepsilon) \cu{\vecc{s}^{i-1}} 
+ \xi_{\varepsilon} \p{R_i}{\vecc{s}^{i-1}}{}
\end{equation}
and
\begin{equation}
\label{eq:lemma2_2}
\cu{\vecc{s}_{-u}^{i}, s'_u} 
\geq \frac{1}{1+\varepsilon} \cu{\vecc{s}_{-u}^{i-1}, s'_u} 
- \frac{\xi_{\varepsilon}}{1 + \varepsilon} 
\p{R_i}{\vecc{s}^{i-1}}{}\qquad\text{for all}\;\; s'_u\in S_u.
\end{equation} 
\end{lemma}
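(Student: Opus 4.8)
The plan is to prove both inequalities resource-by-resource, exploiting that a fixed player $u$ never moves after phase $j$: for every phase $i>j$ we have $u\notin R_i$, so $u$ plays the same strategy in $\vecc{s}^{i-1}$ and $\vecc{s}^{i}$, and on each resource $e$ the load splits as $\xe{N\setminus R_i}{\cdot}+\xe{R_i}{\cdot}$, where the ``background'' summand $\xe{N\setminus R_i}{\cdot}$ is constant throughout phase $i$ (only players of $R_i$ move) while the $R_i$-summand changes. Hence comparing $\cu{\vecc{s}^{i}}$ to $\cu{\vecc{s}^{i-1}}$ (resp.\ the two deviation costs) reduces to comparing $\ce{\cdot}$ at two arguments that differ only in their $R_i$-component.

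The algebraic engine I would isolate is the scalar inequality $(a+b)^{\nu}\le(1+\varepsilon)a^{\nu}+\xi_{\varepsilon}b^{\nu}$, for all $a,b\ge 0$ and integers $\nu\le d$. I would prove it by splitting on whether $b\le a\big((1+\varepsilon)^{1/\nu}-1\big)$: in the first case $a+b\le(1+\varepsilon)^{1/\nu}a$ delivers the $(1+\varepsilon)a^{\nu}$ bound directly, while in the second case $a$ is dominated by $b$ and the whole left-hand side fits into $b^{\nu}$; the estimates $(1+\varepsilon)^{1/\nu}-1\ge\ln(1+\varepsilon)/\nu$ and $\ln(1+\varepsilon)\ge\varepsilon/(1+\varepsilon)$ are precisely what turn the resulting constant into $\xi_{\varepsilon}=(1+1/\varepsilon)^{d}d^{d}$. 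Summing over the monomials of $c_e$ (the constant term being harmless) gives a per-resource comparison in which the ``excess'' term is always (change of $R_i$-load on $e$) times $\ce{\cdot}$; I then want to recognize this as a contribution to the social cost $\scsum_{R_i}$ of the movers, and thus to $\p{R_i}{\cdot}{}$ via \cref{alphalemma}.

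For \eqref{eq:lemma2_1} this goes through cleanly. Because $u$ uses $e$ under her fixed strategy, her own weight lies inside the background, i.e.\ $\xe{N\setminus R_i}{\vecc{s}^{i}}\ge w_u$; combined with $w_u\ge 1$ this lets me bound $w_u$ times each excess monomial by $\xe{R_i}{\vecc{s}^{i}}\cdot\ce{x_e(\vecc{s}^{i})}$, so after summation the total excess is at most $\scsum_{R_i}(\vecc{s}^{i})\le\p{R_i}{\vecc{s}^{i}}{}$ (\cref{alphalemma}). Finally I invoke that the partial potential is non-increasing within a phase — each intra-phase best-response improves the mover's cost by more than a factor $\alpha$, so \cref{potentialProp} forces a decrease of $\p{R_i}{\cdot}{}$ — to replace $\p{R_i}{\vecc{s}^{i}}{}$ by $\p{R_i}{\vecc{s}^{i-1}}{}$.

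The harder direction is \eqref{eq:lemma2_2}, and this is where I expect the real difficulty. Now $u$ is contemplating a deviation $s'_u$ that may use resources she does \emph{not} occupy under her fixed strategy; on such a resource her weight $w_u$ appears in the deviation load but is absent from the real load feeding the potential, so the ``$w_u$ sits in the background'' trick breaks. I would instead charge to the start-of-phase cost $\scsum_{R_i}(\vecc{s}^{i-1})\le\p{R_i}{\vecc{s}^{i-1}}{}$ (no monotonicity needed here) and handle the offending resources by a secondary case split: if the $R_i$-load change on $e$ is small relative to $w_u$, then since the deviation load is at least $w_u$ the two costs differ by at most a factor $1+\varepsilon$ and the first right-hand term already suffices; otherwise $w_u$ is dominated by the departing $R_i$-load $\xe{R_i}{\vecc{s}^{i-1}}$, so the potential term can absorb $w_u\cdot\ce{\cdot}$. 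Reconciling these two regimes with the \emph{same} constant $\xi_{\varepsilon}$, while keeping both terms of the right-hand side active (dropping either one is too lossy, as small examples already show), is the main technical obstacle; the rest is bookkeeping with the splittings above.
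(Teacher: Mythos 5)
Your treatment of \eqref{eq:lemma2_1} is essentially sound and follows the paper's route: decompose each resource's load into the contributions of $N\setminus(R_i\cup\{u\})$, $R_i$, and $u$; apply a per-monomial inequality; charge the excess to $\sum_{e}\xe{R_i}{\vecc{s}^i}\,\ce{x_e(\vecc{s}^i)}=\scsum_{R_i}(\vecc{s}^i)\leq\p{R_i}{\vecc{s}^i}{}\leq\p{R_i}{\vecc{s}^{i-1}}{}$, the first inequality via \cref{thelemma}/\cref{alphalemma} and the second via \cref{potentialProp}, since intra-phase moves improve by more than $\alpha$. The genuine gap is exactly where you flag it: \eqref{eq:lemma2_2} on resources $e\in s_u'\setminus s_u$. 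Your engine $(a+b)^{\nu}\leq(1+\varepsilon)a^{\nu}+\xi_{\varepsilon}b^{\nu}$, multiplied through by $w_u$, leaves an excess of the form $w_u\cdot(\text{departing load})^{\nu}$, and on such resources $w_u$ is not bounded by the real load, so you must trade $w_u$ against the departing $R_i$-load. In your second regime you then absorb the \emph{entire} left-hand side into the potential term, and this is fatal to the constant: near the threshold $y=\bar t x$ with $\bar t=\bigl((1+\varepsilon)^{1/\nu}-1\bigr)^{-1}$, the left-hand side and the $(1+\varepsilon)$-term are both $\approx(1+\varepsilon)\bar t^{\nu+1}x^{\nu+1}$ and nearly cancel; covering the whole left-hand side instead of only their difference forces a constant of order $(1+\varepsilon)\bar t^{\nu+1}$, which exceeds $\xi_{\varepsilon}$ by a factor $\bar t\geq\nu/\varepsilon$. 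So your plan, as written, proves \eqref{eq:lemma2_2} only with a constant of order $d^{d+1}(1+1/\varepsilon)^{d+1}$, not the stated $\xi_{\varepsilon}$ --- and this matters, since in the application $1/\varepsilon\approx m(1+p)$ and the algorithm's parameter $g$ (\cref{algo_g_def} of \cref{main_alg}) is calibrated precisely to $\xi_{\varepsilon}$. You yourself state that reconciling the two regimes at constant $\xi_{\varepsilon}$ is unresolved; that unresolved step is the heart of the lemma.

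The missing idea is the paper's \cref{lemmacomb}: for all $x,x',y,y',z\geq 0$ and $f\in\mathfrak{P}_d$,
\begin{equation*}
y\,f(z+x+y)\;\leq\;(1+\varepsilon)\,y\,f(z+x'+y)\;+\;\xi_{\varepsilon}\,x\,f(z+x+y'),
\end{equation*}
whose monomial core (after using monotonicity to set $x'\gets 0$, $y'\gets 0$ and shifting out $z$) is $y(x+y)^{\nu}\leq(1+\varepsilon)y^{\nu+1}+\xi_{\varepsilon}x^{\nu+1}$. Two features of this formulation dissolve your obstacle simultaneously. First, the excess term is a pure power of $x$ with \emph{no factor of} $y=w_u$, so charging it to the potential via $x^{\nu+1}\leq x\,(z+x+y')^{\nu}$ never requires comparing $w_u$ with the real load; the same inequality then yields both \eqref{eq:lemma2_1} (with $y'\gets y$) and \eqref{eq:lemma2_2} (with $y'$ the weight of $u$ on $e$ under her fixed strategy, possibly $0$), with no case distinction on whether $u$ uses $e$. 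Second, the constant $\xi_{\varepsilon}$ is obtained by keeping both terms and maximizing $t(t+1)^{\nu}-(1+\varepsilon)t^{\nu+1}$ over $t=y/x$: at the critical point $\bar t$ one has $(\bar t+1)^{\nu}=(1+\varepsilon)\bar t^{\nu}$, so the top-order terms cancel and the maximum collapses to $(1+\varepsilon)\bar t^{\nu}\leq\nu^{\nu}(1+1/\varepsilon)^{\nu}\leq\xi_{\varepsilon}$ (using \cref{concav} to bound $\bar t$). That cancellation is precisely the factor $\bar t$ your sub-case B discards, and it is what makes the stated constant achievable.
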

\Cref{lemma2} states that the cost of player $u$ cannot increase by too much during
any phase $i > j$ and, at the same time,  the cost player $u$ would get by deviating
cannot decrease by too much during any phase $i > j$.
We remark here that the upper bound provided by \cref{lemma2} will itself be bounded
by the guarantee provided by \cref{thm}.
Now, by appropriately applying \Cref{lemma2} in an iterative way, we can prove our
theorem:
\begin{proof}[Proof of~\cref{relate_fix_end}]
First, fix an $\epsilon >0$ such that 
\begin{equation}
\label{eq:epsilon_helper}
(1 + \varepsilon)^{m} = 1 + \frac{1}{p}.
\end{equation}
Then, by \cref{cl1} it must be that $m(1 + p) \geq \varepsilon^{-1}$, and so for the
parameter $g$ of our algorithm (see \cref{algo_g_def} of \cref{main_alg}) we have the bound
\begin{equation}
\label{eq:bound_g}
g-1 = n p^{3} (1 + m(1 + p))^{d} d^{d} \geq n p^{3} \xi_{\varepsilon},
\end{equation} 
where $\xi_\varepsilon$ is defined as in~\cref{lemma2}. 

Also, observe that, since $j$ is the phase at which player $u$ is fixed, it must be
that
\begin{equation}
\label{eq:props_player_fixed}
\cu{\vecc{s}^{j}} \geq b_j
\qquad\text{and}\qquad
\cu{\vecc{s}^{j}} \leq p \cdot \cu{\vecc{s}_{-u}^{j}, s'_u},
\end{equation}
for any deviation $s_u'\in S_u$ of player $u$. These are both consequences of the
definition of our algorithm: the first inequality comes immediately from
\cref{algo_nfixed} of \cref{main_alg}, while the second one is due to the fact that
player $u$ has no $p$-move left at the end of phase $j$ (see
\cref{algo_improvements} of \cref{main_alg}).

Next, for proving~\eqref{eq:relate_fix_end_1}, apply \eqref{eq:lemma2_1}
recursively, backwards for $i=m-1,\dots,j+1$, to get that
\begin{align}
\cu{\vecc{s}^{m-1}} 
& \leq  (1+\varepsilon)^{m-1-j} \cu{\vecc{s}^{j}} 
+ \xi_{\varepsilon} \sum_{i = j+1}^{m-1} 
(1+\varepsilon)^{m-1-i} \p{R_{i}}{\vecc{s}^{i-1}}{} \notag\\
& \leq  (1+\varepsilon)^{m} \left[\cu{\vecc{s}^{j}} 
+ \xi_{\varepsilon} \sum_{i = j+1}^{m-1} 
\p{R_{i}}{\vecc{s}^{i-1}}{} \right]\notag\\
& =  (1+p^{-1}) \left[\cu{\vecc{s}^{j}} 
+ \xi_{\varepsilon} \sum_{i = j+1}^{m-1} 
\p{R_{i}}{\vecc{s}^{i-1}}{} \right] \label{eq:bound_cost_last_1}, 
\end{align}
where for the second inequality we have used the fact that $m-1-j, m-1-i\leq m$, and
for the last equality we used~\eqref{eq:epsilon_helper}.
By using \cref{thm}, we can now bound the second term
of~\eqref{eq:bound_cost_last_1} by
\begin{equation}
\label{eq:bound_2nd_term}
\xi_{\varepsilon} \sum_{i = j+1}^{m-1} \p{R_{i}}{\vecc{s}^{i-1}}{} 
\leq \xi_{\varepsilon} np\sum_{i = j + 1}^{m-1} b_i
= \xi_{\varepsilon} np b_j \sum_{i = 1}^{m-1 - j} g^{-i}
\leq \frac{\xi_{\varepsilon} np}{g-1} b_j
\leq p^{-2} b_j
\leq p^{-2} \cu{\vecc{s}^{j}}.
\end{equation}
where the last two inequalities are due to~\eqref{eq:bound_g}
and~\eqref{eq:props_player_fixed}, respectively.

Combining \eqref{eq:bound_cost_last_1} with \eqref{eq:bound_2nd_term}, we finally
get
\begin{equation*}
\cu{\vecc{s}^{m-1}} \leq 
(1 + p^{-1})(1+p^{-2}) \cu{\vecc{s}^{j}}
=(1+p^{-1}+p^{-2}+p^{-3}) \cu{\vecc{s}^{j}}
,
\end{equation*}
which establishes \eqref{eq:relate_fix_end_1}.

Moving to \eqref{eq:relate_fix_end_2} now, and applying \eqref{eq:lemma2_2} recursively, 
backwards for $i=m-1,\dots,j+1$, we get that
\begin{align} \label{twoTerms}
\cu{\vecc{s}_{-u}^{m-1}, s'_u} &\geq  (1+\varepsilon)^{j+1-m} \cu{\vecc{s}_{-u}^{j},
s'_u}
- \xi_{\varepsilon} \sum_{i = j+1}^{m-1} (1+\varepsilon)^{i-m}
  \p{R_{i}}{\vecc{s}^{i-1}}{}\notag\\ &\geq
  (1+\varepsilon)^{-m}\cu{\vecc{s}_{-u}^{j}, s'_u} - \xi_{\varepsilon} \sum_{i =
  j+1}^{m-1} \p{R_{i}}{\vecc{s}^{i-1}}{}\notag\\ &=
  (1+p^{-1})^{-1}\cu{\vecc{s}_{-u}^{j}, s'_u} - \xi_{\varepsilon} \sum_{i =
  j+1}^{m-1} \p{R_{i}}{\vecc{s}^{i-1}}{},
\end{align} the second inequality coming from observing that $j+1-m\geq m$, $i-m\leq
0$ and the last one by using~\eqref{eq:epsilon_helper}.
Using~\eqref{eq:bound_2nd_term} and~\eqref{eq:props_player_fixed}, we can bound the
second term in \eqref{twoTerms} by
$$
\xi_{\varepsilon} \sum_{i = j+1}^{m-1} \p{R_{i}}{\vecc{s}^{i-1}}{}
\leq p^{-2} \cu{\vecc{s}^{j}}
\leq p^{-1} \cu{\vecc{s}_{-u}^{j}, s'_u},
$$
and thus \eqref{twoTerms} finally becomes
$$
\cu{\vecc{s}_{-u}^{m-1}, s'_u} 
\geq \left[(1+p^{-1})^{-1}-p^{-1}\right] \cu{\vecc{s}_{-u}^{j}, s'_u}
\geq (1-2p^{-1})\cu{\vecc{s}_{-u}^{j}, s'_u},
$$
where for the last inequality we used the fact that, for any real $x>0$,
$(1+x)^{-1}\geq 1-x$. This establishes \eqref{eq:relate_fix_end_2}, concluding the
proof of the theorem.
\end{proof}

We are now ready to prove the main theorem of this section:
\begin{theorem} \label{equil}
At the end of \cref{main_alg}, all players are in a $d^{d + o(d)}$-equilibrium.
\end{theorem}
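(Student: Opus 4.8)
The plan is to read off the approximation guarantee directly from the two inequalities of \cref{relate_fix_end} together with the fixing conditions in \eqref{eq:props_player_fixed}: essentially all of the real work is already contained in those results, and the present theorem is a short bookkeeping step. Concretely, I would show that at the final state $\vecc{s}^{m-1}$ no player can improve her cost by more than a factor of $\frac{(1+3p^{-1})\,p}{1-2p^{-1}}$ under any unilateral deviation, and then verify that this quantity is $d^{d+o(d)}$.

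First I would fix an arbitrary player $u$ and split into two cases according to when she became fixed. In the main case, $u$ is fixed at the end of some phase $j$ with $1\leq j\leq m-1$. For an arbitrary deviation $s'_u\in S_u$ I would chain \eqref{eq:relate_fix_end_1} and \eqref{eq:relate_fix_end_2} to transport the cost ratio from the end state back to the fixing state, obtaining
\begin{equation*}
\frac{\cu{\vecc{s}^{m-1}}}{\cu{\vecc{s}_{-u}^{m-1}, s'_u}}
\leq \frac{(1+3p^{-1})\,\cu{\vecc{s}^{j}}}{(1-2p^{-1})\,\cu{\vecc{s}_{-u}^{j}, s'_u}},
\end{equation*}
and then apply the second inequality of \eqref{eq:props_player_fixed}, namely $\cu{\vecc{s}^{j}}\leq p\cdot\cu{\vecc{s}_{-u}^{j}, s'_u}$ (which holds because $u$ has no $p$-move left when she is fixed), to cancel the terms at state $\vecc{s}^{j}$ and conclude that the ratio is at most $\frac{(1+3p^{-1})\,p}{1-2p^{-1}}$ for every $s'_u$.

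Next I would treat the remaining players, those fixed only in the final step with boundary $b_m$. Such a player is not fixed during any of the phases $1,\dots,m-1$, so her cost at $\vecc{s}^{m-1}$ lies in $[b_m,b_{m-1})$; since the while loop of phase $m-1$ has terminated, she has no $(\alpha+p^{-1})$-move, giving $\cu{\vecc{s}^{m-1}}\leq(\alpha+p^{-1})\,\cu{\vecc{s}_{-u}^{m-1}, s'_u}$ for every $s'_u$. As $\alpha+p^{-1}=d+1+p^{-1}\ll p$, this bound is comfortably below the one from the main case. Combining the two cases, every player is in a $\frac{(1+3p^{-1})\,p}{1-2p^{-1}}$-equilibrium at $\vecc{s}^{m-1}$.

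Finally I would convert this factor into the claimed asymptotic form. Since $\frac{1+3p^{-1}}{1-2p^{-1}}=1+O(p^{-1})=1+o(1)$ as $d$ (and hence $p$) grows, the approximation factor equals $p\cdot(1+o(1))$; recalling from \eqref{eq:pdef} that $p=(2d+3)(d+1)(4d)^{d+1}=d^{d+o(d)}$, the constant-order correction is absorbed into the $o(d)$ exponent and the whole expression is $d^{d+o(d)}$. I do not expect a genuine obstacle here: the only points needing care are the separate (and strictly easier) handling of players fixed in the last step, and checking that the multiplicative corrections $(1+3p^{-1})$ and $(1-2p^{-1})^{-1}$ do not affect the $d^{d+o(d)}$ bound.
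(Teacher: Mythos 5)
Your proposal is correct and follows essentially the same route as the paper: chain the two inequalities of \cref{relate_fix_end} with the no-$p$-move property at the fixing phase, and absorb the $\frac{1+3p^{-1}}{1-2p^{-1}}$ correction into $d^{d+o(d)}$. Your separate treatment of players fixed only in the final step is a harmless refinement rather than a different argument; the paper handles those players uniformly, since for them the fixing state is $\vecc{s}^{m-1}$ itself, where they have no $(\alpha+p^{-1})$-move (hence no $p$-move) and \cref{relate_fix_end} holds trivially.
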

\begin{proof}
Consider an arbitrary player $u$ and any strategy $s'_u$ she can deviate to from
state $\vecc{s}^{m-1}$. Suppose her strategy was fixed right after state
$\vecc{s}^{j}$. Then, by the definition of our algorithm (see~\cref{main_alg}), she
has no $p$-move in state $\vecc{s}^{j}$. Applying this along with
\cref{relate_fix_end}, we see that
\begin{align*}
\frac{\cu{\vecc{s}^{m-1}}}{\cu{\vecc{s}_{-u}^{m-1}, s'_u}} 
\leq \frac{1+3p^{-1}}{1-2p^{-1}}  
\frac{\cu{\vecc{s}^{j}}}{\cu{\vecc{s}_{-u}^{j}, s'_u}} 
\leq \frac{1+3p^{-1}}{1-2p^{-1}} \cdot  p
= d^{d + o(d)},
\end{align*}
where the last step holds due to \eqref{eq:pdef} and observing that, since $p\geq 160$ (by~\eqref{eq:pdef} and $d\geq 1$), it must be that $\frac{1+3p^{-1}}{1-2p^{-1}}\leq 163/158\approx 1.032$.
\end{proof}

\paragraph*{Acknowledgements} 
We are grateful to the anonymous reviewers of the journal version of this paper for their valuable
feedback. We also thank Martin Gairing and George Christodoulou for useful
discussions.

\bibliography{PoAapprox}

\appendix

\section*{Appendix}

\section{Proof of \texorpdfstring{\cref{mainlemma}}{Lemma~2}} 
\label{app:poa}
Our proof of~\cref{mainlemma} follows along the lines
of Aland et al.~\cite[Section~5.3]{Aland2011}. Still, care needs to be taken in order to
incorporate correctly the added approximation factor $\rho$ and, to a smaller
degree, parameter $z$ that essentially corresponds to the deviations of subsets of
players in our upper bound result in~\cref{partialpoa}.

We will first need a technical lemma, which is a straightforward generalization
of \cite[Lemma 5.2]{Aland2011}:

\begin{lemma} \label{maximumlemma} Fix any $\mu\geq 0$, $\rho,d>0$ and define the
function $g(x) = \rho (x+1)^d - \mu
\cdot x^{d+1}$. Then $g$ has exactly one local maximum in $\R_{\geq 0}$. More
precisely, there exists a unique $\xi\in\R_{\geq 0}$ such that $g$ is strictly
increasing in $[0,\xi)$ and strictly decreasing in $(\xi, \infty)$.
\end{lemma}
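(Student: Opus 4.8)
The plan is to reduce the entire statement to a sign analysis of the derivative $g'(x) = d\rho(x+1)^{d-1} - (d+1)\mu x^d$ on $\R_{\geq 0}$. The statement is substantive only when $\mu > 0$: if $\mu = 0$ then $g'(x) = d\rho(x+1)^{d-1} > 0$ for all $x \geq 0$, so $g$ is strictly increasing with no interior maximizer. Assuming $\mu > 0$, I would rewrite the inequality $g'(x) > 0$ by dividing through by the positive quantity $(d+1)\mu\,(x+1)^{d-1}$, obtaining the equivalent condition
$$
g'(x) > 0 \iff h(x) < T, \qquad\text{where}\quad h(x) := \frac{x^d}{(x+1)^{d-1}},\quad T := \frac{d\rho}{(d+1)\mu} > 0,
$$
and symmetrically $g'(x) < 0 \iff h(x) > T$ and $g'(x) = 0 \iff h(x) = T$. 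Thus the whole problem collapses to understanding the single auxiliary function $h$ on $(0,\infty)$.

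The heart of the argument is then to show that $h$ is a continuous bijection from $[0,\infty)$ onto $[0,\infty)$, strictly increasing. The boundary behaviour is routine: $h(0) = 0$ since $d > 0$, and $h(x) \sim x \to \infty$ as $x \to \infty$. The key step is the monotonicity, which I would establish by the explicit derivative computation
$$
h'(x) = d\,x^{d-1}(x+1)^{-(d-1)} - (d-1)\,x^d (x+1)^{-d} = x^{d-1}(x+1)^{-d}\bigl[d(x+1) - (d-1)x\bigr] = x^{d-1}(x+1)^{-d}(x+d),
$$
which is strictly positive for all $x > 0$ (here $x+d > 0$ and the power factors are positive regardless of whether $d$ is integral), so $h$ is strictly increasing. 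I expect this factorization to be the only genuinely delicate point; it must be carried out cleanly so that the bracketed term simplifies exactly to the positive factor $x+d$.

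With $h$ shown to be a strictly increasing continuous surjection onto $[0,\infty)$, the Intermediate Value Theorem together with injectivity yields a \emph{unique} $\xi \in (0,\infty)$ with $h(\xi) = T$, and monotonicity gives $h(x) < T$ precisely on $[0,\xi)$ and $h(x) > T$ precisely on $(\xi,\infty)$. Translating back through the equivalences above, $g'(x) > 0$ on $[0,\xi)$ and $g'(x) < 0$ on $(\xi,\infty)$, so $g$ is strictly increasing on $[0,\xi)$ and strictly decreasing on $(\xi,\infty)$, exhibiting $\xi$ as its unique local maximum. I would close by remarking that the identical factorization technique, applied instead to the ratio $(x+1)^d/x^{d+1}$, shows that function to be strictly decreasing on $(0,\infty)$ from $+\infty$ to $0$; this is exactly the monotonicity invoked (as ``function $h$'') in the proof of~\cref{lem:phi_lambert} to guarantee the unique positive root of $\rho(x+1)^d = x^{d+1}$.
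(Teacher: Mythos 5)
Your proof is correct and takes essentially the same route as the paper's: both reduce the claim to a sign analysis of $g'$ via an auxiliary ratio (your $h(x)=x^d/(x+1)^{d-1}$ is just the reciprocal of the paper's $h$), establish strict monotonicity through the same factorization yielding the positive factor $x+d$, and conclude uniqueness of the crossing point from the boundary limits and the intermediate value theorem. Your explicit treatment of the degenerate case $\mu=0$ is in fact slightly more careful than the paper's, which tacitly asserts a unique $\xi$ with $h(\xi)=\frac{\mu(d+1)}{\rho d}$ even when this threshold equals $0$ and is never attained.
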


\begin{proof}
For the derivative of $g$ we have that, for any $x>0$:
\begin{equation} \label{maximumeq}
g'(x) = 0 
\quad\ifif\quad
\rho d (x+1)^{d-1} - \mu(d+1)x^d = 0
\quad\ifif\quad
\frac{(x+1)^{d-1}}{x^d} = \frac{\mu (d+1)}{\rho d}.
\end{equation}
Now define the function $h(x)=\frac{(x+1)^{d-1}}{x^d}$ and observe that $h'(x)=-
\frac{(x+1)^{d-2} \cdot (x+d)}{x^{d+1}} < 0$ for all $x>0$. Additionally,
$\lim_{x\to 0^+}h(x)=\infty$ and $\lim_{x\to \infty} h(x)=0$. Thus, since
$c:=\frac{\mu (d+1)}{\rho d}\geq 0$, there exists a \emph{unique} $\xi\in[0,\infty)$
such that $h(\xi)=c$; furthermore, $h(x)>c$ for $x\in[0,\xi)$ and $h(x)<c$ for
$x>\xi$. Given the derivation in \eqref{maximumeq}, this is enough to conclude the
proof of the lemma.
\end{proof}
\medskip

We are now ready to prove~\cref{mainlemma}.
First, examine the critical condition 
$$ y\cdot f(z + x + y) \leq \lambda y \cdot f(z + y) + \mu x \cdot f(z + x) 
\qquad \forall x, y, z \geq 0,\;\forall f \in \mathfrak{P}_d$$ 
of the set whose infimum defines $B$ in the statement of our lemma. 
Note that
for any $z \geq 0$, if we define $\bar{f}(t) := f(z + t)$ for all $t \in
\R$, then $\bar{f}$ is again a polynomial with nonnegative coefficients.
Therefore, the condition above is equivalent to the simpler one: 
$$y \cdot f(x + y) \leq \lambda y \cdot f(y) + \mu x \cdot
f(x)
\qquad \forall x, y, z \geq 0,\;\forall f \in \mathfrak{P}_d.$$ 
From now on, performing a consecutive transformation of this condition in exactly the
same way as in the proof of~\cite[Lemma~5.5]{Aland2011}, we can eventually derive that
$$
B= \inf_{\lambda \in \R, \, \mu \in (0, \frac{1}{\rho})} \left\{
\left. \frac{\lambda \rho}{1 - \mu \rho} \right| \forall x \geq 0 : \; \lambda \geq
(x + 1)^{d} - \mu x^{d+1} \right\}.
$$ 

Now, taking into consideration that $\rho,1-\mu\rho>0$ and $\mu\rho\in(0,1)$, we can
finally do the following transformations:
\begin{align}
B &= \inf_{\lambda \in \R, \, \mu \in (0, \frac{1}{\rho})} \left\{
\left. \frac{\lambda \rho}{1 - \mu \rho} \right| \forall x \geq 0 : \; \frac{\lambda \rho}{1 - \mu \rho} \geq
\frac{\rho (x + 1)^{d} - \mu \rho x^{d+1}}{1 - \mu \rho} \right\} \notag\\
		&= \inf_{\mu \in (0, \frac{1}{\rho})} \left\{ \max_{x \geq 0} \left\{\frac{\rho (x + 1)^{d} - \mu \rho x^{d+1}}{1 - \mu \rho} \right\} \right\}\notag\\
		&= \inf_{\mu \in (0, 1)} \left\{ \max_{x \geq 0} \left\{\frac{\rho (x + 1)^{d} - \mu  x^{d+1}}{1 - \mu} \right\} \right\} \label{eq:B_formula_helper}
\end{align}

Using expression~\eqref{eq:B_formula_helper} for $B$, we will next show that indeed
$B=\grat{d,\rho}^{d+1}$. First, for the $B \geq \grat{d,\rho}^{d+1}$ part, it is
enough to observe that for any $\mu\in(0,1)$:
$$
\max_{x \geq 0} \left\{\frac{\rho (x + 1)^{d} - \mu x^{d+1}}{1 - \mu} \right\} \geq \frac{\rho (\grat{d,\rho} + 1)^{d} - \mu \grat{d,\rho}^{d+1}}{1 - \mu} = \grat{d,\rho}^{d+1}.
$$

For the $B\leq\grat{d,\rho}^{d+1}$ part now, first define the constant
$$
\hat{\mu} := \frac{\rho d (\grat{d,\rho}+1)^{d-1}}{(d+1) \grat{d,\rho}^d}.
$$ 
Using the fundamental property of the generalized golden ratio that $\grat{d,\rho}^{d+1}=\rho(\grat{d,\rho}+1)^d$, we can now see that $\hat{\mu}$ satisfies the following properties:
$$
\hat{\mu} = \frac{d \grat{d,\rho}^{d+1}}{(d+1) (\grat{d,\rho}+1)
\grat{d,\rho}^d} = \frac{d \grat{d,\rho}}{(d+1) (\grat{d,\rho}+1)} \in (0, 1)
$$
and
$$
\frac{\rho (\grat{d, \rho} + 1)^{d} - \hat{\mu} \grat{d, \rho}^{d+1}}{1 - \hat{\mu}}
= \grat{d, \rho}^{d+1}
$$
Additionally, notice that by~\cref{maximumlemma} we thus know that quantity $\rho (x + 1)^{d} - \hat{\mu} x^{d+1}$ is maximized for $x$ being the \emph{unique} solution of equation~\eqref{maximumeq}, which in our case is equivalent to
$$
\frac{(x+1)^{d-1}}{x^d} = \frac{\hat\mu (d+1)}{\rho d}
=  \frac{(\grat{d,\rho}+1)^{d-1}}{ \grat{d,\rho}^d}
\quad\ifif\quad 
x= \grat{d, \rho}.
$$  

So, we can finally upper-bound $B$ via~\eqref{eq:B_formula_helper} by 
$$
B\leq \max_{x \geq 0}\left\{\frac{\rho (x + 1)^{d} - \hat{\mu} x^{d+1}}{1 - \hat{\mu}} \right\} =
\frac{\rho (\grat{d, \rho} + 1)^{d} - \hat{\mu} \grat{d, \rho}^{d+1}}{1 - \hat{\mu}}
= \grat{d, \rho}^{d+1}.
$$

\section{Remaining Proof of~\texorpdfstring{\cref{thm}}{Theorem~3}}
\label{app:proof_key}

In this section we show that the profiles $\bar{\vecc{s}}^{i-1}$ and
$\bar{\vecc{s}}^{i}$ of the game $\bar\Gamma$ defined in the proof of~\cref{thm},
indeed have the desired properties. Namely, $\bar{\vecc{s}}^{i-1}$ is an $(\alpha +
p^{-1})$-equilibrium for all players in $R_i$, and furthermore,
\begin{equation}
\label{eq:helper_key_1}
\pbar{R_i}{\bar{\vecc{s}}^{i}}{}
\leq \p{R_i}{\vecc{s}^{i}}{} + nb_i.
\end{equation}
Our analysis is essentially the same as in the proof
of Caragiannis et al.~\cite[Lemma~4.3]{Caragiannis2015a}; we still present our derivation below, for
completeness.

First, we need to show that no player in $R_i=X_i\union Y_i$ has an $(\alpha +
p^{-1})$-move at state $\bar{\vecc{s}}^{i-1}$ of game $\bar\Gamma$. Recall that all
players use the same strategies under both $\bar{\vecc{s}}^{i-1}$ and
$\vecc{s}^{i-1}$ and, furthermore, the cost functions of the resources in these
strategies are the same between the two games $\bar\Gamma$ and $\Gamma$. Also,
observe that the cost of all players in $R_i$ at $\vecc{s}^{i-1}$ is strictly less
than $b_{i-1}$, otherwise they would have been ``fixed'' at the end of the
$(i-1)$-th phase (see \cref{algo_nfixed} of \cref{main_alg}), contradicting the
definition of $R_i$ containing players that perform improving moves during the
$i$-th phase.

Consider first the players in $X_i$, whose cost by definition is at least $b_i$ at
$\vecc{s}^{i-1}$. By the analysis above, their costs actually have to lie within
$[b_i,b_{i-1})$ at the end of the $(i-1)$-th phase. Furthermore, their strategy sets
(by the definition of $\bar{\Gamma}$) are exactly the same in $\bar{\Gamma}$ and
$\Gamma$, and thus, any $(\alpha + p^{-1})$-move from $\bar{\vecc{s}}^{i-1}$ in
$\bar{\Gamma}$ would give rise to an $(\alpha + p^{-1})$-move from $\vecc{s}^{i-1}$
in $\Gamma$; such a move cannot exist, however, or otherwise, by the definition of our
algorithm, there would be still $(\alpha + p^{-1})$-moves left for the players in $X_i$ at
the end of phase $(i+1)$ (see \cref{algo_improvements} of \cref{main_alg}).

Consider now the remaining players in $Y_i$, whose cost is less than $b_i$. A
unilateral deviation from $\bar{\vecc{s}}^{i-1}$ of any such player $u \in Y_i$,
currently playing $s_u$, would cause her to use resources $s'_u \cup \{e_u\}$ for
some $s'_u \neq s_u$. This will result in her experiencing a cost of at least $w_u
c_{e_u}(w_u)=w_u\cdot \frac{b_i}{w_u} = b_i$, which is strictly worse than her
current cost.

Now we move on to prove~\eqref{eq:helper_key_1}. Recall that all players in
$N\setminus Y_i$ use the same strategies in $\bar{\vecc{s}}^{i}$ and $\vecc{s}^{i}$.
Also, each resource $e_u\in \bar{E}\setminus E$ can be used only by player $u\in
Y_i$, and thus $x_{N \setminus R_i, e_u}(\bar{\vecc{s}}^{i}) = 0$ and $x_{R_i,
e_u}(\bar{\vecc{s}}^{i}) \leq w_{u}$, and furthermore, from the definition of our
potential in \eqref{potentialDefine}, it is $\phi_{e_u}(x) = \frac{b_i}{w_u}x$.
Putting all the above together and using the definition of the partial
potential~\eqref{eq:partial_pot_def}, we can derive that
\begin{align*}
\pbar{R_i}{\bar{\vecc{s}}^{i}}{}
= & \sum_{e \in E'} 
\left[\pe{\xe{N \setminus R_i}{\bar{\vecc{s}}^{i}} + \xe{R_i}{\bar{\vecc{s}}^{i}}}
- \pe{\xe{N \setminus R_i}{\bar{\vecc{s}}^{i}}}\right] \\ 
= & \sum_{e \in E} 
\left[\pe{\xe{N \setminus R_i}{\bar{\vecc{s}}^{i}} + \xe{R_i}{\bar{\vecc{s}}^{i}}}
- \pe{\xe{N \setminus R_i}{\bar{\vecc{s}}^{i}}}\right] \\
& + \sum_{u \in Y_i} 
\left[\phi_{e_u}(x_{N \setminus R_i,e_u}(\bar{\vecc{s}}^{i}) + x_{R_i,e_u}(\bar{\vecc{s}}^{i})) 
- \phi_{e_u}(x_{N \setminus R_i,e_u}(\bar{\vecc{s}}^{i}))\right]\\
= & \;\; \varPhi_{R_i}({\vecc s}^{i}) + \sum_{u \in Y_i} [\phi_{e_u}(x_{R_i,e_u}(\bar{\vecc{s}}^{i})) 
- \phi_{e_u}(0)]\\
\leq & \;\; \varPhi_{R_i}({\vecc s}^{i}) + \sum_{u \in Y_i} \phi_{e_u}(w_u)\\
\leq & \;\; \varPhi_{R_i}({\vecc s}^{i}) + n b_i.
\end{align*}

\section{Proof of \texorpdfstring{\cref{lemma2}}{Lemma~10}}
\label{app:proof_lemma2}

We will first need a series of technical algebraic lemmas.

\begin{lemma} \label{concav}
For any $\psi \in (0, 1]$ and all $x > 0$: 
$$(1 + x)^{\psi} - 1 \geq \psi x (1 + x)^{\psi - 1}.$$
\end{lemma}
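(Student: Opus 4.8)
The plan is to reduce this to the concavity of the power function $t\mapsto t^{\psi}$ on $(0,\infty)$, which is the natural source of such an inequality when $\psi\in(0,1]$. Substituting $t:=1+x$, the claimed bound $(1+x)^{\psi}-1\geq \psi x(1+x)^{\psi-1}$ becomes $t^{\psi}-1\geq \psi(t-1)t^{\psi-1}$ for $t>1$, which is exactly the statement that the graph of $g(t):=t^{\psi}$ lies below its tangent line at $t$ when evaluated at the point $s=1$. I would therefore invoke the supporting-line (tangent-line) inequality for concave functions, namely $g(s)\leq g(t)+g'(t)(s-t)$ for all $s,t>0$, and substitute $s=1$ and $t=1+x$.

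Concretely, I would first confirm that $g$ is concave, i.e.\ that $g''(t)=\psi(\psi-1)t^{\psi-2}\leq 0$ for all $t>0$; this holds precisely because $\psi\in(0,1]$ forces $\psi-1\leq 0$. With concavity in hand, the tangent-line inequality at $t=1+x$, evaluated at $s=1$, reads $1\leq (1+x)^{\psi}+\psi(1+x)^{\psi-1}\bigl(1-(1+x)\bigr)=(1+x)^{\psi}-\psi x(1+x)^{\psi-1}$, and rearranging this gives exactly the desired bound.

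As an alternative that avoids quoting the tangent-line inequality, I would instead set $f(x):=(1+x)^{\psi}-1-\psi x(1+x)^{\psi-1}$ and show $f\geq 0$ on $(0,\infty)$ directly. One checks that $f(0)=0$, and a short computation yields $f'(x)=\psi(1-\psi)x(1+x)^{\psi-2}$, after cancellation of the two $\psi(1+x)^{\psi-1}$ terms coming from $g(t)=(1+x)^{\psi}$ and the product rule on $-\psi x(1+x)^{\psi-1}$. Since $\psi\in(0,1]$ and $x>0$, every factor of $f'(x)$ is nonnegative, so $f$ is nondecreasing; combined with $f(0)=0$, this yields $f(x)\geq 0$ for all $x>0$, which is the claim.

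Since this is an elementary single-variable inequality, I do not anticipate a genuine obstacle. The only point requiring a little care is the boundary case $\psi=1$ (where both sides coincide and $f'\equiv 0$), which both approaches accommodate without modification, and making sure the derivative cancellation in the second approach is carried out correctly.
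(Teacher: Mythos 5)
Both of your arguments are correct: the tangent-line inequality for the concave function $t\mapsto t^{\psi}$ evaluated at $s=1$, $t=1+x$ gives exactly the claim after rearrangement, and your derivative computation $f'(x)=\psi(1-\psi)x(1+x)^{\psi-2}\geq 0$ together with $f(0)=0$ is also accurate. However, your route differs from the paper's in an essential way: the paper does not prove this inequality at all, but simply observes that it is Claim~2.2 of Caragiannis et al.~\cite{Caragiannis2015a} after the substitutions $\alpha\gets\psi$ and $z\gets x+1$. What your proof buys is self-containedness --- a reader need not chase down the cited claim, and the concavity viewpoint makes transparent \emph{why} the inequality holds (it is the supporting-line property of $t^{\psi}$ at the point $t=1+x$, evaluated at $1$). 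What the paper's citation buys is brevity and an explicit link to the prior work whose algorithmic framework the paper adapts. Your handling of the boundary case $\psi=1$ (equality throughout) is also correct, so there is no gap in either of your two variants.
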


\begin{proof}
Set $\alpha\gets\psi$ and $z\gets x+1$ in~\cite[Claim~2.2]{Caragiannis2015a}.
\end{proof}

\begin{lemma} \label{cl1}
Consider reals $\varepsilon,p>0$ and $m\geq 1$, such that $(1 + \varepsilon)^{m} = 1 + p^{-1}$. Then, $\varepsilon^{-1} \leq m(1 + p)$.
\end{lemma}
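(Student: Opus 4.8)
The plan is to solve the defining equation $(1+\varepsilon)^m = 1 + p^{-1}$ explicitly for $\varepsilon$ and then bound it from below using the convexity-type inequality already recorded in~\cref{concav}. Writing $\varepsilon = (1+p^{-1})^{1/m} - 1$, the claim $\varepsilon^{-1} \leq m(1+p)$ is equivalent to the lower bound $\varepsilon \geq \frac{1}{m(1+p)}$, so it suffices to establish this inequality.

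The key move is a substitution that lines the target up exactly with~\cref{concav}. I would set $\psi := 1/m$ and $x := 1/p$; since $m \geq 1$ we have $\psi \in (0,1]$, and since $p > 0$ we have $x > 0$, which are precisely the hypotheses of~\cref{concav}. Under this substitution $\varepsilon = (1+x)^{\psi} - 1$, and the target quantity rewrites as
$$\frac{1}{m(1+p)} = \frac{\psi}{1 + 1/x} = \frac{\psi x}{1+x} = \psi x (1+x)^{-1},$$
which is directly comparable to the right-hand side of~\cref{concav}.

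Applying~\cref{concav} gives $(1+x)^{\psi} - 1 \geq \psi x (1+x)^{\psi - 1}$. Because $\psi \in (0,1]$ forces $\psi - 1 \in (-1, 0]$, and because $1 + x > 1$ makes the map $t \mapsto (1+x)^{t}$ increasing, we have $(1+x)^{\psi - 1} \geq (1+x)^{-1}$. Chaining the two bounds yields
$$\varepsilon = (1+x)^{\psi} - 1 \geq \psi x (1+x)^{\psi - 1} \geq \psi x (1+x)^{-1} = \frac{1}{m(1+p)},$$
and since both sides are positive, taking reciprocals gives $\varepsilon^{-1} \leq m(1+p)$, as desired.

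I do not expect any genuine obstacle here: the entire argument reduces to spotting the substitution $\psi = 1/m$, $x = 1/p$ that matches the statement to~\cref{concav}, together with the elementary monotonicity step that weakens the exponent $\psi - 1$ down to $-1$. Everything that remains is routine algebra.
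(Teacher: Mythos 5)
Your proof is correct and follows essentially the same route as the paper: both solve for $\varepsilon$ and apply \cref{concav} with $x \gets 1/p$, $\psi \gets 1/m$. The only cosmetic difference is the final step, where you weaken the exponent via $(1+x)^{\psi-1} \geq (1+x)^{-1}$ while the paper rearranges to $\frac{1}{m(1+p)}\left(1+\frac{1}{p}\right)^{1/m}$ and drops the factor $\left(1+\frac{1}{p}\right)^{1/m} \geq 1$; these are equivalent.
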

\begin{proof}
First, rearranging our assumption that $(1 + \varepsilon)^{m} = 1 + p^{-1}$, we get that
$$
\varepsilon = \left(1 + \frac{1}{p}\right)^{1/m} - 1.
$$
Thus, applying \cref{concav} with $x \gets \frac{1}{p}$ and $\psi \gets \frac{1}{m}$ we can derive that:
$$
\varepsilon 
\geq \frac{1}{pm} \left(1 + \frac{1}{p}\right)^{1/m-1}
=\frac{1}{m(1+p)} \left(1 + \frac{1}{p}\right)^{1/m}
\geq \frac{1}{m(1+p)},
$$
which concludes the proof.
\end{proof}

The following lemma builds the algebraic foundation of the derivation of~\cref{lemma2}.

\begin{lemma} \label{lemmacomb}
Fix a positive integer $d$.
For any polynomial $f\in \mathfrak{P}_d$ and any $\varepsilon > 0$, 
it holds that
$$
y f(z + x + y) \leq (1+\varepsilon) y f(z + x' + y) 
+ \left(1+\frac{1}{\varepsilon} \right)^d d^d x f(z + x + y'),
$$
for any $x, x', y, y', z \geq 0$.
\end{lemma}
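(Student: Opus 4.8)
The plan is to peel off the auxiliary variables one at a time until only a single-variable inequality remains, and then to settle that by a weighted-convexity argument whose constant is tuned using \cref{concav}.

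First I would dispose of $x'$ and $y'$. Since $f\in\mathfrak{P}_d$ has nonnegative coefficients it is nondecreasing on $[0,\infty)$, so $f(z+x'+y)\ge f(z+y)$ and $f(z+x+y')\ge f(z+x)$ for all $x',y'\ge 0$. Hence the right-hand side is smallest at $x'=y'=0$, and it suffices to prove
\[
y\,f(z+x+y)\ \le\ (1+\varepsilon)\,y\,f(z+y)+\Big(1+\tfrac1\varepsilon\Big)^{d}d^{d}\,x\,f(z+x).
\]
Both sides are nonnegative linear combinations of the monomial contributions (again because the coefficients of $f$ are nonnegative), so it is enough to treat each monomial $f(t)=t^\nu$, $0\le\nu\le d$. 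The case $\nu=0$ is immediate, and for $1\le\nu\le d$ it suffices to use the smaller constant $C_\nu:=(1+\tfrac1\varepsilon)^{\nu}\nu^{\nu}\le(1+\tfrac1\varepsilon)^{d}d^{d}$, i.e.\ to show $y(z+x+y)^\nu\le (1+\varepsilon)y(z+y)^\nu+C_\nu\,x(z+x)^\nu$.

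Next I would remove $z$, reducing to $z=0$. All three terms are homogeneous of degree $\nu+1$ in $(x,y,z)$, so after normalizing $x+y+z=1$ the claim becomes $y\le(1+\varepsilon)y(1-x)^\nu+C_\nu x(1-y)^\nu$. Fixing $x$ and viewing the difference (right minus left) as a function of $y\in[0,1-x]$, I would argue that its minimum occurs at $y=1-x$, i.e.\ at $z=0$: if $(1+\varepsilon)(1-x)^\nu\ge 1$ the difference is manifestly nonnegative (a sum of two nonnegative terms), while otherwise its $y$-derivative is a sum of two nonpositive terms, so the function is decreasing. This leaves exactly the single-variable inequality obtained from the $z=0$ case $y(x+y)^\nu\le(1+\varepsilon)y^{\nu+1}+C_\nu x^{\nu+1}$; dividing by $y^{\nu+1}$ and setting $t=x/y$, it reads
\[
(1+t)^\nu\ \le\ 1+\varepsilon+C_\nu\,t^{\nu+1}\qquad\text{for all }t\ge 0.
\]

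Finally I would prove this by convexity of $s\mapsto s^{\nu+1}$. For $\theta\in(0,1)$, writing $1+t=(1-\theta)\tfrac{1}{1-\theta}+\theta\tfrac{t}{\theta}$ and applying Jensen gives $(1+t)^{\nu+1}\le(1-\theta)^{-\nu}+\theta^{-\nu}t^{\nu+1}$, hence $(1+t)^\nu\le(1+t)^{\nu+1}\le(1-\theta)^{-\nu}+\theta^{-\nu}t^{\nu+1}$. Choosing $\theta=1-(1+\varepsilon)^{-1/\nu}$ makes the first term equal to $1+\varepsilon$, and it remains only to verify $\theta^{-\nu}\le C_\nu$. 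This is exactly where \cref{concav} enters: applied with $\psi=\tfrac1\nu$ and $x=\varepsilon$ it yields $1-(1+\varepsilon)^{-1/\nu}\ge\tfrac{\varepsilon}{\nu(1+\varepsilon)}=[\,\nu(1+\tfrac1\varepsilon)\,]^{-1}$, so $\theta^{-\nu}\le\nu^\nu(1+\tfrac1\varepsilon)^\nu=C_\nu$, as required. The main obstacle is pinning the constant down exactly rather than merely asymptotically: one cannot drop the additive $1+\varepsilon$, since near the threshold $t\approx(1+\varepsilon)^{1/\nu}-1$ both right-hand terms are genuinely needed, which is precisely why the convexity split must be calibrated through $\theta=1-(1+\varepsilon)^{-1/\nu}$ and why the sharp lower bound on $\theta$ supplied by \cref{concav} is indispensable; the lossy step $(1+t)^\nu\le(1+t)^{\nu+1}$ is harmless because the excess is absorbed by the $t^{\nu+1}$ term, and the reduction to $z=0$ needs the monotonicity argument above because $z+x+y$ is not a convex combination of $z+y$ and $z+x$.
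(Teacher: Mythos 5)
Your proof is correct, and while it shares the paper's skeleton (reduce to $x'=y'=0$ by monotonicity, reduce to monomials, reduce to a one-variable inequality calibrated via \cref{concav}), it diverges in two genuine ways. First, the order of reductions: the paper eliminates $z$ \emph{before} passing to monomials, via the one-line substitution $\bar f(t):=f(z+t)$, which stays inside $\mathfrak{P}_d$; because you pass to monomials first, the shift trick is unavailable to you (a shifted monomial is not a monomial), and you must instead remove $z$ by homogeneity plus the case analysis showing the difference is minimized at $z=0$ — extra work, but valid (your two cases, $(1+\varepsilon)(1-x)^\nu\geq 1$ giving nonnegativity outright and otherwise a decreasing difference, do cover everything). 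Second, the one-variable endgame: the paper normalizes by $x^{\nu+1}$ and maximizes $g(z)=(z+1)^{\nu+1}-(1+\varepsilon)z^{\nu+1}$ by a critical-point computation, using \cref{concav} to bound the maximizer $\bar z=1/((1+\varepsilon)^{1/\nu}-1)$; you normalize by $y^{\nu+1}$ and prove $(1+t)^\nu\leq 1+\varepsilon+C_\nu t^{\nu+1}$ by a Jensen split $1+t=(1-\theta)\tfrac{1}{1-\theta}+\theta\tfrac{t}{\theta}$ with $\theta=1-(1+\varepsilon)^{-1/\nu}$, using \cref{concav} to show $\theta^{-\nu}\leq C_\nu$. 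Your lossy step $(1+t)^\nu\leq(1+t)^{\nu+1}$ plays the same role as the paper's bound $z(z+1)^\nu\leq(z+1)^{\nu+1}$. What each buys: the paper's shift makes the variable $z$ cost nothing, keeping the argument short; your convexity split avoids any critical-point analysis in the main inequality and makes transparent that the constant $(1+\tfrac1\varepsilon)^\nu\nu^\nu$ is exactly the optimal weight $\theta^{-\nu}$ in the split, at the price of the more laborious homogeneity-based elimination of $z$.
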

\begin{proof}
Since $f$ is a nondecreasing function, it is enough to show that (setting $x'\gets 0$, $y'\gets 0$)
\begin{align*}
y f(z + x + y) \leq (1+\varepsilon)  y  f(z + y) 
+ \left(1+\frac{1}{\varepsilon} \right)^d d^d  x  f(z + x)
\end{align*}
and by performing the transformation $f(t) \gets f(z + t)$ (which preserves the property that $f\in\mathfrak{P}_d$), it is actually enough to show that
\begin{equation} \label{newClaim}
y f(x + y) \leq (1+\varepsilon) y f(y) 
+ \left(1+\frac{1}{\varepsilon} \right)^d d^d x f(x).
\end{equation}
Looking at the structure of the above inequality, and taking into consideration that
$f$ is a linear combination (with nonnegative coefficients) of monomials of degree
at most $d$, we can deduce that it is enough to show that \eqref{newClaim} holds for
any $f(t)=t^\nu$ with $\nu=0,1,\dots,d$, that is,
\begin{align*}
y(x+y)^{\nu} \leq (1+\varepsilon) y^{\nu+1} 
+ \left(1+\frac{1}{\varepsilon} \right)^d d^d x^{\nu+1}.
\end{align*}
Dividing by $x^{\nu+1}$ 
(if $x = 0$, the inequality holds trivially)
and making a change of variables $z\gets \frac{y}{x}$,
this can be rewritten as 
$$ 
z(z+1)^\nu - (1+\varepsilon)z^{\nu+1} 
\leq \left(1+\frac{1}{\varepsilon} \right)^d d^d.
$$
It is not difficult to check that the above inequality indeed holds for the extreme
cases of $z=0$ or $\nu=0$, so from now on let's assume that $z>0$ and $\nu$ is a
positive integer. Now, taking into consideration that $\nu\leq d$ and that $z\leq
z+1$, we can finally derive that it is enough to show that
\begin{equation}
\label{eq:helper_g_nu}
g(z):= (z+1)^{\nu+1} - (1+\varepsilon)z^{\nu+1} \leq \left (1+\frac{1}{\varepsilon} \right)^\nu \nu^{\nu}
\end{equation}
for all $z>0$, where $\nu$ is a positive integer.

The derivative of $g$ defined in~\eqref{eq:helper_g_nu} is $g'(z) = (\nu+1)[(z +
1)^{\nu} - (1+\varepsilon) z^{\nu}]$, and so $\lim_{z\to 0}g'(z)=\nu+1>0$,
$\lim_{z\to \infty}g'(z)=-\infty$ (because $\varepsilon >0$) and
\begin{equation}
\label{eq:props_zbar}
g'(z) = 0 
\ifif \frac{(z + 1)^{\nu}}{z^{\nu}} = 1 + \varepsilon
\ifif z=\bar z:= \frac{1}{(1 + \varepsilon)^{1/\nu} - 1}.
\end{equation}
Thus, the maximum of $g$ in $(0,\infty)$ is attained at $\bar z$. Also, by
applying~\cref{concav} with $\psi = 1/\nu$ and $x = \varepsilon$, we get that $(1 +
\varepsilon)^{1/\nu} - 1 \geq \frac{1}{\nu} \varepsilon (1 + \varepsilon)^{1/\nu -
1}$, and so we can bound $\bar z$ in~\eqref{eq:props_zbar} by
$$
\bar z \leq \nu \frac{1}{\varepsilon} (1 + \varepsilon)^{1-1/\nu}.
$$
Using the above, together with the properties of $\bar z$ from \eqref{eq:props_zbar}
we can finally bound
\begin{align*}
g(z) &\leq (\bar z+1)^{\nu+1} - (1+\varepsilon){\bar z}^{\nu+1}\\
&= (\bar z+1)\cdot (1+\varepsilon){\bar z}^{\nu} - (1+\varepsilon){\bar z}^{\nu+1}\\
&= (1+\varepsilon){\bar z}^{\nu}\\
& \leq (1+\varepsilon) \nu^\nu \frac{1}{\varepsilon^\nu} (1 + \varepsilon)^{\nu-1}\\
&= \left (1+\frac{1}{\varepsilon} \right)^\nu \nu^{\nu},
\end{align*}
proving \eqref{eq:helper_g_nu} and concluding the proof of our lemma.
\end{proof}

\medskip\medskip

We are finally ready to prove~\cref{lemma2}.
Throughout this proof we will denote $Q = R_i \cup \{u\}$. Then we can write
$$
\cu{\vecc{s}^{i}} = \sum_{e\in E} \xe{u}{\vecc{s}^{i}} \cdot
\ce{\xe{N \setminus Q}{\vecc{s}^{i}} 
+ \xe{R_i}{\vecc{s}^{i}} + \xe{u}{\vecc{s}^{i}}}
$$
and
$$
\cu{\vecc{s}^{i-1}} = \sum_{e\in E} \xe{u}{\vecc{s}^{i}} \cdot 
\ce{\xe{N \setminus Q}{\vecc{s}^{i}} + \xe{R_i}{\vecc{s}^{i-1}} + \xe{u}{\vecc{s}^{i}}},
$$
where for the second equality we used that, for any resource $e$, it is
$\xe{N \setminus Q}{\vecc{s}^{i}} = \xe{N \setminus Q}{\vecc{s}^{i-1}}$
and
$\xe{u}{\vecc{s}^{i}} = \xe{u}{\vecc{s}^{i-1}}$;
this is a consequence of the fact that only players in $R_i$ move during phase $i$, 
i.e., between states $\vecc{s}^{i-1}$ and $\vecc{s}^{i}$. 
This is also exactly the reason why the partial potential of $R_i$ can only
decrease during phase $i$ (due to \cref{potentialProp} and the fact that players
perform at least $\alpha$-improvements). Thus
$
\p{R_i}{\vecc{s}^{i-1}}{} 
\geq \p{R_i}{\vecc{s}^{i}}{}
= \p{}{\vecc{s}^{i}}{} - \p{}{\vecc{s}^{i}}{N \setminus R_i}
$
and so we can also write
\begin{align*}
\p{R_i}{\vecc{s}^{i-1}}{} 
 & \geq \sum_{e\in E} \left[\pe{\xe{N \setminus Q}{\vecc{s}^{i}} + 
\xe{R_i}{\vecc{s}^{i}} + \xe{u}{\vecc{s}^{i}}} 
- \pe{\xe{N \setminus Q}{\vecc{s}^{i}} + \xe{u}{\vecc{s}^{i}}}\right]\\
 & \geq \sum_{e\in E} \xe{R_i}{\vecc{s}^{i}} \cdot
\ce{\xe{N \setminus Q}{\vecc{s}^{i}} + \xe{R_i}{\vecc{s}^{i}} + \xe{u}{\vecc{s}^{i}}},
\end{align*}
the last inequality being due to~\cref{thelemma}.

Observing the above expressions, we see that in order to prove~\eqref{eq:lemma2_1}
it is enough to show that, for any resource $e$, it is
\begin{align*}
\xe{u}{\vecc{s}^{i}}
\ce{\xe{N \setminus Q}{\vecc{s}^{i}} &+ \xe{R_i}{\vecc{s}^{i}} + \xe{u}{\vecc{s}^{i}}} \\
&\leq   (1+\varepsilon) \xe{u}{\vecc{s}^{i-1}}  
\ce{\xe{N \setminus Q}{\vecc{s}^{i-1}} + \xe{R_i}{\vecc{s}^{i-1}} + \xe{u}{\vecc{s}^{i-1}}} \\
&\qquad +  \xi_{\varepsilon}   
\xe{R_i}{\vecc{s}^{i}}  
\ce{\xe{N \setminus Q}{\vecc{s}^{i}} + \xe{R_i}{\vecc{s}^{i}} + \xe{u}{\vecc{s}^{i}}}. 
\end{align*}
Substituting, for simplicity, $y \gets \xe{u}{\vecc{s}^{i}}$, $x \gets \xe{R_i}{\vecc{s}^{i}}$, $x' \gets \xe{R_i}{\vecc{s}^{i-1}}$ and $z \gets \xe{N \setminus Q}{\vecc{s}^{i}}$, this becomes 
$$
y c_e(z + x + y) \leq (1+\varepsilon) y c_e(z + x' + y) 
+ \xi_{\varepsilon} x c_e(z + x + y),
$$
which indeed holds, due to \cref{lemmacomb}.

We now move to proving \eqref{eq:lemma2_2}, which can be equivalently rewritten as
\begin{equation}
\label{eq:lemma2_2_new}
\cu{\vecc{s}_{-u}^{i-1}, s'_u} 
\leq (1+\varepsilon) \cu{\vecc{s}_{-u}^{i}, s'_u} 
+ \xi_{\varepsilon} \p{R_i}{\vecc{s}^{i-1}}{}.
\end{equation}
We have that
\begin{align*}
\cu{\vecc{s}_{-u}^{i-1}, s'_u} 
&=
\sum_{e\in E} \xe{u}{\vecc{s}_{-u}^{i-1}, s'_u} \cdot 
   \ce{\xe{N \setminus Q}{\vecc{s}_{-u}^{i-1}, s'_u} 
   + \xe{R_i}{\vecc{s}_{-u}^{i-1}, s'_u} + \xe{u}{\vecc{s}_{-u}^{i-1}, s'_u}}\\
&= \sum_{e\in E} \xe{u}{\vecc{s}_{-u}^{i}, s'_u} \cdot 
   \ce{\xe{N \setminus Q}{\vecc{s}_{-u}^{i-1}} 
   + \xe{R_i}{\vecc{s}_{-u}^{i-1}} + \xe{u}{\vecc{s}_{-u}^{i}, s'_u}} 
\end{align*}
and
\begin{align*}
\cu{\vecc{s}_{-u}^{i}, s'_u} 
&=
\sum_{e\in E} \xe{u}{\vecc{s}_{-u}^{i}, s'_u} \cdot 
\ce{\xe{N \setminus Q}{\vecc{s}_{-u}^{i}, s'_u} 
+ \xe{R_i}{\vecc{s}_{-u}^{i}, s'_u} + \xe{u}{\vecc{s}_{-u}^{i}, s'_u}}\\
&=
\sum_{e\in E} \xe{u}{\vecc{s}_{-u}^{i}, s'_u} \cdot 
\ce{\xe{N \setminus Q}{\vecc{s}_{-u}^{i-1}} 
+ \xe{R_i}{\vecc{s}_{-u}^{i}, s'_u} + \xe{u}{\vecc{s}_{-u}^{i}, s'_u}}.
\end{align*}
For the simplifications above we used the equalities
\begin{align*}
\xe{u}{\vecc{s}_{-u}^{i-1}, s'_u} &= \xe{u}{\vecc{s}_{-u}^{i}, s'_u}\\
\xe{R_i}{\vecc{s}_{-u}^{i-1}, s'_u} &= \xe{R_i}{\vecc{s}^{i-1}}\\
\xe{N \setminus Q}{\vecc{s}_{-u}^{i-1}, s'_u} &= \xe{N \setminus Q}{\vecc{s}_{-u}^{i}, s'_u} = \xe{N \setminus Q}{\vecc{s}^{i-1}},
\end{align*}
which are consequences of the fact that all players 
except those in $R_i$ have the same strategies in 
$(\vecc{s}_{-u}^{i}, s'_u)$ and $(\vecc{s}_{-u}^{i-1}, s'_u)$, and 
all players but $u$ have the same strategies in state 
$\vecc{s}^{i-1}$ and $(\vecc{s}_{-u}^{i-1}, s'_u)$. 

Thus, in an analogous way in which we proved \eqref{eq:lemma2_1} before, we can see that in order to prove \eqref{eq:lemma2_2_new} it is enough to show that for any resource $e$ it is
$$
y c_e(z + x + y) \leq (1+\varepsilon) y c_e(z + x' + y) 
+ \xi_{\varepsilon} x c_e(z + x + y'),
$$
where we have used the following substitutions: $x \gets \xe{R_i}{\vecc{s}^{i-1}}$, 
$x' \gets \xe{R_i}{\vecc{s}_{-u}^{i}, s'_{u}}$,  
$y \gets \xe{u}{\vecc{s}_{-u}^{i-1}, s'_u}$,  
$y' \gets \xe{u}{{\vecc{s}}^{i-1}}$ and 
$z \gets \xe{N \setminus Q}{\vecc{s}^{i}}$. The above is indeed true, again due to \cref{lemmacomb}.

\end{document}